\newtheorem{prop}{Proposition}
\newtheorem{lemma}{Lemma}
\newtheorem{thm}{Theorem}
\newcommand{\prob}{\mathbb P}
\newcommand{\E}{\mathbb E}
\newcommand{\algo}{dandelion}
\newcommand{\Algo}{Dandelion}
\newcommand{\Algopp}{{\sc Dandelion++}}
\DeclareMathOperator*{\argmin}{arg\,min}
\DeclareMathOperator*{\argmax}{arg\,max}
\DeclareMathOperator*{\detec}{\hat v = v}
\newcommand{\removelatexerror}{\let\@latex@error\@gobble}
\begin{document}

\setcopyright{acmcopyright}
\acmJournal{POMACS}
\acmYear{2018} \acmVolume{2} \acmNumber{2} \acmArticle{29} \acmMonth{6} \acmPrice{15.00}\acmDOI{10.1145/3224424}

\title{\Algopp: Lightweight Cryptocurrency Networking with Formal Anonymity Guarantees}
\thanks{This work was supported in part by NSF grant CIF-1705007, Input Output Hong Kong (IOHK), Jump Trading, CME Group, and the Distributed Technologies Research Foundation}
\author{Giulia Fanti}
\affiliation{%
  \institution{Carnegie Mellon University}
}
\email{gfanti@andrew.cmu.edu}

\author{Shaileshh Bojja Venkatakrishnan} 
\affiliation{%
  \institution{Massachusetts Institute of Technlogy}
}
\email{shaileshh.bv@gmail.com}

\author{Surya Bakshi}
\affiliation{%
  \institution{University of Illinois at Urbana-Champaign}
 }
\email{sbakshi3@illinois.edu }

\author{Bradley Denby}
\affiliation{%
  \institution{Carnegie Mellon University}
}
\email{bdenby@andrew.cmu.edu}

\author{Shruti Bhargava}
\affiliation{%
  \institution{University of Illinois at Urbana-Champaign}
}
\email{bharshruti@gmail.com}

\author{Andrew Miller}
\affiliation{%
  \institution{University of Illinois at Urbana-Champaign}
}
\email{soc1024@illinois.edu }

\author{Pramod Viswanath}
\affiliation{%
  \institution{University of Illinois at Urbana-Champaign}
}
\email{pramodv@illinois.edu}
%\titlenote{Produces the permission block, and
%  copyright information}
%\subtitle{Extended Abstract}
%\subtitlenote{The full version of the author's guide is available as
%  \texttt{acmart.pdf} document}

% The default list of authors is too long for headers}
%\renewcommand{\shortauthors}{B. Trovato et al.}

%\ccsdesc{Security and privacy~Use https://dl.acm.org/ccs.cfm to generate actual concepts section for your paper}
% -- end of section to replace with generated code

%\keywords{template, formatting, pickling} % TODO: replace with your keywords

% TODO: replace this section with code generated by the tool at https://dl.acm.org/ccs.cfm
\begin{CCSXML}
<ccs2012>
<concept>
<concept_id>10002950.10003648.10003671</concept_id>
<concept_desc>Mathematics of computing~Probabilistic algorithms</concept_desc>
<concept_significance>500</concept_significance>
</concept>
<concept>
<concept_id>10002978.10003014.10003015</concept_id>
<concept_desc>Security and privacy~Security protocols</concept_desc>
<concept_significance>300</concept_significance>
</concept>
</ccs2012>
\end{CCSXML}

\ccsdesc[500]{Mathematics of computing~Probabilistic algorithms}
\ccsdesc[300]{Security and privacy~Security protocols}

\ccsdesc[300]{Security and privacy~Security protocols}

%\ccsdesc{Security and privacy~Use https://dl.acm.org/ccs.cfm to generate actual concepts section for your paper}
% -- end of section to replace with generated code

\keywords{cryptocurrencies; anonymity; P2P networks} % TODO: replace with your keywords

\begin{abstract}
Recent work has demonstrated significant anonymity vulnerabilities in Bitcoin's networking stack.
In particular, the current mechanism for broadcasting Bitcoin transactions allows third-party observers to link transactions to the IP addresses that originated them.
This lays the groundwork for low-cost, large-scale deanonymization attacks. 
In this work, we present \Algopp, a first-principles defense against large-scale deanonymization attacks with near-optimal information-theoretic guarantees.
\Algopp~builds upon a recent proposal called \Algo~that exhibited similar goals. 
However, in this paper, we highlight some simplifying assumptions made in \Algo, and show how they can lead to serious deanonymization attacks when violated. 
In contrast, \Algopp~defends against stronger adversaries that are allowed to disobey protocol.
\Algopp~ is lightweight, scalable, and completely interoperable with the existing Bitcoin network.
We evaluate it through experiments on Bitcoin's mainnet (i.e., the live Bitcoin network) to demonstrate its interoperability and low broadcast latency overhead. 
%Recent work has demonstrated significant anonymity vulnerabilities in Bitcoin's networking stack.
%In particular, the current mechanism for broadcasting Bitcoin transactions allows third-party observers to link transactions to the IP addresses that originated them.
%This lays the groundwork for low-cost, large-scale deanonymization attacks. 
%In this work, we present \Algopp, a first-principles defense against large-scale deanonymization attacks with near-optimal information-theoretic guarantees.
%\Algopp~ is lightweight, scalable, and completely interoperable with the existing Bitcoin network.
%We evaluate \Algopp~ through experiments on Bitcoin's mainnet (i.e., the live Bitcoin network) to demonstrate its interoperability with the current network, as well as low broadcast latency overhead. 
%%\Algopp~ is currently being considered for integration in Bitcoin Core, the most widely-used Bitcoin client. 
\end{abstract}

\maketitle

\section{Introduction}
%Cryptocurrencies like Bitcoin were originally designed to provide transparency \cite{}, often at the expense of other desirable properties.
%For example, \emph{blockchains} are public, append-only ledgers of a cryptocurrency's entire transaction history. 
%On one hand, they enable users to verify the validity of transactions; 
%on the other, their existence can hurt a cryptocurrency's scalability and anonymity \cite{}.

Anonymity is an important property for a financial system, especially given the often-sensitive nature of transactions \cite{chellappa2005personalization}. %, particularly with respect to other users.
Unfortunately, the anonymity protections in Bitcoin and similar cryptocurrencies can be fragile.  
This is largely because Bitcoin users are identified by cryptographic pseudonyms (a user can have multiple pseudonyms). 
When a user Alice wishes to transfer funds to another user Bob, she generates a \emph{transaction message} %, which specifies the quantity of Bitcoins to that should be sent to Bob's pseudonym \cite{bitcoin}. 
that includes Alice's pseudonym, the quantity of funds transferred, 
the prior transaction from which these funds are drawn, and a reference to Bob's pseudonym \cite{bitcoin}.
%The funds in this transaction are drawn from a prior transaction that transferred funds to Alice. 
%So each transaction message must also specify the previous transaction from which it draws; 
%the whole transaction is then signed by Alice to ensure that she approved the transfer of funds.
The system-wide sequence of transactions is recorded in a public, append-only ledger known as the \emph{blockchain}.
The public blockchain means that users only remain anonymous as long as their pseudonyms cannot be linked to their true identities. 

This mandate has proved challenging to uphold in practice. 
Several vulnerabilities have enabled researchers, law enforcement, and possibly others to partially deanonymize users \cite{tracking}. 
Publicized attacks have so far included: (1) linking different public keys that belong to the same user \cite{fistful}, (2) associating users' public keys with their IP addresses \cite{biryukov,koshy2014analysis}, and in some cases, (3) linking public keys to human identities \cite{bitcoin_track}. 
Such deanonymization exploits tend to be cheap, easy, and scalable \cite{fistful,biryukov,koshy2014analysis}. % accessible to a wide swath of the population.

%Some of Bitcoin's anonymity vulnerabilities are better-known than others. 
%For instance, blockchain-based vulnerabilities have been widely-publicized;
%relevant attacks typically use the public blockchain (as well as side information) to learn information about Bitcoin users.
%Significant work has gone towards mitigating these vulnerabilities \cite{mimblewimble,zcash,tumblebit,monero}.
%However, patches to the blockchain ignore a lower level source of vulnerabilities:  the peer-to-peer (P2P) network on which Bitcoin runs. 

%Design choices at the P2P layer have ramifications for the robustness, consistency, and anonymity of the system.
Although researchers have traditionally focused on the privacy implications of the blockchain \cite{ober2013structure,ron2013quantitative,fistful}, 
we are interested in lower-layer vulnerabilities that emerge from Bitcoin's peer-to-peer (P2P) network. 
Recent work has demonstrated P2P-layer anonymity vulnerabilities that allow transactions to to be linked to users' IP addresses with accuracies over 30\% \cite{biryukov,koshy2014analysis}.
%In principle, anonymity vulnerabilities at the network layer should be solvable.
%This is because networking protocols are not inherent to the functionality of Bitcoin,
%and  can therefore be changed without forcing users to adopt a new cryptocurrency or otherwise change their behavior.
%Although the authors of these attack papers proposed high-level solutions, 
Understanding how to patch these vulnerabilities without harming utility remains an open question.
The goal of our work is to propose a practical, lightweight modification to Bitcoin's networking stack that provides theoretical anonymity guarantees against the types of attacks demonstrated in \cite{biryukov,koshy2014analysis}, and others.
We begin with an overview of Bitcoin's P2P network, and explain why it enables deanonymization attacks.

\vspace{0.05in}
\noindent \textbf{Bitcoin's P2P Network.}
Bitcoin nodes are connected over a P2P network of TCP links.
This network is used to communicate transactions, the blockchain, and control packets, and it plays a crucial role in maintaining the network's consistency. 
Each peer is identified by its (IP address, port) combination.
Whenever a node generates a transaction, it broadcasts a record of the transaction over the P2P network; 
critically, transaction messages do not include the sender's IP address---only their pseudonym.
Since the network is not fully-connected, transactions are relayed according to epidemic flooding \cite{peterson2007computer}.
This ensures that all nodes receive the transaction and can add it to the blockchain.
Hence, transaction broadcasting enables the network to learn about them quickly and reliably.

\begin{figure}[t]
    \centering
  \includegraphics[width=.2\textwidth]{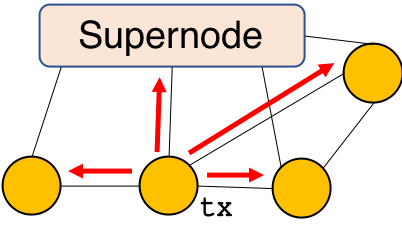}
  \caption{Supernodes can observe relayed transaction propagation metadata to infer which node was the source of a transaction message (\texttt{tx}).}
  \label{fig:supernode}
\end{figure}

However, the broadcasting of transactions can also have negative anonymity repercussions.
Bitcoin's current broadcast mechanism spreads content isotropically over the graph; this allows adversarial peers who observe the spreading dynamics of a given transaction to infer the source IP of each transaction.
For example, in recent attacks \cite{biryukov,koshy2014analysis}, researchers launched a supernode (disguised as a regular node) that connected to all P2P nodes (Figure \ref{fig:supernode}) and logged their relayed traffic.
%Since the supernode was indistinguishable from other nodes, ``honest" peers would relay transactions to it.
This allowed the supernode to observe the spread of each transaction over the network over time, and ultimately infer the source IP. 
Since transaction messages include the sender's pseudonym,  the supernodes were  able to \emph{deanonymize} users, 
or link their pseudonyms to an IP address \cite{biryukov,koshy2014analysis}.
Such deanonymization attacks are problematic because of Bitcoin's transparency: once a user is deanonymized, her other transactions can often be linked, even if she creates fresh pseudonyms for each transaction \cite{fistful}.

%The efficacy of such attacks depends on two key aspects that are unrelated to the adversary's capabilities: the underlying graph structure, and the broadcasting protocol.
%Each node in this P2P network is identified by its (IP address, port) combination, and maintains an address book of other nodes;
%the address book is initially seeded from a DNS server and subsequently updated through the node's interactions with other peers.
%Although links are technically bidirectional, the Bitcoin network treats them as directed, where an \emph{outbound} link from Alice to Bob is one that Alice initiated (and vice versa for inbound links).
%Each node establishes up to eight outbound connections, and maintains up to 125 total connections \cite{}.

%In particular, researchers showed that the flooding mechanisms used to broadcast transactions can reveal the originating IP address of  a particular transaction with accuracies up to 30\% \cite{biryukov,koshy2014analysis}.

%Hence, this is a serious problem for the cryptocurrency community \cite{}---especially since several cryptocurrencies use similar networking code to Bitcoin's \cite{}.
%The goal of this paper is to provide a practical transaction broadcasting protocol that can provide meaningful anonymity under realistic adversarial conditions.

%cryptocurrencies need a network to be useful, and patching vulnerabilities at the blockchain level is meaningless if the underlying network is insecure. 
%The proposed exploits use a supernode to monitor all traffic, and use the observed metadata to link transactions to IP addresses.

There have been recent proposals for mitigating these vulnerabilities, included broadcasting protocols that reduce the symmetry of epidemic flooding. 
%For example, 
%However, several proposed approaches---including one adopted by 
Bitcoin Core \cite{bitcoindTrickleDiffusion}, the most popular Bitcoin implementation, adopted a protocol called \emph{diffusion},  where each node spreads transactions with independent, exponential delays to its neighbors on the P2P graph.
Diffusion is still in use today.
However, proposed solutions (including diffusion) tend to be heuristic, and recent work shows that they do not provide sufficient anonymity protection~\cite{fanti2017}.
Other proposed solutions, such as \Algo~\cite{dand}, offer theoretical anonymity guarantees, but do so under idealistic assumptions that are unlikely to hold in practice.
The aim of this work is to propose a broadcasting mechanism that (a) provides provable anonymity guarantees under \emph{realistic} adversarial and network assumptions, and (b) does not harm the network's broadcasting robustness or latency.
We do this by revisiting the \Algo~system and redesigning it to withstand a variety of practical threats.

\vspace{0.05in}
\noindent \textbf{Contributions.}
The main contributions of this paper are threefold:

\noindent \emph{(1)} 
 We identify key idealistic assumptions made by \Algo~\cite{dand}, and show how anonymity is degraded when those assumptions are violated. 
In particular, \cite{dand} assumes an honest-but-curious adversary that has limited knowledge of the P2P graph topology and \emph{only observes one transaction per node}. 
If adversaries are instead malicious and collect more information over time, we show that they are able to   weaken the anonymity guarantees of \cite{dand} through a combination of attacks, including side information, graph manipulation, black hole, and intersection attacks.

\noindent \emph{(2)} 
We propose a modified protocol called \Algopp~ %\footnote{The name \Algopp~is derived from a twisted pair cable, which resembles the proposed spreading pattern.} 
that subtly changes most of the implementation choices of \Algo, from the graph topology to the randomization mechanisms for message forwarding. 
Mathematically, these (relatively small) algorithmic changes completely change the anonymity analysis by exponentially augmenting the problem state space. 
%analysis lead to exponential growth in the size of the state space for the corresponding anonymity analysis
%we demonstrate that \Algopp~ significantly strengthens the related anonymity guarantees, both theoretically and in simulation.
%\Algopp~
Using analytical tools from Galton-Watson trees and random processes over graphs, we evaluate the anonymity tradeoffs of \Algopp, both theoretically and in simulation,
against stronger adversaries.
The main algorithmic changes in \Algopp~rely on increasing the amount of information the adversary must learn to deanonymize users. 
Technically, the anonymity proofs require us to bound the amount of information a single node can pass to the adversary; on a line graph, this is easy to quantify, but on more complicated expander graphs, this requires reasoning about the random routing decisions of each node in a local neighborhood.
Exploiting the locally-tree-like properties of such graphs allows us to analyze these settings with tools from the branching process literature.
%This is the main technical challenge.

%when adversaries have stronger capabilities. 
%The resulting statistical anonymity guarantees are weaker than commonly-studied cryptographic guarantees, but enable the use of lightweight, scalable solutions.
%The key algorithmic ideas are to use more complex graph structures and non-interactive protocols wherever possible.
%\Algopp~is  being considered for integration in Bitcoin Core.\footnote{Bitcoin Core is a Bitcoin wallet used by over 70\% of active nodes. For anonymous review, the name \Algopp~ has been altered, and we have redacted links to Bitcoin Core forums discussing its potential adoption.}

\noindent \emph{(3)} 
 We demonstrate the practical feasibility of \Algopp~ by evaluating an implementation on Bitcoin's mainnet (i.e., the live Bitcoin network).
 We show that \Algopp~does not increase latency significantly compared to current methods for broadcasting transactions, and it is robust to node failures and misbehavior.

The paper is structured as follows: 
in \S\ref{sec: dand recap}, we discuss relevant work on anonymity in cryptocurrencies and P2P networks.
%In \S\ref{sec:attack}, we present a simple attack on the P2P flooding mechanism that increases the probability of deanonymization compared to prior attacks. 
In \S\ref{sec:model}, we present our adversarial model, which is based on prior attacks in the literature. % and the new attack proposed in \S\ref{sec:attack}.
\S\ref{sec: dand recap} presents \Algo~ in more detail;
 in \S\ref{sec: Dandelionpp}, we analyze \Algo's weaknesses, and propose \Algopp~ as an alternative. 
We present experimental evaluation results in \S\ref{sec:evaluation},
and discuss the implications of these results in \S\ref{sec:conclusion}.
%To this end, we abstract \Algo~into three components: its high-level policy, its implementation details, and its integration into the Bitcoin ecosystem (Figure \ref{fig:breakdown}).
%We identify gaps in \cite{dand}'s coverage of all three categories, and propose lightweight solutions . 

%\input{baselines}
%%\input{attack}
\section{Related Work} \label{sec: dand recap}
The anonymity properties of cryptocurrencies have been studied extensively. 
%Most work has focused on the blockchain. 
Several papers have exploited anonymity vulnerabilities in the blockchain \cite{androulaki2013evaluating,reid2013analysis,ober2013structure,ron2013quantitative,fistful}, suggesting that transactions by the same user can be linked, even if the user adopts different addresses \cite{fistful}.
In response, researchers proposed alternative cryptocurrencies and/or tumblers that provide anonymity at the blockchain level \cite{maxwell2013coinjoin,monero,sasson2014zerocash,coinshuffle,heilman2016tumblebit,mimblewimble}.
In 2014,  researchers turned to the P2P network, showing that regardless of blockchain implementation, users can be deanonymized by network attackers \cite{koshy2014analysis,biryukov,biryukov2015bitcoin,fanti2017}. 
Researchers were able to link transactions to IP addresses with accuracies over 30\% \cite{biryukov}. 
These attacks proceeded by connecting a supernode to most all active Bitcoin server nodes.
More recently, \cite{apostolaki2016hijacking} demonstrated the serious anonymity and routing risks posed by an AS-level attacker. 
These papers suggest a need for networking protocols that defend against deanonymization attacks.

Anonymous communication for P2P/overlay networks has been an active research topic for decades. 
Most work  relies on two ideas: randomized routing (e.g., onion routing, Chaumian mixes) and/or dining cryptographer (DC) networks.
Systems that use DC nets \cite{chaum88} are typically designed for broadcast communication, which is our application of interest.
However, DC nets are  known to be inefficient and brittle  \cite{golle2004dining}.
Proposed systems \cite{goel2003herbivore,zamani2013towards,corrigan2010dissent,anonymousscale} have improved these properties significantly, but DC networks never became scalable enough to enjoy widespread adoption in practice. 

Systems based on randomized routing are generally more efficient, but focus on point-to-point communication (though these tools can be adapted for broadcast communication). 
Early works like Crowds \cite{reiter1998crowds}, Tarzan \cite{tarzan}, and P5 \cite{sherwood2005p5}  paved the way for later practical systems, such as Tor \cite{tor} and I2P \cite{i2p}, as well as recent proposals like Drac \cite{danezis2010drac}, Pisces \cite{mittal2013pisces}, and Vuvuzela \cite{vuvuzela1}.
Our work differs from this body of work along two principal axes: (1) usage goals, and (2) analysis metrics/results.

\noindent \emph{(1) Usage goals.} 
Among tools with real-world adoption, Tor \cite{tor} is the most prominent; privacy-conscious Bitcoin users frequently use it to anonymize transmissions.%
\footnote{Network crawls show 300 hidden Bitcoin services available \url{
    https://web.archive.org/web/20180211193652/https://bitnodes.earn.com/nodes/?q=Tor\%20network}.
  Clients may alternatively use Tor exit nodes, but prior work has shown this is unsustainable and poses privacy risks~\cite{biryukov2015bitcoin}.}
%This is also discouraged because it would impose an externality on the Tor network, using up resources that are currently available for communications.
%    A potential direction for future work would be conduct the “stem phase” using Tor-like onion routing over the Bitcoin network. 
%However, Bitcoin over Tor is susceptible to eclipsing attacks \cite{biryukov2015bitcoin}.
However, expecting Bitcoin users to route their traffic through Tor (or a similar service) poses several challenges, depending on the mode of integration. % requires users to route their network traffic through a third-party service.
%If users are to manage these connections indvidually, this solution cannot help users who are unaware of Bitcoin's privacy vulnerabilities.
One option would be to hard-code Tor-like functionality into the cryptocurrency's networking stack;
for instance, Monero is currently integrating onion routing into its network~\cite{kovri}. 
However, this requires significant engineering effort; Monero's development effort is still incomplete after four years~\cite{reddit_i2p}, and no other major cryptocurrencies (Bitcoin, Ethereum, Ripple) have announced plans to integrate anonymized routing.
Principal challenges include the difficulty of implementing cryptographic protocols correctly, as well as the fact that onion routing clients need global, current network information to determine transaction paths,
whereas existing systems (and \Algopp) make local connectivity and routing decisions.
Another option would be to have users route their transactions through Tor. However, many Bitcoin users are unaware of Bitcoin's privacy vulnerabilities and/or may lack the technical expertise to route their transactions through Tor. 
%Regardlesss, this may be a viable long-term solution, which we consider interesting future work.
%Bitcoin Core developers have shown no indication that they are willing to do so.
Our goal in \Algopp is instead to propose \emph{simple, lightweight solutions} that can easily be implemented within existing cryptocurrencies, with privacy benefits for \emph{all} users.

\noindent \emph{(2) Analysis.} The differences in analysis are more subtle. 
Many of the above systems include theoretical analysis, but none provide optimality guarantees under the metrics discussed in this paper. 
Prior work in this space has mainly analyzed per-user metrics, such as probability of linkage \cite{reiter1998crowds,tarzan,vuvuzela1}. 
For example, Crowds provides basic linkability analysis \cite{reiter1998crowds}, and Danezis \emph{et al.} show that Crowds has an optimally-low probability of detection under a simple \emph{first-spy} estimator that assigns each transaction 
to the first honest node to deliver the transaction to the adversary \cite{danezis2009wisdom}. % and plays an important role in this paper.
%The analysis further assumes a complete graph topology.
Such analysis overlooks the fact that adversaries can use more sophisticated estimators based on data from many users to execute \emph{joint deanonymization}.
We consider the (more complex) problem of population-level joint deanonymization over realistic graph topologies, using more nuanced estimators and anonymity metrics. 
Indeed under the metrics we study, Crowds is provably sub-optimal~\cite{dand}. 
We maintain that joint deanonymization is a realistic adversarial model as companies are being built on the premise of providing network-wide identity analytics (e.g. Chainalysis~\cite{chainalysis}). % while also giving the adversary greater deanonymization power.
In addition to using a different anonymity metric, the protocols of prior work exhibit subtle differences that significantly change the corresponding anonymity analyses and guarantees. 
We will highlight these differences when the protocols and analysis are introduced. 

% in prior work are slightly different, which significantly changes the associated analysis. 

%We begin with a description of the \Algo~ protocol proposed in~\cite{dand}. 
The most relevant solution to our problem is a recent proposal called \Algo~\cite{dand}, which uses statistical obfuscation to provide anonymity against distributed, resource-limited adversaries (pseudocode in Appendix~\ref{app:algo}).
%The \Algo~networking policy has three components: a graph topology, a spreading mechanism that describes how to propagate over a given graph, and a dynamicity level that specifies how often the graph should change \cite{dand} 
%\noindent {\bf Topology:} \Algo~uses two graphs: the regular P2P graph $G$, and the anonymity graph $H$. 
%The anonymity graph $H$ is a randomly-selected, directed line (or cycle) graph, built with a distributed protocol over all nodes in $V$.
%The P2P graph $G$ remains unchanged with respect to its current construction. 
%\noindent {\bf Spreading mechanism:} 
\Algo~propagates transactions in two phases: (i) an anonymity (or \emph{stem}) phase, and (ii) a spreading (or \emph{fluff}) phase. 
In the anonymity phase, each message is passed to a single, randomly-chosen neighbor in an \emph{anonymity graph} $H$ (this graph can be an overlay of the P2P graph $G$).
This propagation continues for a geometric number of hops with parameter $q$. %, similar to Crowds \cite{reiter1998crowds}.
However, unlike related prior work 
(e.g., Crowds \cite{reiter1998crowds}),  different users forward their transactions along the \emph{same} path in the anonymity graph $H$, which is chosen as a directed cycle in \cite{dand}; this  small difference causes Crowds to be sub-optimal under the metrics studied here and in \cite{dand}, 
and significantly affects the resulting anonymity guarantees. 
In the spreading phase, messages are flooded over the P2P network $G$ via diffusion, just as in today's Bitcoin network. 
%\Algo~uses a directed line as its anonymity graph, so each node has only one successor on $H$.
%if the anonymity graph were not a line, the successor would be chosen randomly among outbound edges.   %is passed along the line graph $H$.
%At each hop, with probability $1-q$,
%the message recipient passes the message along to the next relay in $H$. 
%otherwise, the node transitions to the \emph{spreading phase}. 
%This ensures that a maximal number of messages take the same path in the graph before reaching any adversary, thereby creating the desired strong mixing.  
%\noindent {\bf Spreading phase:} 
%In principle, messages could be forwarded along the line graph until all nodes receive it; however, the latency of this approach would be high. 
%On the other hand, it can be shown that optimal source hiding occurs after only a few hops of forwarding. 
%As such, \Algo~also proposes a second spreading phase, in which 
\Algo~periodically re-randomizes the line graph, so the adversaries' knowledge of the graph is  assumed to be limited to their immediate neighborhood. 

Under  restrictive adversarial assumptions, \Algo~ exhibits near-optimal anonymity guarantees under a joint-deanonymization model \cite{dand}. 
%As mentioned earlier, \Algo's guarantees are based on very restrictive adversarial assumptions. 
Our work illustrates \Algo's fragility to basic Byzantine attacks and proposes a scheme that is robust to Byzantine intersection attacks.
This relaxation requires completely new analysis, which is the theoretical contribution of this paper.

\section{Model}
\label{sec:model}
%We first highlight common attributes of adversaries studied in prior work;
%we then present an analytical model of the Bitcoin P2P network---closely mirroring the models in \cite{dand}---which we use for our theoretical analysis. 

\subsection{Adversary} 
The adversaries studied in prior work exhibit two basic capabilities: creating nodes and creating outbound connections to other nodes.
At one extreme, a single supernode can establish outbound connections to every node in the network; this resembles recent attacks on the Bitcoin P2P network \cite{biryukov,koshy2014analysis} and related measurement tools \cite{coinseer,coinscope}.
At the other extreme is a botnet with many honest-but-curious nodes, each of which creates few outbound edges according to protocol. 
This captures the adversarial model in \cite{dand} and  botnets observed in Bitcoin's P2P network \cite{bitcoin_botnet}.
In this paper, we combine both models: a botnet  adversary that can corrupt some fraction of Bitcoin nodes and establish arbitrarily many outbound connections.
%We consider a botnet adversarial model that is a common threat to the Bitcoin network, and also consider metrics to quantify the average degree of anonymity experienced by users.    

%Our models and metrics closely follow~\cite{dand}. 

We model the botnet adversary as a set of colluding hosts spread over the network. %, colluding towards a common objective of deanonymizing users. 
Out of  $n$ total peers in the network, we assume a fraction $p$ (i.e., $np$ peers) are malicious. 
The botnet seeks to link transactions and their associated public keys with the IP addresses of the hosts generating those transactions.  
%Malicious hosts need not follow specified protocols, and they use all observed metadata to deanonymize honest users. % (i.e. `honest-but-curious' or `Byzantine' respectively). 
%The hosts are often malware-infected, and can be controlled remotely without the host-owners' knowledge. 
%They are also cheap, easy to access and powerful due to their distributed structure, 
%As such, Botnets are an important source of threat for the Bitcoin network, and are commonly studied as adversarial models for various Bitcoin attacks. 
%Though botnets can be used to perform a wide range of attacks, in this paper we specifically focus on the deanonymization problem.  
%In this setting, t
%Such an attack can expose the entire transaction history of deanonymized users due to the public nature of the Blockchain. 
%While IP addresses in itself does not completely reveal user identities, they can be easily used in conjuction with other side-information (geo-location etc.) for a more-thorough identity deduction.
%Hence it becomes imperative to design protocols that hide the source IP of transaction messages. 
The adversarial hosts (or \emph{spies}) need not follow protocol. 
Spies can generate as many outbound edges as they want, to whichever nodes they choose;
however, they cannot force honest nodes to create outbound edges to spies. 
The spies perform IP address deanonymization by observing the transaction propagation patterns in the network. 
Adversaries log transaction information, including timestamps, sending hosts, and control packets.
%Each time an adversary receives a transaction, metadata such as the timestamp, sending host, and any other control data are logged. 
This information, along with global knowledge (e.g., network structure) is used to deanonymize honest users.\footnote{Honest users are Bitcoin hosts that are not part of the adversarial botnet. We assume honest users follow the specified protocols.} 

We assume adversaries are interested in \emph{mass deanonymization}, and
our anonymity metrics (Section \ref{sec: anon met PR}) capture the adversary's success at both the individual level and the population level.
This differs from a setting where the adversary seeks to deanonymize a targeted user. 
While the latter is a more well-studied problem \cite{tracking,biryukov2015bitcoin,androulaki2013evaluating}, 
our adversarial model is motivated in part by the growing market for wide-area cryptocurrency analytics  (e.g., Chainalysis \cite{chainalysis}).
%Moreover, typical solutions tend to require hosts to change their behavior, e.g. by adopting a new cryptocurrency.
Our goal is to provide network-wide anonymity that does not require users to change their behavior. 
While this approach will not stop targeted attacks, it does provide a first line of defense against broad deanonymization attacks that are currently feasible.

Note that recent work on ISP- or AS-level adversaries \cite{apostolaki2016hijacking} can be modeled as a special case of this botnet adversary, except \emph{edges} rather than nodes are corrupted. 
This adversary is outside the scope of this paper, but the topic is of great interest. 
%, and we expect that the intuitions developed in this paper may be useful for defending against such an adversary.
A principal challenge with ISP-level adversaries is that they can eclipse nodes; under such conditions, routing-based defenses (like \Algopp) cannot provide any guarantees for a targeted node. 
%Such a powerful adversary may be able to achieve high values of $p$, exceeding 80 or 90 percent. 
%We maintain that the statistical approaches studied in this work are not appropriate for such an adversary;
%several anonymity guarantees still hold, but they are too weak to be meaningful. 
Nonetheless, in \S\ref{sec:conclusion}, we discuss the compatibility of our proposed methods with the countermeasures proposed in \cite{apostolaki2016hijacking} for large-scale adversaries.

\subsection{Anonymity Metrics: Precision and Recall} 
\label{sec: anon met PR}
The literature on anonymous communication has proposed several metrics for studying anonymity. 
The most common of these captures an adversary's ability to link a single transaction to a single user's IP address; this \emph{probability of detection}, and variants thereof,
have been the basis of much anonymity analysis \cite{chaum88,reiter1998crowds,danezis2009wisdom,KFSV14,vuvuzela1}.
However, this class of metrics does not account for the fact that adversaries can achieve better deanonymization by observing other users' transactions. 
We consider such a form of \emph{joint decoding}.

%\red{
%More recently, a general anonymity framework called AnoA was proposed, which defines anonymity in terms of transaction-source indistinguishability \cite{backes2013anoa}. 
%Under this framework, the likelihood of some observed transaction metadata conditioned on any two (adversarially-chosen) candidate sources should be within some constant factor of one another; this definition closely resembles differential privacy.
%However, it is easy to find counterexamples where \emph{no scheme} can provide AnoA indistinguishability. 
%For example, consider a graph topology in which malicious nodes surround a small set of honest nodes, $S$; this event can happen randomly, or due to malicious peers hijacking the network \cite{biryukov2015bitcoin}. 
%The malicious nodes can perfectly distinguish whether a transaction originated from $S$ or not, for \emph{any} key-less routing policy.
%Hence, AnoA is not a meaningful metric for our problem. 
%}

The adversary's goal is to associate transactions with users' IP addresses through some association map. 
This association map can be interpreted as a classifier that classifies each transaction (and its corresponding metadata) to an IP address.
Hence an adversary's deanonymization capabilities can be measured by evaluating the adversary's associated classifier. 
%The notions of \emph{precision} and \emph{recall} are common metrics for evaluating classifiers, and a natural choice of metric for this problem.
%As in \cite{dand}, 
We adopt a common metric for classifiers: {\em precision} and {\em recall}. %,  common metrics for evaluating classifiers through complementary notions of exactness and completeness, respectively. 
%This metric is also used in \cite{dand}.
As discussed in \cite{dand}, precision and recall are a superset of the metrics typically studied in this space; 
in particular, recall is equivalent (in expectation) to probability of detection. %, which is studied widely in prior work \cite{reiter1998crowds}.
On the other hand, \emph{precision}  can be interpreted as a measure of a node's plausible deniability; the more transactions get mapped to a single node, the lower the adversary's precision.

Let $V_H$ denote the set of all IP addresses of honest peers in the network, and let $\tilde{n} = |V_H|$ denote the number of honest peers.
In this work, a \emph{transaction} is abstracted as a tuple containing the sender's address, the recipient's address, and a payload. 
To begin, we will assume that each peer  $v\in V_H$ generates exactly one transaction $X_v$.
We relax this assumption in \S\ref{sec:intersection}. %, which it broadcasts over a fixed time-horizon.
%This assumption (also made in \cite{dand}) is not necessarily representative of real traffic. 
%However, modeling real traffic patterns is a research challenge on its own, 
%and the anonymity problem remains challenging even under the one-transaction-per-node assumption.
%We discuss the implications of relaxing this  assumption in \S\ref{sec:conclusion}, and leave its resolution to future work.
Let $\mathcal{X} = \cup_{v\in V_H} X_v$ denote the set of all  transactions.  
We assume the sets $V_H$ and $\mathcal{X}$ are known to the adversaries.
Let $\mathtt{M}:\mathcal{X}\rightarrow V_H$ denote the adversary's  map from  transaction $x\in\mathcal{X}$  to IP address $\mathtt{M}(x)\in V_H$. 
The precision and recall of  $\mathtt{M}$ at any honest peer $v\in V_H$ are given, respectively, by
\begin{align}
D(v) &= \frac{\mathds{1}(\mathtt{M}(X_v) = v)}{\sum_{u\in V_H} \mathds{1}(\mathtt{M}(X_u) = v)} \\
R(v) &= \mathds{1}(\mathtt{M}(X_v) = v), 
\end{align}
where $\mathds{1}(\cdot)$ denotes the indicator function.
Precision (denoted $D(v)$) measures accuracy by normalizing against the number of transactions associated with $v$.
A large number of transactions mapped to $v$ implies a greater plausible deniability for $v$. %, and hence a lower accuracy.
Recall (denoted  $R(v)$) measures the accuracy or completeness of the mapping.
We define the average precision and recall for the network as 
%\begin{align}
$D = \frac{\sum_{v\in V_H} D(v)}{|V_H|}$ and $R =  \frac{\sum_{v\in V_H} R(v)}{|V_H|}$. \label{eq: avg prec and rec}
%\end{align}
In our theoretical analyses under a probabilistic model (see \S\ref{sec: network model}), we will be interested in the expected values of these quantities, denoted by $\mathbf D$ and $\mathbf R$, respectively. % in Equation~\eqref{eq: avg prec and rec}.   

\vspace{0.05in}
\noindent \textbf{Fundamental bounds.} \cite{dand} shows fundamental lower bounds on the expected precision and recall of any spreading mechanism. 
In particular, no spreading algorithm can achieve lower expected recall than $p$, where $p$ is the fraction of spy nodes, or expected precision lower than $p^2$. 
These lower bounds assume an honest-but-curious adversary, in which case the recall lower bound is tight, and the precision bound is within a logarithmic factor of tight. 
However, the solution in \cite{dand} does not consider Byzantine adversaries. 
Hence in this work, we will aim to match these fundamental lower bounds for an honest-but-curious adversary, while also providing robustness against a \emph{Byzantine} adversary. 
Obtaining tight lower bounds for general Byzantine adversaries remains an open question, though the lower bounds from \cite{dand} naturally still hold.
We will show in Section \ref{sec:intersection} that for certain classes of Byzantine adversaries, we can achieve \cite{dand}'s lower bound on expected recall and within a logarithmic factor of its lower bound on precision.

\subsection{Transaction and Network Model} \label{sec: network model}
We follow the probabilistic network model of \cite[\S 2]{dand}.
%This allows us to evaluate candidate policies analytically or via simulation, and obtain key insights into their performance behavior.
%\vspace{0.in}
%\noindent {\bf Transaction model.} 
%Following the same notation as in \S\ref{sec: anon met PR}, 
%Let $V_H$ denote the set of IP addresses of current honest users in the network, $X_v$ the message from user $v\in V_H$ and $\mathcal{X}$ the set of all transaction messages. 
We assume a uniform prior on $X_v$ over the set $\mathcal{X}$, i.e., the ordered tuple $(X_{v_1},X_{v_2},\ldots,X_{v_{\tilde{n}}})$ is a uniform random permutation of messages in $\mathcal{X}$ where $V_H = \{v_1,v_2,\ldots,v_{\tilde{n}}\}$. 
We also assume transaction times are unknown to the adversary. 

\begin{algorithm}[t]
\DontPrintSemicolon
\KwIn{Set $V=\{v_1, v_2, \ldots, v_n\}$ of nodes; }
\KwOut{A connected, directed graph $G(V,E)$ with average degree $2\eta$}
\For{$v \gets V$} {
%  \For{$c \gets 1$ \textbf{to} $d$} {
    \tcc{pick $\eta$ random targets} 
    $N \gets \emptyset$ \;
    \For{$i \gets \{1,\ldots, \eta\}$}{
	    $e  \sim$ Unif$(V \setminus \{v\} \setminus N )$ \;
	    $N \gets N \cup \{e\}$
    }
    \tcc{make connections}
    $E = E \cup \{(v \rightarrow u), ~u \in N \}$ 
%  } 
}
\Return{$G(V,E)$}\;
\caption{{\sc Approximate $2\eta$-Regular Graph} }
\label{algo:dreg_approx}
\end{algorithm}

%\begin{figure}[!htb]
%\centering
%  \includegraphics[width=1.6in]{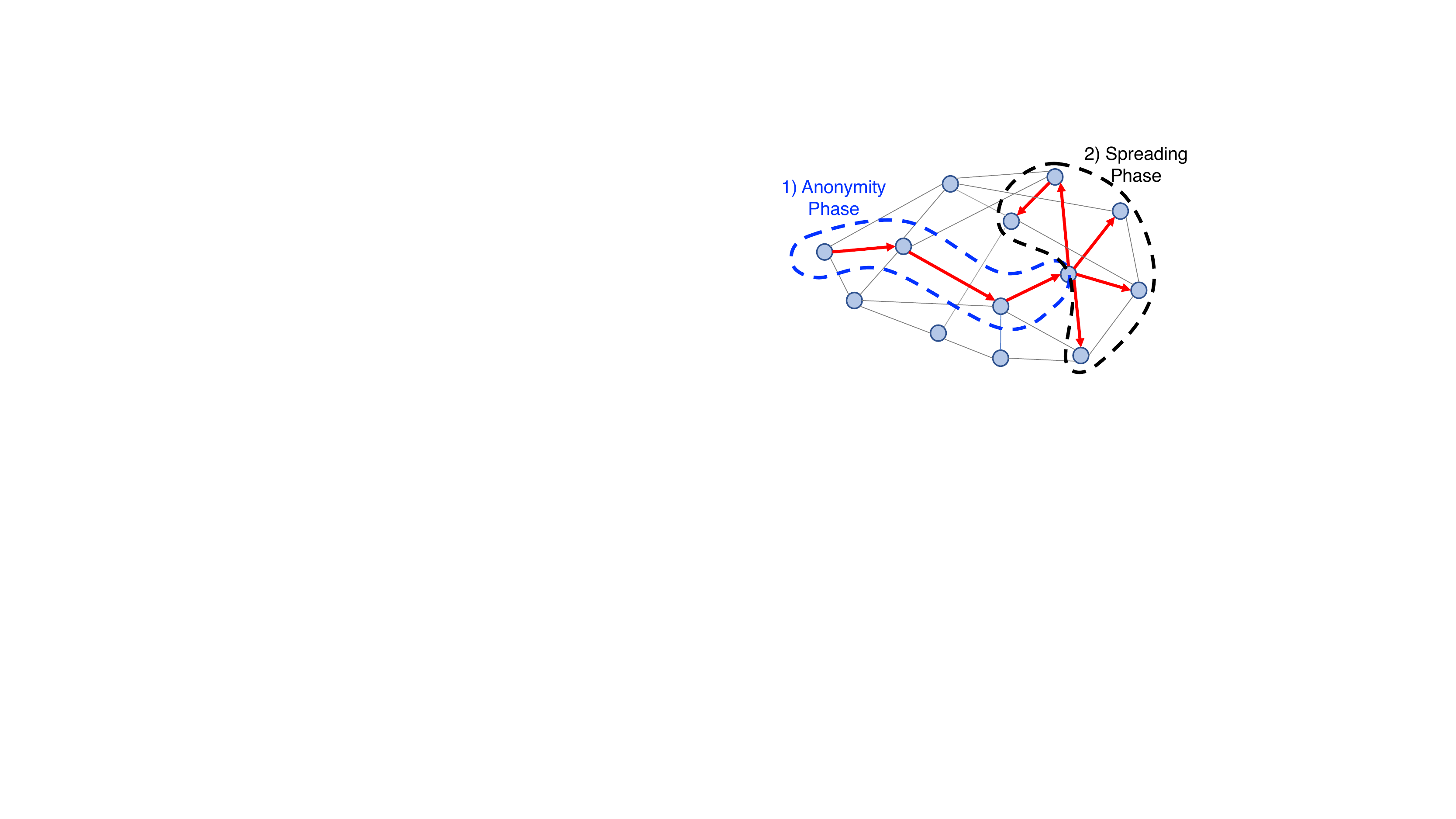}
%  \caption{Dandelion spreading forwards messages in a line over the graph, then broadcasts using diffusion. Figure reproduced with permission from \cite{dand}.}\label{fig:dandelion}
%  \end{figure}

%\begin{figure}
%\centering
%  \includegraphics[width=2in]{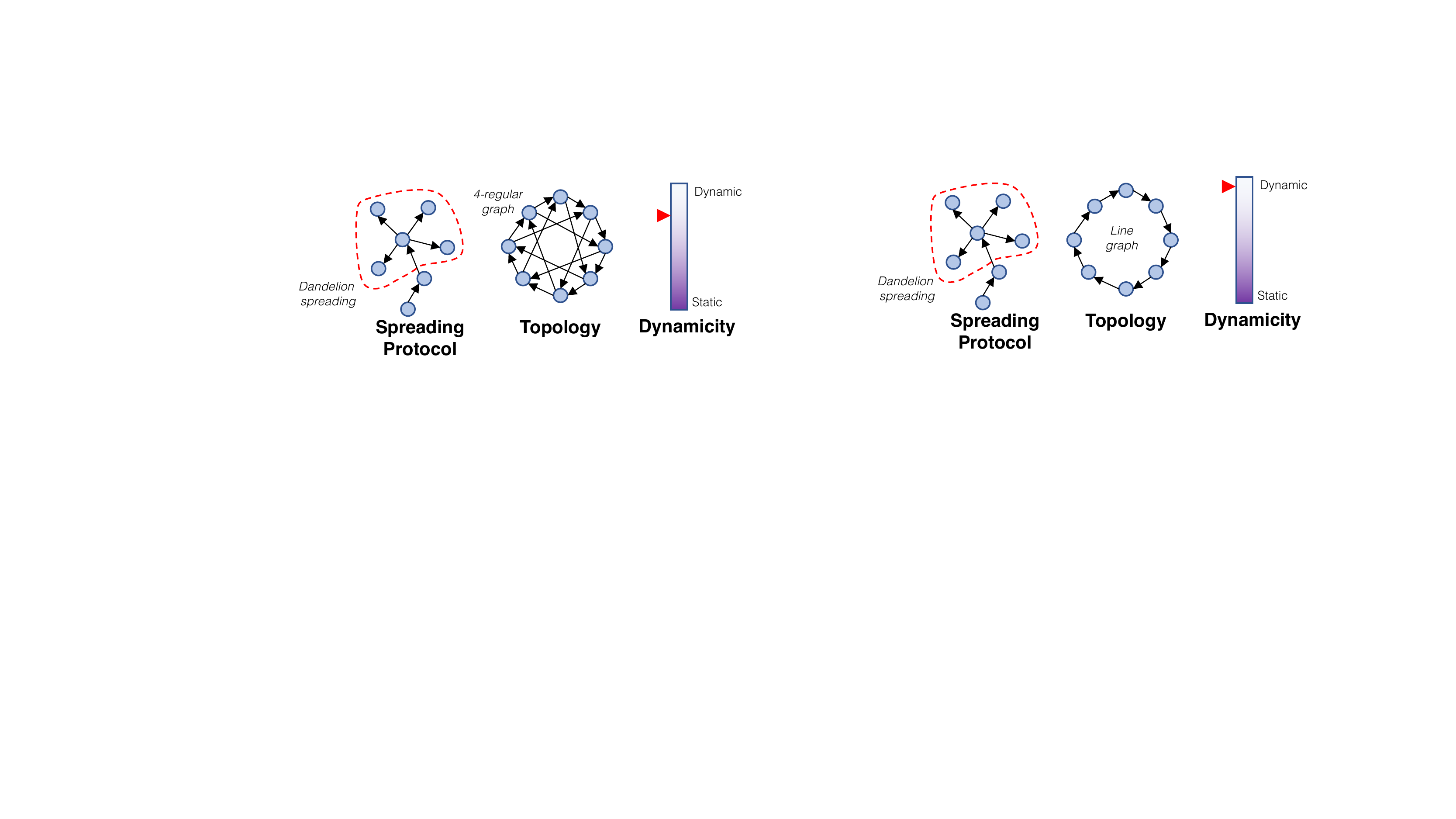}
%  \caption{\Algo~networking policy: (1) \algo~spreading, (2) line topology, (3) frequent reshuffling of the anonymity graph $H$.
%  Figure reproduced with permission from \cite{dand}.}\label{fig:dandelion_policy}
%\end{figure}

%\begin{figure*}[!htb]
%\minipage{0.27\textwidth}
%  \includegraphics[width=\linewidth]{figures/dandelion}
%  \caption{Dandelion spreading forwards messages in a line over the graph, then broadcasts using diffusion. Figure reproduced with permission from \cite{dand}.}\label{fig:dandelion}
%\endminipage\hfill
%%
%%
%\minipage{0.32\textwidth}
%  \includegraphics[width=\linewidth]{figures/dandelion_policy}
%  \caption{The \Algo~networking policy: (1) \algo~spreading, (2) a line topology, (3) frequent reshuffling of the anonymity graph $H$.
%  Figure reproduced with permission from \cite{dand}.}\label{fig:dandelion_policy}
%\endminipage\hfill
%%
%%
%\minipage{0.37\textwidth}%
%%  \includegraphics[width=\linewidth]{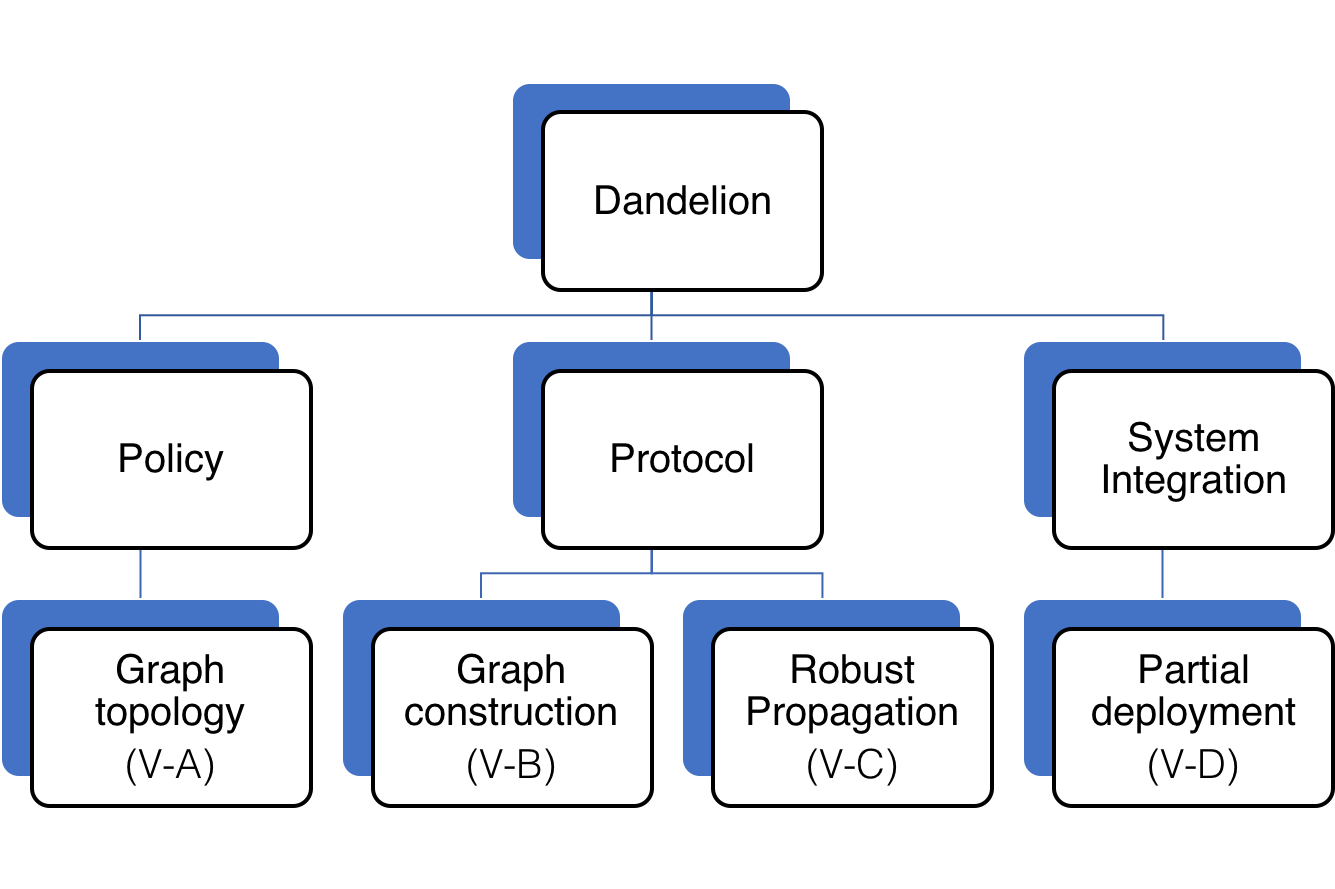}
%%  \caption{\Algo~attributes discussed in \S\ref{sec: Dandelionpp}.}\label{fig:breakdown}
%\endminipage
%\end{figure*}

%\vspace{0.1in}
%\noindent {\bf Network model.} 
We model the Bitcoin network as a directed graph $G(V,E)$ where the vertices $V=V_H \cup V_A$ comprise honest peers $V_H$ and adversarial peers $V_A$.
The edges $E$ correspond to TCP links between peers in the network.
Although these links are technically bidirectional, the Bitcoin network treats them as directed.
An \emph{outbound} link from Alice to Bob is one that Alice initiated (and vice versa for inbound links);
we also refer to the \emph{tail} node of an edge as the one that originated the connection.

To construct the Bitcoin network, each node establishes up to eight outbound connections, and maintains up to 125 total connections \cite{bitcoind}.
In practice, the eight outbound connections are chosen from each node's locally-maintained address book; we assume each node chooses outbound connections randomly from the set of all nodes (Algorithm \ref{algo:dreg_approx}). 
This graph construction model results in a random graph where each node has expected degree $2\eta=16$.
Although this approximates the behavior of many nodes in Bitcoin's P2P network, Byzantine nodes need not follow protocol.  
We will describe the behavior of Byzantine nodes as needed in the paper.
%Under any protocol that specifies a fixed, unchanging network topology, we assume the peers in $V$ are equally likely to be located as any node in the graph. 
%We will discard this assumption as soon as we address Byzantine nodes that disobey graph construction protocols (i.e., starting in \S\ref{sec: const graph}).
%For simplicity, we do not model peer churn.

%\vspace{0.1in}
%\noindent {\bf Observations.}  
Upon creating a transaction message, peers propagate it according to a pre-specified spreading policy. 
The propagation dynamics are observed by the spies, whose goal is to estimate the IP addresses of transaction sources. 
For each transaction $x\in\mathcal{X}$ received by adversarial node $a\in V_A$, the tuple $(x,v,t)$ is logged where $v$ is the peer that sent the message to $a$, and $t$ is the timestamp when the message was received. 
The botnet adversary may also know partial information about the network structure.
Clearly, peers neighboring adversarial nodes are known. 
However in some cases the adversary might also be able to learn the locations of honest peers not directly connected to botnet nodes, either through adversarial probing or side information. 
For simplicity, we use $\mathbf{O}$ to denote all observed information---message timestamps, knowledge of the graph, and any other control packets---known to the adversary. 
Given these observations, one common source estimator is the simple-yet-robust \emph{first-spy estimator}, used in \cite{biryukov,koshy2014analysis,danezis2009wisdom}.
Recall that the first-spy estimator outputs the first honest node to deliver a given transaction to the adversary as the source. % and plays an important role in this paper.

%\subsection{Objectives}
\begin{table*}[t]
{\tiny
\centering
\caption{Summary of changes proposed in \Algopp, with references to relevant evidence and/or analysis.}
\label{tab:summary}
\begin{tabular}{|c|c|c|c|}
\hline
 \textbf{Attack}                    &  \textbf{Effect on \Algo~\cite{dand}}             & \multicolumn{2}{c|}{\Algopp}                                    \\ %\hline
 %\cline{3-4}
  & & \textbf{Proposed solution} & \textbf{Effect}  \\ \hline
Graph-learning (\S\ref{sec: topology})    & Order-level precision increase \cite{dand} & 4-regular anonymity graph & Limits precision gain (Thm. \ref{thm: dregular result}, Fig. \ref{fig:4regular})  \\ \hline
Intersection (\S\ref{sec:partial})                  & Empirical precision increase (Fig. \ref{fig:int_attack})                                       & Pseudorandom forwarding & Improved robustness (Thm. \ref{thm:intersection}) \\ \hline
Graph-construction (\S\ref{sec: const graph}) & Empirical precision increase (Fig. \ref{fig:misbehave})      & Non-interactive construction & Reduces precision gain (Figs. \ref{fig:approx_vs_reg}, \ref{fig:behave})   \\ \hline
%Graph construction attack (add links) (\S\ref{sec: const graph})           & Limited precision increase (Fig. \ref{fig:behave})      & Use small $q$ parameter & Reduces precision gain (Fig. \ref{fig:behave})   \\ \hline
Black-hole  (\S\ref{sec: msg frwd})                  & Transactions do not propagate                                     & Random stem timers & Provides robustness  (Prop. \ref{prop: clock rate}) \\ \hline
Partial deployment (\S\ref{sec:partial})                  & Arbitrary recall increase (Fig. \ref{fig:recall_partial})                                       & Blind stem selection & Improves recall (Thm. \ref{thm:recall_partial}, Fig. \ref{fig:recall_partial}) \\ \hline
\end{tabular}
}
\end{table*}

\section{\Algopp} \label{sec: Dandelionpp}

\Algo's theoretical anonymity guarantees make three idealized assumptions:
%The goal of this section is twofold: first, we highlight the unrealistic assumptions that underlie \Algo~ and show how they can undermine its anonymity guarantees.
%Then, we propose a modified networking policy called \Algopp~that is robust to these weaknesses.
%We demonstrate robustness through a combination of theoretical results and simulations.
%\Algo's theoretical guarantees rely on three primary assumptions:
(1) all nodes obey the protocol, 
(2) each node generates exactly one transaction,
(3) all Bitcoin nodes run \Algo.
None of these assumptions necessarily holds in practice.  %and it is unclear how their absence affects \Algo's anonymity properties.
In this section, we show how \Algo's anonymity properties break when the assumptions are violated,
and propose a modified solution called \Algopp~ that addresses these concerns. 
\Algopp~passes transactions over intertwined paths, or `cables', before diffusing to the network (Fig. \ref{fig:twistedpair}).
In practice, these cables may be fragmented (i.e. all nodes are not connected in a single Hamiltonian cycle), 
but the cable intuition applies within each node's local neighborhood.

\begin{figure}[!htb]
\centering
  \includegraphics[width=2.6in]{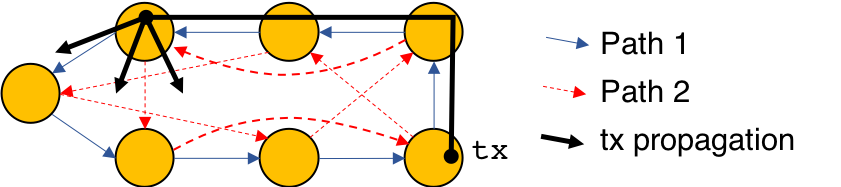}
  \caption{\Algopp~ forwards messages over one of two intertwined paths on a 4-regular graph, then broadcasts using diffusion. 
  Here, $\texttt{tx}$ propagates over the blue solid path. }\label{fig:twistedpair}
  \end{figure}

We begin with a brief description of \Algopp, and then give a more nuanced picture of how this design came about. 
We start with \Algo~as a baseline, adopting the same `stem phase' and `fluff phase' terminology, along with dandelion spreading (Algorithm \ref{algo:dandelion}).
Like \Algo, \Algopp~proceeds in asynchronous epochs; each node advances its epoch when its internal clock reaches some threshold (in practice, this will be on the order of 10 minutes). 
Within an epoch, the main algorithmic components of \Algopp~are: 
%\begin{enumerate}
\vspace{0.05in} \newline
\noindent \emph{(1) Anonymity Graph:} Use a random, approximately-4-regular graph instead of a line graph for the anonymity phase (\S\ref{sec: topology}). 
This quasi-4-regular graph is embedded in the underlying P2P graph by having each node choose (up to) two of its outbound edges, without replacement, uniformly at random as \Algopp~relays (\S\ref{sec: const graph}). 
The choice of \Algopp~relays should be independent of whether the outbound neighbors support \Algopp~or not (\S\ref{sec:partial}). 
Each time a node changes epoch, it selects fresh \Algopp~relays.
\vspace{0.05in} \newline
\noindent \emph{(2) Transaction Forwarding (own):} Every time a node generates a transaction of its own, it forwards the transaction, in stem phase, along the \emph{same} outbound edge in the anonymity graph. In \Algo, nodes are assumed to generate only one transaction, so this behavior is not considered in prior analysis. 
\vspace{0.05in} \newline
\noindent \emph{(3) Transaction Forwarding (relay):} 
Each time a node receives a stem-phase transaction from another node, it either relays the transaction or diffuses it. 
The choice to diffuse transactions is pseudorandom, and is computed from a hash of the node's own identity and epoch number. 
%I.e., each node $v$ computes $H_d(v || e_v(t))$, where $e_v(t)$ denotes the epoch of node $v$ at time $t$, $H_d(\cdot)$ denotes a hash function to $\mathbb Z_{1/q}$, and $(\cdot || \cdot)$ denotes concatenation.
%If $H_d(v || e_v(t))=0$, then $v$ begins to diffuse the transaction. 
Note that the decision to diffuse does not depend on the transaction itself---in each epoch, a node is either a diffuser or a relay node for \emph{all} relayed transactions.
If the node is not a diffuser in this epoch (i.e., it is a relayer), then it relays transactions pseudorandomly; each node maps each of its incoming edges in the anonymity graph to an outbound edge in the anonymity graph (with replacement).
This mapping is selected at the beginning of each epoch, and determines how transactions are relayed (\S\ref{sec:intersection}).
\vspace{0.05in} \newline
\noindent \emph{(4) Fail-Safe Mechanism:} Each node tracks, for each stem-phase transaction that was sent or relayed, whether the transaction is seen again as a fluff-phase transaction within some random amount of time. If not, the node starts to diffuse the transaction   (\S\ref{sec: msg frwd}).

%\end{enumerate}
%However, the primary contribution of \Algopp~is not its algorithms, which are simple extensions of known algorithms, but the associated analysis and evaluation.
These small algorithmic changes completely alter the  anonymity analysis by introducing an exponentially-growing state space. 
For example, moving from a line graph to a 4-regular graph (item (1)) invalidates the exact probability computation in \cite{dand}, and requires a more complex analysis to understand effects like intersection attacks. 
%The most challenging attacks to analyze were graph-learning attacks, intersection attacks, and the effects of deployment.
We also simulate the proposed mechanisms for all attacks and find improved anonymity compared to \Algo.\footnote{Code for reproducing simulation results can be found at \url{https://github.com/gfanti/dandelion-simulations}.}

The remainder of this section is structured according to Table~\ref{tab:summary}.
The weak adversarial model in~\cite{dand} enables five distinct attacks:
graph learning attacks, intersection attacks, graph construction attacks, black hole attacks, and deployment attacks.  
For each attack, we first demonstrate its impact on anonymity (and/or robustness); in many cases, these effects can lead to arbitrarily high deanonymization accuracies. 
Next, we propose lightweight implementation changes to mitigate this threat, and justify these choices with theoretical analysis and simulations.

%point (1) above helps with the graph learnability problem; we show that 4-regular graphs have an anonymity performance close to line graphs (i.e., near-optimal), but are harder to learn. 
%Moreover, even if the topology is learned, the resulting loss in anonymity is significantly smaller than with line graphs (\S\ref{sec: topology}). 
%Next, we identify the steps in \Algo~ that are susceptible to Byzantine attacks---namely, the graph-construction and message-forwarding protocols. 
%We discuss these attacks, and present a robust alternative algorithm for the two steps. 
%This is discussed in \S\ref{sec: const graph} and \S\ref{sec: msg frwd} respectively. 
%We then analyze the performance of \Algopp~ under partial adoption; we show that \Algopp~provides anonymity gains over Bitcoin's current approach even when only a fraction of nodes are using it (\S\ref{sec:partial}). 
%Finally, we summarize our findings and conclusions from this section in \S\ref{sec:summary}.

%\subsection{Summary}
%\label{sec:summary}
%This section has highlighted a number of practical weaknesses in \cite{dand}, while also evaluating and proposing solutions  to each of them. 
%Table \ref{tab:summary} summarizes the results from this section by explicitly listing the discussed weaknesses, their effects on anonymity, and references to the proposed solutions and associated theoretical or simulated evaluations. 

\subsection{Graph-Learning Attacks} \label{sec: topology}

%We first explain how line graphs can be learned from \Algo's spreading policy. 
%Consider any continuous segment of the line graph as shown in Figure~\ref{fig:spies} where $s$ and $t$ are adversarial nodes and the interior nodes $u_1,\ldots,u_k$ are all honest. 
%In the anonymity phase each node forwards transactions to its neighbor on the right, as they are received.
%Ideally the adversarial nodes $s,t$ should known only their immediate neighbors, $u_1$ and $u_k$ respectively, within the segment. 
%However supposing a transaction \texttt{tx} transitions from the anonymity to the spreading phase at node $u_i$.
%In this case the adversaries observe that (i) \texttt{tx} has been received and forwarded by $s$ but not by $t$, and (ii) a node $u_i$ broadcast \texttt{tx}. 
%This implies that node $u_i$ must lie in the segment between $s$ and $t$. 
%Thus any time a transaction is broadcast by an honest node, the segment to which the honest node belongs to can be deduced.  
%We note that the adversaries can also deliberately generate transactions in order to `probe' the line graph in the above fashion.
%\begin{figure}[h]
%    \centering
%  \includegraphics[width=.33\textwidth]{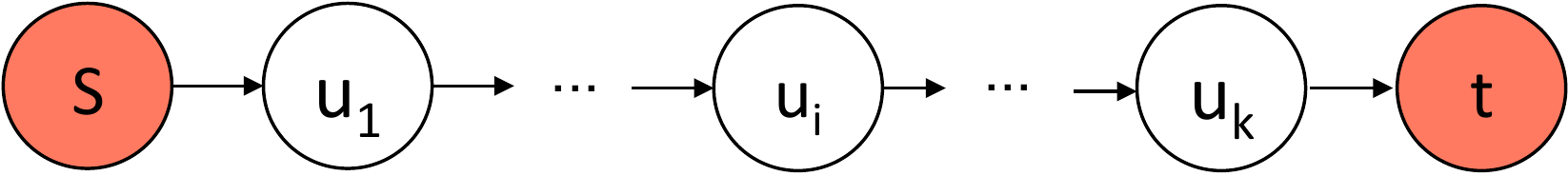}
%  \caption{The adversary can easily learn line graphs.}
%  \label{fig:spies}
%\end{figure}

Theoretical results in \cite{dand} assume that the anonymity graph is \emph{unknown} to the adversary; that is, the adversary knows the randomized graph construction protocol, but it does not know the realization of that protocol outside its own local neighborhood.
Under these conditions, \Algo, which uses a line topology, achieves a maximum expected precision of $O(p^2 \log(1/p))$, which is near-optimal (recall that $p$ denotes the fraction of malicious nodes in the network).
However, if an adversary somehow learns the anonymity graph, the maximum expected precision increases to $O(p)$~\cite[Proposition 4]{dand}---an order-level increase.
%Hence, precision guarantees are significantly degraded if the adversary somehow learns the graph. 
On the other hand, recall guarantees do not change if the adversary learns the graph. 
This is because under dandelion spreading (Algorithm \ref{algo:dandelion}), the first-spy estimator is recall-optimal \cite{dand}, meaning it maximizes the adversary's expected recall. 
Since the first-spy estimator is graph-independent, learning the graph does not improve the adversary's maximum expected recall.  
A natural question is whether one can avoid this jump in precision.
%For reference, Figure \ref{fig:theoretical} plots these theoretical bounds on expected precision as a function of $p$ (lower is better).
%The red curve (unknown graph) is an upper bound that is tight for small $p$. 
%Although the lines in Figure \ref{fig:theoretical} cross around $p=0.3$, the actual precision for a known graph always upper bounds the precision for an unknown graph, 
%since giving the adversary additional information cannot reduce the precision.

%\begin{figure}[t]
%    \centering
%  \includegraphics[width=.38\textwidth]{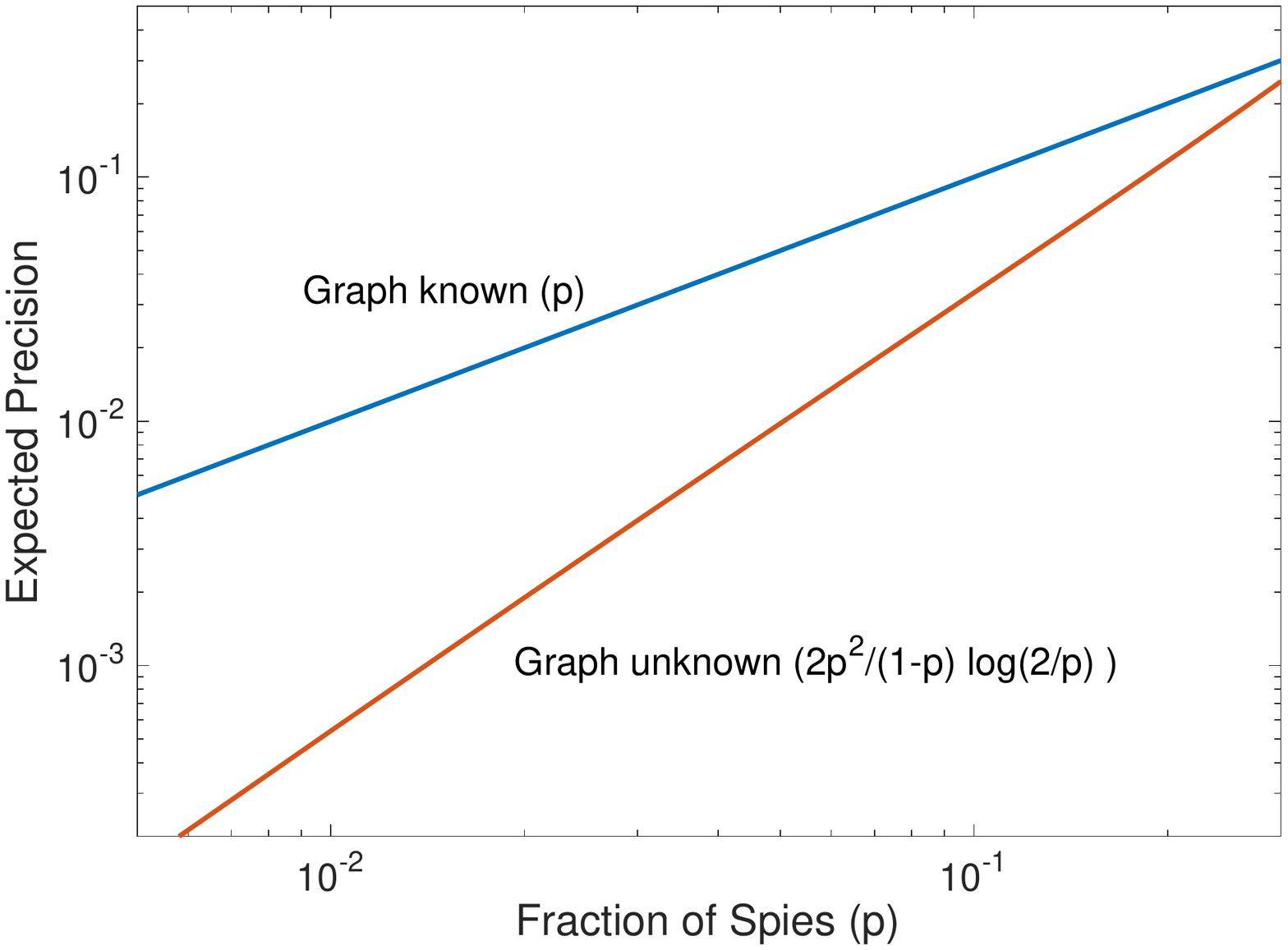}
%  \caption{Theoretical precision bounds from \cite{dand}.}
%  \label{fig:theoretical}
%\end{figure}

\Algo~proposes a heuristic solution in which the line graph is periodically reshuffled to a different (random) line graph \cite{dand}. 
The intuition is that changing the line graph frequently does not allow enough time for the adversaries to learn the graph. 
However, the efficacy of this heuristic is difficult to evaluate. %, particularly since Bitcoin users may perform bursts of transactions, and traffic patterns are complex to model \cite{ron2013quantitative}. 
%Each message's \Algo~stem can end at different, nearby nodes of the anonymity graph, enabling the adversary to learn portions of the anonymity graph topology faster than expected. 
%Moreover, frequent rearrangements of the graph may not provide enough transaction mixing for anonymity. % and the implications of this time-dependency are difficult to analyze.  
%These two effects---increased number of transactions per user and more frequent rearrangements---can compound to significantly degrade the algorithm's anonymity performance. 
Fundamentally, the  problem is that it is unclear how fast an adversary can learn a graph, which calls into question the resulting anonymity guarantees. 
%The time-dynamics of transactions complicate the matter even further, making an optimal rearrangement strategy %that precisely accounts for these effects 
%infeasible to design. 
%\red{would be nice to have a plot demonstrating this degradation...}

\subsubsection{Proposal: 4-Regular Graphs}
We explore an alternative solution that may protect against adversaries that are able to learn the anonymity graph---either due to the \Algopp~protocol itself or other implementation issues.
In particular, we suggest that \Algopp~ should use random, directed, 4-regular graphs instead of line graphs as the anonymity graph topology. 
4-regular graphs naturally extend line graphs, which are 2-regular. 
For now, we study exact 4-regular graphs, though the final proposal uses approximate 4-regular graphs.
%  an union of the edges of two random, 2-regular (line) graphs. 
Although the forwarding mechanism is revisited in \S\ref{sec:intersection}, for now, let us assume that users relay transactions randomly to one of their 2 outbound neighbors in the 4-regular graph until fluff phase. 
%As with the line graph, the anonymity provided by the 4-regular graph changes depending on the adversary's knowledge of the topology. 
%However, the difference in this case is smaller than for line graphs, making 4-regular graphs a more robust choice for spreading. 

Although theoretical analysis of expected precision on $d$-regular anonymity graphs is challenging for $d>2$, we instead simulate randomized spreading over different topologies, while measuring anonymity empirically using theoretically-optimal (or near-optimal) estimators.
Figure~\ref{fig:4regular} plots the average precision obtained on $d$-regular graphs, for $d=2$ (corresponding to a line graph), 4 and 6, as a function of $p$, the fraction of adversaries, for a network of 50 nodes. 
Since we are (for now) studying exact 4-regular graphs, the adversarial nodes also have degree 4; this assumption will be relaxed in later sections.
The blue solid line at the bottom corresponds to the line graph when the graph is unknown to the adversary; this matches the theoretical precision of $O(p^2\log (1/p))$ shown in \cite{dand}.
The solid lines in Figure \ref{fig:4regular} (i.e., unknown graph) were generated by running the first-spy estimator, which maps each transaction to the first honest node that forwarded the transaction to an adversarial node.  
When the graph is \emph{unknown}, we show in Theorem \ref{thm: dregular result} that the first-spy estimator is within a constant factor of optimal; 
hence, Figure \ref{fig:4regular} shows results from an approximately-precision-optimal estimator. 

\begin{figure}[t]
    \centering
  \includegraphics[width=.45\textwidth]{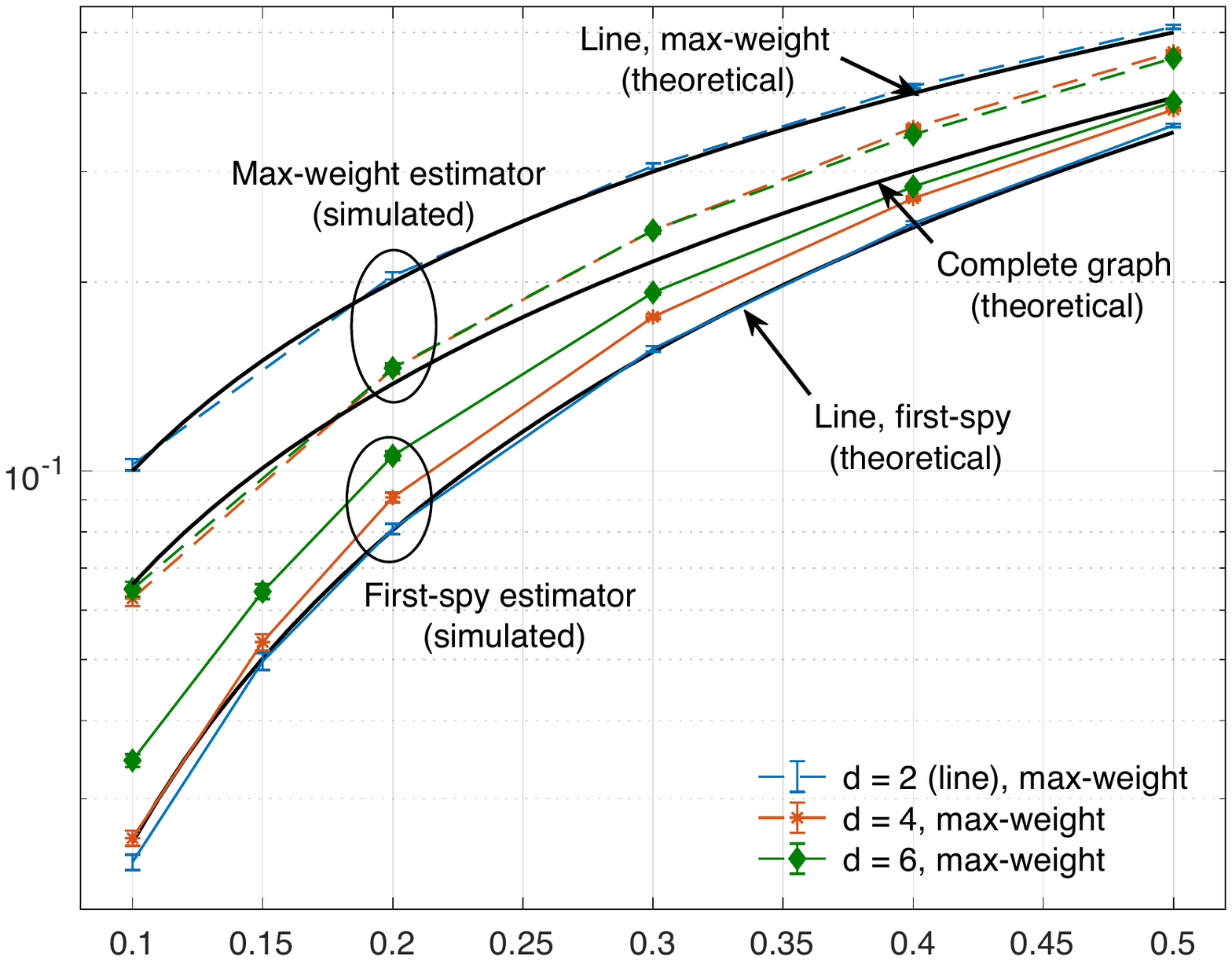}
  \put(-120,-8){Fraction of spies, $p$}
  \put(-190,40){\rotatebox{90}{Average precision}}
  \caption{Average precision as a function of $p$ for random, directed $d$-regular graphs.}
  \label{fig:4regular}
\end{figure}

When the graph is \emph{known}, we approximate the maximum-precision estimator differently.
\cite{dand} showed that the precision-optimal estimator is a maximum-weight matching from transactions to nodes, 
where the weight of the edge between node $v$ and transaction $x$ is the posterior probability $\prob(X_v=x | \mathbf O)$ (Theorem 3, \cite{dand}).
Since the anonymity graph contains cycles, this posterior  is difficult to compute exactly, 
because it requires (NP-hard) enumeration of every path between a candidate source and the first spy to observe a given transaction \cite{schrijver2002combinatorial}.
%This problem is known to be NP-hard .
We therefore approximate the posterior probabilities by assuming that each candidate source can only pass a given message along the \emph{shortest} path to the spy that first observes the message 
(note that the shortest path is also the most likely one).
This path likelihood can be computed exactly and used as a proxy for the desired posterior probability.
Given these approximate likelihoods, % for each candidate source and each transaction, 
we compute a maximum-weight matching, and calculate the precision of the resulting matching.
%To test the validity of this approach, we also computed the exact precision-optimal estimator on smaller graphs (30 nodes), but found that the difference in average precision was small. 

Assuming the adversary knows the graph and uses this quasi-precision-optimal estimator,  the precision on a line graph increases to the blue dotted line at the top of the plot. 
For example, at $p=0.15$, knowing the graph gives a precision boost of 0.12, or about $250$\% over not knowing the graph. 
On the other hand, if a 4-regular graph is unknown to the adversary, it has a precision very close to that of line graphs (orange solid line in Figure~\ref{fig:4regular}). 
But if the graph becomes known to the adversary (orange dotted line), the increase in precision is smaller. 
At $p=0.15$, the gain is $0.06$---half as large as the gain for line graphs.
This suggests that 4-regular graphs are more robust than lines to adversaries learning the graph,
while sacrificing minimal precision when the adversary does \emph{not} know the graph. % thus the precision gain is half as large compared to line graphs.

Figure~\ref{fig:4regular} also highlights a distinct trend in precision values, as the degree $d$ varies.  
If the adversary has not learned the topology, then line graphs have the lowest expected precision and hence offer the best anonymity. 
As the degree $d$ increases, the expected precision progressively worsens until $d=n$, where $n$ is the number of peers in the network. 
When $d=n$ (i.e., the graph is a complete graph), %can also be thought of as %a `diffusion-by-proxy' strategy~\cite{dand} 
%strategy wherein 
peers forward their transactions to a random peer in each hop of the dandelion stem. 
%This scenario was termed `diffusion-by-proxy' in \cite{dand}, and its expected precision was characterized theoretically, as shown by the middle black line in Figure \ref{fig:4regular}.
On the other hand, if the topologies are \emph{known} to the adversary, then the performance trend reverses. 
In this case, line graphs (for $d=2$) have the {\em worst} (highest) expected precision among random $d$-regular graphs; 
As the degree increases, precision decreases monotonically until $d=n$. 
%Note that at $d=n$ the performance of both the known and unknown topologies are identical. 

Finally, recall that our simulations used the `first-spy' estimator as the deanonymization policy for $d$-regular graphs when the graph is unknown to the adversary. 
\emph{A priori}, it is not clear whether this is an optimal estimator for $d$-regular graphs for $d\geq 4$ (\cite{dand} shows that the first-spy estimator is optimal for line graphs). 
As such the optimal precision curve could be much higher than the one plotted in Figure~\ref{fig:4regular}. 
However, the following theorem---our first main theoretical contribution of this paper---shows that this is not the case. 
\begin{restatable}{thm}{thmfirstspy} \label{thm: dregular result}
The maximum expected precision on a random 4-regular graph with graph unknown to the adversary is bounded by
%\begin{align}
$
\mathbf{D}_\mathtt{OPT} \leq 8 \mathbf{D}_\mathtt{FS} + 6p^2 + O(p^3),
$
%\end{align}
where $\mathbf{D}_\mathtt{OPT}$ and $\mathbf{D}_\mathtt{FS}$ denote the expected precision under the optimal and first-spy estimators respectively. 
\end{restatable}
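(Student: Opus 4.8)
The plan is to combine \cite{dand}'s description of the precision-optimal estimator with a local, tree-like analysis of the posterior that an adversary has when the anonymity graph is \emph{hidden}, and then to compare the resulting bound against $\mathbf{D}_\mathtt{FS}$ level by level in that tree. First I would reduce to a per-node estimate. By Theorem~3 of \cite{dand}, the precision-optimal estimator is the maximum-weight matching $\mathtt{M}^{*}$ between $\mathcal{X}$ and $V_H$ (a perfect matching, since the weights $\prob(X_v=x\mid\mathbf{O})$ are nonnegative and $|\mathcal{X}|=|V_H|$), and because every correctly matched pair then contributes precision $1$, taking expectations over the graph, the routing, and the message assignment and using that $\mathtt{M}^{*}$ is $\mathbf{O}$-measurable gives $\tilde{n}\,\mathbf{D}_\mathtt{OPT}=\E\big[\sum_{(x,v)\in\mathtt{M}^{*}}\prob(X_v=x\mid\mathbf{O})\big]$. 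Since a matching uses each honest node at most once,
\[
\tilde{n}\,\mathbf{D}_\mathtt{OPT}\ \le\ \E\Big[\,\textstyle\sum_{v\in V_H}\max_{x\in\mathcal{X}}\prob(X_v=x\mid\mathbf{O})\Big],
\]
so it suffices to control the maximal posterior at a typical honest node.

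Next I would analyze that posterior. With $H$ unknown, the adversary effectively learns, for each transaction $x$, only the first honest node $w_x$ to hand $x$ to a spy. A random exact directed $4$-regular graph is w.h.p.\ tree-like in every neighborhood of radius up to a constant times the (random) stem length; on this event the candidate sources of $x$ that can reach $w_x$ along an all-honest stem path form the \emph{upstream tree} of $w_x$, which is a Galton--Watson tree with offspring law equal to the honest in-degree $\mathrm{Bin}(2,1-p)$. Writing $L_\ell$ for its depth-$\ell$ population and $\alpha_\ell=\prob(\text{the true source is at distance }\ell\mid\mathbf{O})$, the goal of this step is to establish $\prob(X_v=x\mid\mathbf{O})\approx\alpha_{d(v,w_x)}/L_{d(v,w_x)}$ (with $d(\cdot,\cdot)$ the directed distance), where $(\alpha_\ell)_\ell$ is nonincreasing, $\alpha_0=\Theta(p)$ (essentially the chance the source's first stem hop lands on a spy), and $\sum_\ell\alpha_\ell=1$. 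It follows that $\max_x\prob(X_v=x\mid\mathbf{O})$ is attained at a transaction whose reporter is the nearest honest ``reporter'' node downstream of $v$; it equals $\alpha_0$ when $v$ is itself a reporter, and never exceeds $\alpha_0$.

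Then I would split $\sum_{v}\max_x\prob(X_v=x\mid\mathbf{O})$ into reporter nodes and non-reporter nodes. Each reporter contributes $\alpha_0=\Theta(p)$, and the expected number of reporters is $\lesssim 2p\tilde{n}$ (a node can report only if it owns a spy out-neighbor), which produces the $6p^{2}+O(p^{3})$ term, the cubic remainder absorbing non-tree-like neighborhoods and other second-order events. For a non-reporter $v$ I would charge it to its nearest downstream reporter $w$: at most $L^{(w)}_\ell$ nodes are charged to $w$ at depth $\ell$, contributing at most $L^{(w)}_\ell\cdot(\alpha_\ell/L^{(w)}_\ell)=\alpha_\ell$ in total, and — because reporters occur at density $\Theta(p)$ while downstream balls grow geometrically — only depths $O(\log(1/p))$ actually arise, so the total charge to $w$ is $\sum_{\ell\ge1}\alpha_\ell$ truncated at depth $O(\log(1/p))$, which is of order $p\log(1/p)$. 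I would then match this against $\tilde{n}\,\mathbf{D}_\mathtt{FS}=\E[\sum_v\mathds{1}(w_{X_v}=v)/t_v]$: the number $t_v$ of transactions first reported by $v$ concentrates around a constant times $W_v/q$, where $W_v$ is the martingale limit of the \emph{same} upstream Galton--Watson tree, so $\E[1/t_v\mid v\text{ reports its own transaction}]$ is likewise of order $p\log(1/p)$. Tracking the constants through this comparison bounds the non-reporter contribution by $8\,\mathbf{D}_\mathtt{FS}$, which with the reporter term yields the stated inequality.

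The main obstacle will be the second step: rigorously justifying that, with $H$ hidden, the posterior really factors through the directed distance to $w_x$ and spreads almost uniformly over the upstream Galton--Watson tree. This requires bounding the probability that short neighborhoods fail to be trees, handling transactions that reach the fluff phase before encountering a spy (so $w_x$ sits at the tail of the stem rather than adjacent to a spy), and decoupling each node's random routing choices from the random honest-component structure around it. Once that posterior picture is secured, the third step is a branching-process computation — the law of $t_v$, the martingale limit $W$, and integrals of $\min(1,p/w)$ against its density — routine apart from the care needed to keep the overall multiplicative constant below $8$.
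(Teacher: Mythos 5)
Your overall reduction to bounding $\E\bigl[\sum_{v}\max_{x}\prob(X_v=x\mid\mathbf{O})\bigr]$ is sound and matches the paper's starting point, but from there the paper takes a much more local route than yours, and your route has a genuine gap precisely where the theorem's constants come from. The paper never needs the full posterior profile $\alpha_\ell/L_\ell$ over the upstream Galton--Watson tree. Instead it conditions on the types (honest/adversarial) of $v$'s four immediate neighbors in the $4$-regular digraph, giving $2^4=16$ configurations: the $\binom{4}{2}+\binom{4}{3}+\binom{4}{4}$ configurations with two or more adversarial neighbors are bounded trivially by precision $1$ and contribute exactly the $6p^2+4p^3+p^4$ term; the all-honest configuration makes $v$ indistinguishable from at least $(1-5p)n$ other nodes and is negligible; and each of the $4$ configurations with exactly one adversarial neighbor is handled by an exchangeability argument (conditioned on the set $W_v$ of the $J_v+1$ candidates whose messages all funnel through $v$, the observations are symmetric under permuting their message assignment, so $\max_x\prob(X_v=x\mid\cdot)\le 1/(J_v+1)$) followed by the identity $\E[1/(F_v+1)]\le 2\mathbf{D}_\mathtt{FS}(v)/p$, which holds because with probability exactly $p/2$ the first-spy estimator's precision at $v$ \emph{equals} $1/(F_v+1)$ (namely when $v$'s chosen outgoing edge for its own transaction points at a spy). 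Four configurations times $2\mathbf{D}_\mathtt{FS}(v)$ gives the $8\mathbf{D}_\mathtt{FS}$. The constants $8$ and $6$ are thus artifacts of this particular enumeration, not of any branching-process computation.

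The gap in your proposal is twofold. First, the step you yourself flag as ``the main obstacle''---that with $H$ hidden the posterior factors through directed distance to $w_x$ and is uniform on each level set of the upstream tree---is substantially stronger than what is needed and is not established; the paper's Lemma on $1/(J_v+1)$ is a pure symmetry statement about the candidate set and sidesteps the tree-likeness, fluff-phase, and routing-decoupling issues you list. Second, and more fatally for this particular statement, your reporter/non-reporter decomposition does not align with the decomposition that produces $8$ and $6$: your reporter term would naturally give a coefficient near $2$ on $p^2$ (probability $\approx 2p$ of owning a spy out-neighbor times $\alpha_0=\Theta(p)$), not $6$, and your comparison of the non-reporter charge to $\mathbf{D}_\mathtt{FS}$ via concentration of $t_v$ around $W_v/q$ cannot deliver a clean multiplicative constant---$\E[1/t_v]$ is dominated by the small-$t_v$ atoms of a subcritical branching law, so concentration is the wrong tool, and ``tracking the constants \dots below $8$'' is asserted rather than argued. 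As written, the proposal would at best yield $\mathbf{D}_\mathtt{OPT}\le C_1\mathbf{D}_\mathtt{FS}+C_2p^2+O(p^3)$ for unspecified $C_1,C_2$, which is weaker than the stated theorem.
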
 
\emph{(Proof in Appendix~\ref{apx: dandelionpp})} The proof of this result relies on bounding the amount of information that a single node can pass to the adversary, and enumerating the local graph topologies that a single node can see. 
\begin{figure*}[t]
    \centering
    \begin{minipage}{.25\textwidth}
    \centering
  \includegraphics[width=\linewidth]{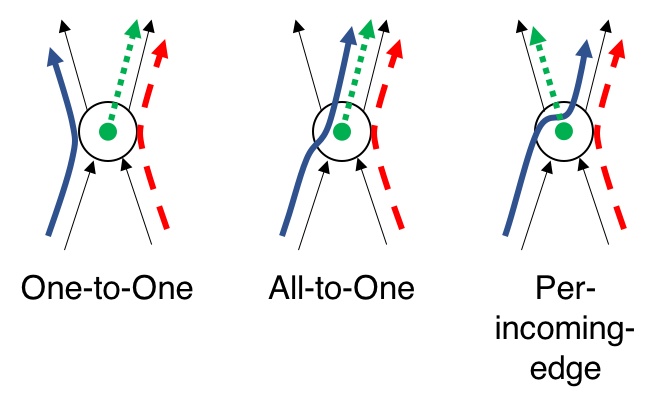}
  \caption{Pseudorandom forwarding. 
  Thick, colored lines denote forwarding rules for incoming edges and the node's own transactions (green dotted line).}
  \label{fig:forwarding}
    \end{minipage}
    \hspace{0.05in}
    \begin{minipage}{.33\textwidth}
    \centering
  \includegraphics[width=\linewidth]{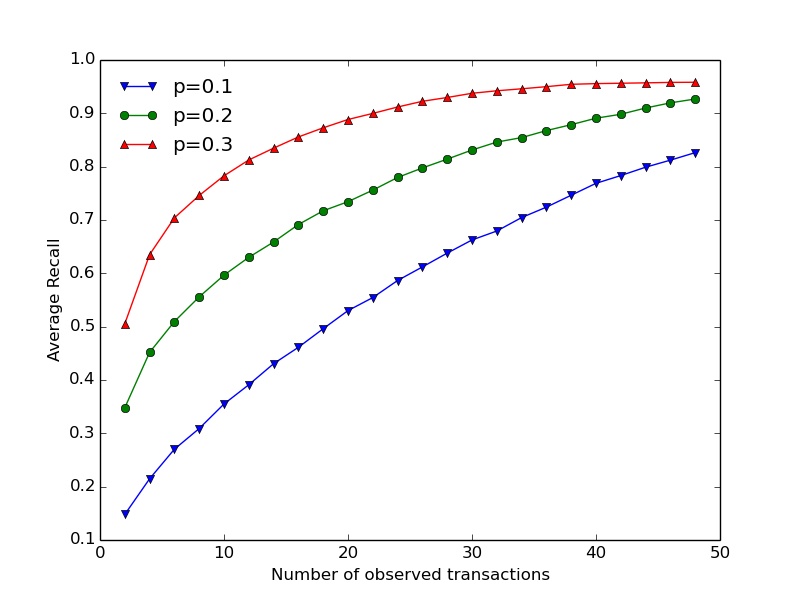}
  \caption{Recall vs. number of transactions per node in random
   $4$-regular graphs.} \label{fig:int_attack}
  \end{minipage} 
     \hspace{0.05in}
    \begin{minipage}{.33\textwidth}
%\end{figure}
%\begin{figure}[t]
    \centering
 \includegraphics[width=\linewidth]{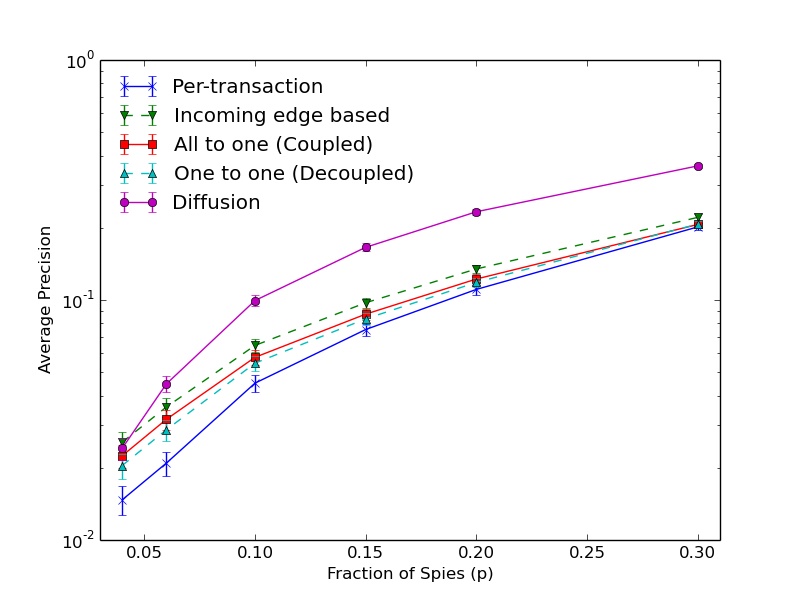}
  \caption{First-spy precision for 4-regular graphs under various forwarding schemes.}
  \label{fig:pseudorandom}
  \end{minipage}
\end{figure*}
This result says that precision under the first-spy estimator is within a constant factor of optimal ($p^2$ is a fundamental lower bound on the maximum expected precision of any scheme \cite{dand}). 
%\red{Since the first-spy estimator is close to the performance of line graphs, we conclude that the optimal estimator must also be close to line graphs. <- what does this mean?} 
Thus the precision gain from line graphs to 4-regular graphs is indeed small, as Figure \ref{fig:4regular} suggests. 
%The proof proceeds by bounding the expected precision for a given node $v$; to do this, we rely on the Markov property of \Algo~spreading.
%The intuition is that conditioned on: (1) the number of messages that reach $v$ prior to reaching a spy node, and (2) the set of corresponding sources, any mapping of transactions to sources would have been equally likely to generate the observed outputs.
%Hence, no estimator can achieve a precision significantly higher than one that randomly assigns sources to transactions among that set.  
These observations motivate the use of 4-regular graphs, specifically in lieu of higher-degree regular graphs. 
First, 4-regular graphs have similar precision to complete graphs when the graph is known (i.e., the red dotted line is close to the middle black solid line, which is a lower bound on precision for regular graphs), 
but they sacrifice minimal precision when the graph is unknown.
Hence, they provide robustness to graph-learning. %adversaries that are somehow able to learn the graph.
%This allows nodes to refresh their anonymity graph edges at a much slower rate. 
%Second, we are able to theoretically characterize expected precision on 4-regular graphs (Theorem \ref{thm: dregular result}).
%While a similar result may hold for higher-degree regular graphs, the techniques we used to prove Theorem \ref{thm: dregular result} would likely result in large constants, thus limiting the practical utility of such a result.
%Finally, 4-regular graphs are easier to maintain than higher-degree regular graphs, due to their edge sparsity.
%Thus we propose to use 4-regular graphs in \Algopp.

\noindent \textbf{Lesson:} \emph{4-regular anonymity graphs provide more robustness to graph-learning attacks than line graphs. }

Note that constructing an exact 4-regular graph with a fully-distributed protocol is difficult in practice; we will discuss how to construct an \emph{approximate} 4-regular graph in \S\ref{sec: const graph}, and how this approximation affects our anonymity guarantees.

\subsection{Intersection Attacks}
\label{sec:intersection}

The previous section showed that $4$-regular graphs are robust to
deanonymization attacks, even when the adversary knows the graph. 
However, those results assume that each user generates exactly \emph{one} transaction per epoch.
%real users generate multiple transactions.
In this section, we relax the one-transaction-per-node assumption, and allow nodes to generate an arbitrary number of transactions.
\Algo~  specifies that each transaction should take an \emph{independent} path over the anonymity graph. 
In our case, this implies that if a node generates multiple transactions, each one will traverse a random walk (of geometrically-distributed length) over a 4-regular digraph.  
Under such a model, adversaries can aggregate metadata from multiple, linked transactions to launch intersection attacks. 
We first demonstrate \Algo's vulnerability to intersection attacks,  and then
provide an alternative propagation technique.

\noindent \textbf{Attack.} Suppose the adversary knows the graph.
%The adversary can then simulate dandelion spreading originating at each honest node.  
For each honest source $v \in V_H$, each of its transactions will reach one of the $np$ spy nodes first. 
In particular, each spy node has some fixed probability of being the first spy for transactions originating at $v$, given a fixed graph topology.
We let $\Psi_v$ denote the pmf of the first spy for transactions starting at $v$; the support of this distribution is the set of all spies.
We hypothesize that in realistic graphs, for $v\neq w$, $\Psi_v\neq \Psi_w$.

This hypothesis suggests a natural attack, which consists of a training phase and a test phase. 
In the training phase, for each candidate source, the adversary simulates dandelion spreading $N$ times. 
The resulting empirical distribution of first-spies for a given source determines the adversary's estimate of $\Psi_v$.
%As $N \rightarrow \infty$, $$
The adversary computes such a signature for each candidate source.
%Because of the relative sparsity of the graph, each source node will 
%estimate the expected fraction of fresh transactions that will reach each spy
%node during the anonymity phase. This method results in a distribution over the
%spies for each honest node, defined as a user's signature.

At testing, the adversary gets to observe $m$ transactions from a given node, $m \ll N$. 
The adversary again computes the empirical distribution $\hat \Psi$ of first-spies from those $m$ observations. 
%Since $m$ is small, this estimate will be noisy. 
The adversary then classifies $\hat \Psi$ to one of the $|V_H|$ classes (i.e., source nodes) by matching $\hat \Psi$ to the closest $\Psi_i$ from training.
For each trial, $\hat \Psi$ and $\Psi_v$ are matched by maximizing the likelihood of signatures (i.e. by minimizing the KL divergence). 
%In our experiments, we use KL-divergence as a distance metric. 
%collects analogous signatures by aggregating over
%pseudonym the transactions observed at each spy node. An adversary can
%deanonymize users by matching the pre-estimated signatures with the observed
%pseudonym signatures.

Figure~\ref{fig:int_attack} shows the recall for such an attack on a 
$4$-regular graph of size $1000$ with various fractions of spies, as a function of the number of transactions observed per node. 
%The two trend
%lines correspond to signatures trained on $N=15000$ and $N=35000$ simulations.
By
observing  $10$ transactions per node, the recall exceeds $0.8$ when 30\% of nodes are adversarial.
This suggests that independent random forwarding leads to serious intersection attacks. 
Hence, a naive implementation of \Algo~critically damages anonymity properties.

%\begin{figure}[t]
%  \centering
%  \includegraphics[width=.45\textwidth]{figures/intersection_attack_regular_graph}
%  \caption{
%   Average recall as a function of number of transactions per node in random,
%   $4$-regular graphs.
%  }
%  \label{fig:int_attack}
%\end{figure}

\noindent \textbf{Solution.} To address these attacks, we consider  forwarding mechanisms with \emph{correlated} randomness.
The key insight is that messages from the same source should traverse the same path; this prevents adversaries from learning additional information from multiple transactions.
However, a naive implementation (e.g., adding a tag  that identifies transactions from the same source, and sending all such transactions over the same path) makes it trivial to infer that otherwise unlinkable transactions originate from a common source.
Hence, we consider three forwarding schemes that pseudorandomize the forwarding trajectory.
In ``one-to-one" forwarding, each node maps each of its inbound edges to a unique
outbound edge; messages in stem mode only get relayed according to this mapping (Fig. \ref{fig:forwarding}). 
This one-to-one forwarding captures the `cable' behavior described in Figure \ref{fig:twistedpair}.
Each node also chooses exactly one outbound edge for all of its own  transactions.
Here we randomize not by source, but by incoming edge (for relayed transactions).
Similarly,  ``all-to-one" forwarding maps all inbound edges to the same
outbound edge,  
and ``per-incoming-edge" forwarding maps each 
inbound edge to a uniform outbound edge (with replacement).

%\begin{figure}[t]
%    \centering
%  \includegraphics[width=2.3in]{figures/forwarding}
%  \caption{Pseudorandom forwarding mechanisms.}
%  \label{fig:forwarding}
%\end{figure}

Perhaps counterintuitively, these spreading mechanisms alter the anonymity guarantees even when the graph is unknown.
Our next result---the second main theoretical contribution of this paper---suggests that one-to-one forwarding has near-optimal precision when the adversary does not know the graph, 
\emph{even in the face of intersection attacks} (recall the lower bound of $p^2$, where $p$ is the fraction of spies \cite{dand}). 
The other two mechanisms do not.

\begin{restatable}{thm}{intersectionthm} \label{thm:intersection}
Suppose the graph is unknown to the adversary, each node generates an arbitrary number of transactions, and the adversary can link transactions from the same user.\footnote{If the user is using the same pseudonym for different transactions, this linkage is trivial. However, even when users use fresh keys for different transactions, practical attacks have been able to link the pseudonyms \cite{fistful}.} 
The expected precision of the precision-optimal estimator for the
one-to-one ($\mathbf{D}_{\mathtt{OPT-OtO}}$), all-to-one
($\mathbf{D}_{\mathtt{OPT-AtO}}$), and per-incoming-edge
($\mathbf{D}_{\mathtt{OPT-PIE}}$) message forwarding schemes are:
\begin{eqnarray}
\mathbf{D}_{\mathtt{OPT-OtO}} & = & \Theta\Big(p^2\log\Big(\frac{1}{p}\Big)\Big) \\
\mathbf{D}_{\mathtt{OPT-AtO}} & = & \Theta(p) \\
\mathbf{D}_{\mathtt{OPT-PIE}} & = & \Omega(p).
\end{eqnarray}
\end{restatable}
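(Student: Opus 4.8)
The plan is to analyze the three schemes with one shared toolkit and then separate them by the structure of their forwarding rules. Two facts hold for \emph{any} forwarding rule on the random $4$-regular anonymity graph, with the graph unknown to the adversary. First, the precision-optimal estimator $\mathtt{M}^\star$ is a maximum-weight bipartite matching (Theorem~3 of \cite{dand}), hence injective, so $\sum_{u\in V_H}\mathds{1}(\mathtt{M}^\star(X_u)=v)\le 1$ and therefore $D(v)=R(v)$ pointwise under $\mathtt{M}^\star$; combining this with recall-optimality of the first-spy estimator \cite{dand} and the elementary fact that a source's first stem hop lands on a spy with probability $p$ (independent of the forwarding rule), we obtain a universal upper bound $\mathbf{D}_{\mathtt{OPT}}=\mathbf{R}(\mathtt{M}^\star)\le \max_{\mathtt{M}}\mathbf{R}(\mathtt{M})=\mathbf{R}_{\mathtt{FS}}=p$. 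Second, trivially $\mathbf{D}_{\mathtt{OPT}}\ge \mathbf{D}_{\mathtt{FS}}$, so every lower bound reduces to a first-spy computation.

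For one-to-one forwarding I would first argue that the intersection attack collapses. Since every node relays along a fixed bijection between its two incoming and two outgoing anonymity edges and emits all of its own transactions on one fixed outgoing edge, the edge set of the anonymity graph decomposes into edge-disjoint closed trails (the ``cables''); all transactions of a source $v$ traverse an identical cable segment, and---because transaction times are hidden---observing $m$ linked transactions of $v$ is informationally equivalent to observing one. The joint deanonymization instance is therefore in bijection with the one-transaction-per-node instance, so the one-to-one analogue of Theorem~\ref{thm: dregular result} gives $\mathbf{D}_{\mathtt{OPT-OtO}}\le 8\,\mathbf{D}_{\mathtt{FS-OtO}}+O(p^2)$. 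It remains to compute $\mathbf{D}_{\mathtt{FS-OtO}}$: locally a cable looks like a line, a node $w$ is a first-spy delivering node exactly when its next cable hop is a spy (probability $p$), and the transactions it then delivers are the run of honest, non-diffusing predecessors of $w$ on its cable back to the previous ``stopper'' (spy or diffuser), a run that is geometric of mean $\Theta(1/p)$ and has length one only with probability $\Theta(p)$. Hence $\mathbb{E}[D_{\mathtt{FS}}(w)]=p\cdot\mathbb{E}[1/(\text{run length})]=p\cdot\Theta(p\log(1/p))$, so $\mathbf{D}_{\mathtt{FS-OtO}}=\Theta(p^2\log(1/p))$, and combined with the two displayed bounds, $\mathbf{D}_{\mathtt{OPT-OtO}}=\Theta(p^2\log(1/p))$.

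For all-to-one and per-incoming-edge forwarding the rule instead pushes everything arriving at a node out along effectively one edge, so the induced routing structure on the anonymity edges is a \emph{functional} graph in which each of a node's two incoming edges carries traffic into it with probability bounded away from $1$; the in-trees of this graph are Galton--Watson trees, thinned by the stopper rate, with mean offspring at most $1+O(p)$. The key point is that, conditioned on a node $w$ being a first-spy delivering node, its in-tree is empty with probability $\Theta(1)$ (the root simply has no children), so with \emph{constant} probability $w$ hands only its own transactions to the spy---in contrast to the $\Theta(p)$ probability on a cable. Therefore $\mathbf{D}_{\mathtt{FS}}\ge \mathbb{P}(w\text{ delivers its own})\cdot\mathbb{P}(\text{in-tree empty})=\Theta(p)$, giving $\mathbf{D}_{\mathtt{OPT-AtO}},\mathbf{D}_{\mathtt{OPT-PIE}}=\Omega(p)$; for all-to-one this meets the universal bound $\mathbf{D}_{\mathtt{OPT}}\le p$, so $\mathbf{D}_{\mathtt{OPT-AtO}}=\Theta(p)$ (the same $\le p$ bound in fact pins $\mathbf{D}_{\mathtt{OPT-PIE}}=\Theta(p)$, though $\Omega(p)$ already certifies that per-incoming-edge is far from the $p^2$ optimum).

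The step I expect to be the main obstacle is the one-to-one \emph{upper} bound: porting the machinery behind Theorem~\ref{thm: dregular result}---bounding the amount of information a single node can leak and enumerating the local topologies it can witness---to the correlated ``cable'' forwarding while simultaneously permitting arbitrarily many linked transactions per node. Two points require genuine care: making the ``linked transactions are redundant'' argument rigorous at the level of posteriors, so that it constrains the \emph{optimal} estimator rather than just first-spy; and controlling the $o(1)$ corrections to the line-like / locally-tree-like approximation, since a cable may revisit a vertex and couple a node's forward and backward runs. The remaining ingredients---the universal $\mathbf{D}_{\mathtt{OPT}}\le p$ bound, the geometric-run identity $\mathbb{E}[1/G]=\Theta(p\log(1/p))$ for $G\sim\mathrm{Geom}(\Theta(p))$, and the fact that a near-critical Galton--Watson tree is empty with constant probability---are standard.
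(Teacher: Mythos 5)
Your proposal reproduces the skeleton of the paper's proof almost exactly: the paper also (i) reduces the many-transactions case to the one-transaction case by arguing that pseudorandom forwarding makes repeated observations redundant (its Lemma~\ref{lem:inter}), (ii) invokes a one-to-one analogue of Theorem~\ref{thm: dregular result} to get $\mathbf{D}_{\mathtt{OPT}}\le C\,\mathbf{D}_{\mathtt{FS}}+O(p^2)$ (its Lemma~\ref{lem:fs_pseudorandom}), (iii) computes $\mathbf{D}_{\mathtt{FS-OtO}}$ by following the line of cable predecessors back to the first spy and taking $\mathbb{E}[1/|W_v|]$ of the resulting (thinned) geometric, obtaining exactly $\tfrac{2p^2}{1-p}\log\bigl(\tfrac{1+p}{2p}\bigr)$, and (iv) lower-bounds $\mathbf{D}_{\mathtt{FS-AtO}}$ by the constant-probability event that the pruned in-tree at the delivering node is empty ($\mathbb{P}(|W_v|=1\mid s\in V_A)=\tfrac14(1+p)^2$). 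The one step where you genuinely diverge is the all-to-one (and per-incoming-edge) \emph{upper} bound: you get $\mathbf{D}_{\mathtt{OPT}}\le \mathbf{R}_{\mathtt{OPT}} = p+O(1/n)$ universally from the fact that the precision-optimal estimator is an injective matching, so its precision equals its recall pointwise; the paper instead bounds the first-spy precision directly by modeling the upstream neighborhood as a subcritical Galton--Watson tree, counting pruned binary trees with Catalan numbers and applying Stirling's inequality to obtain $\mathbf{D}_{\mathtt{FS-AtO}}\le \tfrac{e^2}{2\pi}p+O(p^2)$. Your route is shorter, avoids the tree-size computation entirely, and as you note also pins per-incoming-edge at $\Theta(p)$ rather than only $\Omega(p)$; the paper's route gives explicit constants and does not rely on the matching characterization carrying over to the linked-transaction setting. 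Finally, the step you flag as the main obstacle---making the reduction to one transaction rigorous at the level of posteriors and porting the $\mathbf{D}_{\mathtt{OPT}}\le C\,\mathbf{D}_{\mathtt{FS}}+O(p^2)$ bound to correlated forwarding---is precisely the step the paper itself leaves unproved (both Lemma~\ref{lem:inter} and Lemma~\ref{lem:fs_pseudorandom} are stated with proofs omitted for brevity), so your write-up is at parity with the published argument there rather than behind it.
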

\noindent \emph{(Proof in Appendix \ref{sec:intersection proof})} 
This analysis exploits the tree-like neighborhoods of random 4-regular graphs. 
We first build a branching process that captures each routing mechanism, and then analyze the precision-optimal estimators accordingly.

Figure~\ref{fig:pseudorandom} illustrates simulated first spy precision values
for each of these techniques, as well as diffusion (the status quo). 
Simulations were again run on a 100-node graph with an exact 4-regular topology;  spies and sources were selected uniformly at random. 
`Per-transaction' forwarding denotes the baseline i.i.d. random forwarding. 
This figure is plotted for the special case of one transaction per node; if nodes were to generate arbitrarily many transactions, the pseudorandom lines would stay the same (all-to-one, one-to-one, and per-incoming-edge), 
whereas Figure \ref{fig:int_attack} suggests that the per-transaction curve could increase arbitrarily close to 1. %, depending on traffic patterns. 
The same is true for diffusion, as illustrated in prior work \cite{wang2014rumor}.
In addition, when the graph is unknown and there is only one transaction per node, Figure \ref{fig:pseudorandom} suggests that one-to-one forwarding achieves precision values that are close to the lower bound of per-transaction forwarding.
%While these schemes may result in lower message
%mixing (and thus a higher precision for an adversary), they can make the network
%more robust to intersection attacks.
%An important question is what happens when the graph is known.
The following corollary bounds the jump in precision when the graph is known to the adversary. % under such a scenario.
\begin{prop}
If the adversary knows the graph \emph{and} internal routing decisions, then the precision-optimal estimator for one-to-one forwarding has an expected precision of $O(p)$.
\label{cor:intersection}
\end{prop}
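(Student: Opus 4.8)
The plan is to reduce the known-graph, known-routing case for one-to-one forwarding to an essentially deterministic routing problem, and then invoke the same matching-based precision argument used in \cite{dand} for line graphs. The key observation is that once the adversary knows both the 4-regular anonymity graph $H$ and every honest node's internal one-to-one edge map (the assignment of each inbound edge to a unique outbound edge, plus the distinguished outbound edge used for each node's own transactions), the stem trajectory of any transaction is fully determined up to the single geometric coin flip that decides where the fluff phase begins. Concretely, a transaction originating at honest node $v$ follows the unique ``cable'' emanating from $v$'s own-transaction outbound edge: at each subsequent honest relay, the incoming edge is mapped to exactly one outgoing edge, so the entire path is a deterministic walk in $H$, truncated at a geometric number of hops (parameter $q$) or when it first hits a spy. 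Thus the only randomness the adversary faces, for a fixed source $v$, is (i) the stopping location of the geometric walk and (ii) which spy is reached.

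First I would formalize the ``cable decomposition'': because one-to-one forwarding is a bijection on each node's edge set, the honest subgraph of $H$ together with the routing maps decomposes the edge set into vertex-disjoint directed paths and cycles (``cables''). Each honest node $v$ injects its own transactions onto one specific cable. The first spy encountered along $v$'s cable — call it $\mathtt{fs}(v)$ — is therefore a deterministic function of the graph and routing maps, not random (the geometric truncation only affects whether the transaction reaches $\mathtt{fs}(v)$ at all or fluffs earlier, but conditioned on reaching a spy, it is always $\mathtt{fs}(v)$). Next I would argue that the posterior $\prob(X_v = x \mid \mathbf{O})$ is supported only on sources $v$ whose cable's first spy matches the spy that actually observed $x$, and among those, is spread roughly uniformly (each contributes weight proportional to the probability its geometric walk survived to reach that spy, which depends only on the honest-hop distance and is within constant factors across the relevant candidates). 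This is exactly the structure that appeared in the line-graph analysis of \cite{dand}: a set of candidate sources all funneling deterministically into a common first spy.

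Then the precision bound follows the template of \cite[Proposition 4]{dand}. The expected number of honest sources sharing a given first spy is $\Theta(1/p)$ in a random 4-regular graph (the cables have expected length $\Theta(1/p)$ before hitting one of the $np$ spies among $n$ nodes — a geometric-type hitting-time calculation). The precision-optimal estimator is a maximum-weight matching from transactions to nodes; with $\Theta(1/p)$ candidates per spy and roughly uniform posteriors, the matching can correctly identify each source with probability $\Theta(1)$ for those transactions whose geometric walk actually reached a spy. Since a transaction reaches a spy (rather than fluffing among honest nodes first) with probability $\Theta(1)$ as well — and in particular the source's own first hop lands on a spy with probability $p$, contributing the dominant deterministic term — the expected precision at a typical node is $\Theta(p)$, giving $\mathbf{D} = O(p)$ as claimed. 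I would note that the $p^2 \log(1/p)$ term from the unknown-graph case disappears precisely because knowing the routing maps collapses the exponentially-many candidate paths (which created the extra plausible deniability) down to one deterministic path per source.

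The main obstacle I anticipate is handling the cable-length distribution rigorously on a \emph{random} 4-regular graph rather than a line: I must show that the directed walk defined by the one-to-one maps behaves like a random walk whose hitting time of the spy set is $\Theta(1/p)$ with the right concentration, and that cables do not collide or self-intersect in ways that inflate the candidate sets super-linearly. On a line this was immediate; here it requires the locally-tree-like property of random 4-regular graphs (as in the proof of Theorem \ref{thm:intersection}) to argue that, up to depth $O(\log n)$, the honest cable looks like a path in a tree, so the hitting-time analysis goes through. A secondary subtlety is that the adversary also knows cables that pass through spies and re-emerge into the honest network, but since a transaction is deanonymized (in the first-spy sense) at the first spy, downstream structure does not help, so it can be discarded — I would make this truncation explicit to keep the matching argument clean.
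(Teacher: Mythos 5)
Your high-level strategy matches the paper's: one-to-one forwarding turns the honest part of the anonymity graph into deterministic cables, the first spy on each source's cable is fixed once the graph and routing maps are known, and precision is governed by how many candidate sources funnel into each spy. The paper makes this precise by letting $S_v$ denote the set of fresh messages that an honest relay $v$ delivers to its spy successor, noting that the messages inside $S_v$ are exchangeable given the observations (so no estimator can place posterior mass more than $1/|S_v|$ on the true source), and then double-counting: $\sum_{u\in V_H} 1/|S_{v(u)}|$ is exactly the number of nonempty sets $S_v$, which is at most $2pn$ because each of the $pn$ spies has only two inbound edges in the 4-regular anonymity graph. Dividing by $(1-p)n$ gives $O(p)$ immediately. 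This exact counting identity sidesteps both obstacles you flag at the end — no hitting-time concentration, no control of cable self-intersections, and no locally-tree-like assumption is needed for the upper bound. (One structural difference: the paper's worst case assumes the adversary knows the relay maps but \emph{not} the edge each node uses for its own transactions, introducing the event $\mathcal{A}_S$ with $\mathbb{P}(\mathcal{A}_S)\in[\tfrac12,1]$; your stronger assumption that the whole path is known only helps the adversary, so it is harmless for an upper bound.)

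The genuine problem is in the middle of your third paragraph. You claim that ``with $\Theta(1/p)$ candidates per spy and roughly uniform posteriors, the matching can correctly identify each source with probability $\Theta(1)$.'' This is false and points in the wrong direction for the bound you are trying to prove: by exchangeability within a ward, any estimator (matching or otherwise) identifies the true source of a given transaction with probability at most $1/|W_v|$, i.e.\ $\Theta(p)$ when the ward has $\Theta(1/p)$ members; if your $\Theta(1)$ claim were true the expected precision would be $\Theta(1)$, contradicting the proposition. Your next sentence then recovers $\Theta(p)$ by appealing to the probability-$p$ first-hop-spy event, so the accounting never multiplies out consistently. To repair the argument, replace that passage with the exchangeability bound $\mathbb{E}[D(u)\mid G,\mathbf{O}] \le 1/|S_{v(u)}|$ followed by the double-counting step above (which also handles the size-biasing issue correctly — the per-transaction average of $1/|S_v|$ is the number of wards over the number of honest nodes, not $1/\mathbb{E}[|S_v|]$). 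With that substitution the proof closes and coincides with the paper's.
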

\emph{(Proof in Appendix \ref{app:intersection known})}

An adversary that does not know internal forwarding decisions for all nodes has lower
precision, because it must disambiguate between exponentially many paths for each transaction. 
Despite requiring completely new analysis, the 4-regular graph results for
Theorem \ref{thm:intersection} and Proposition \ref{cor:intersection} are
order-equivalent to the line graph results in \cite{dand}. 
This raises an important question: are 4-regular graphs really better than line graphs?
In an asymptotic sense, no. 
However, in practice, the story is more nuanced, and depends on the adversarial model. 
For adversaries that lack complete knowledge of the graph and each node's routing decisions, we observe constant-order benefits, which become more pronounced once we take into account the realities of approximating 4- and 2- regular graphs. 
This is explored in \S\ref{sec: const graph}.
However, when the adversary has full knowledge of the graph \emph{and} each node's internal, random routing decisions, the combination of 4-regular graphs and one-to-one routing has slightly higher (worse) precision than a line graph. 
We detail this tradeoff in Appendix \ref{app:tradeoffs}.
Another issue to consider is that line graphs do not help the adversary link transactions from the same source; meanwhile, on a 4-regular graph with one-to-one routing, two transactions from the same source in the same epoch will traverse the same path---a fact that may help the adversary link transactions.

Hence we recommend making the design decision between 4-regular graphs and line graphs based on the priorities of the system builders. 
If linkability of transactions is a first-order concern, then line graphs may be a better choice. 
Otherwise, we find that 4-regular graphs can give constant-order privacy benefits against adversaries with knowledge of the graph.
Overall, both choices provide significantly better privacy guarantees than the current diffusion mechanism. 
We also want to highlight that the remaining sections of this paper are agnostic to the choice of graph topology; the lessons apply equally whether one uses a 4-regular graph or a line graph.
%We will see in  that these benefits grow even larger once 
%\red{
%Nonetheless, we maintain that 4-regular graphs are preferable to line graphs because (1) the
%precision bound constants are smaller (as seen in \S\ref{sec: topology}), and
%(2) 4-regular graphs are more difficult to learn, since more information is needed to
%specify a 4-regular graph. Quantifying this fact is an interesting question for
%future work. 
%}

%\begin{figure}[t]
%  \centering
%  \includegraphics[width=.45\textwidth]{figures/approx_4_regular}
%  \caption{
%   Average first spy precision as a function of $p$ for random, approximate
%   $d$-regular graphs.
%  }
%  \label{fig:pseudorandom}
%\end{figure}

\vspace{0.05in}
\noindent \textbf{Lessons.} \emph{
%The difference between 4-regular and 2-regular anonymity graphs is smaller than it seemed in \S\ref{sec: topology}; 
If running Dandelion spreading over a 4-regular graph, use pseudorandom, one-to-one forwarding. If linkability of transactions is a first-order concern, then line graphs may be a more suitable choice.}

\begin{figure*}[t]
%    \centering
%    \begin{minipage}{.32\textwidth}
%    \centering
%    \includegraphics[width=\linewidth]{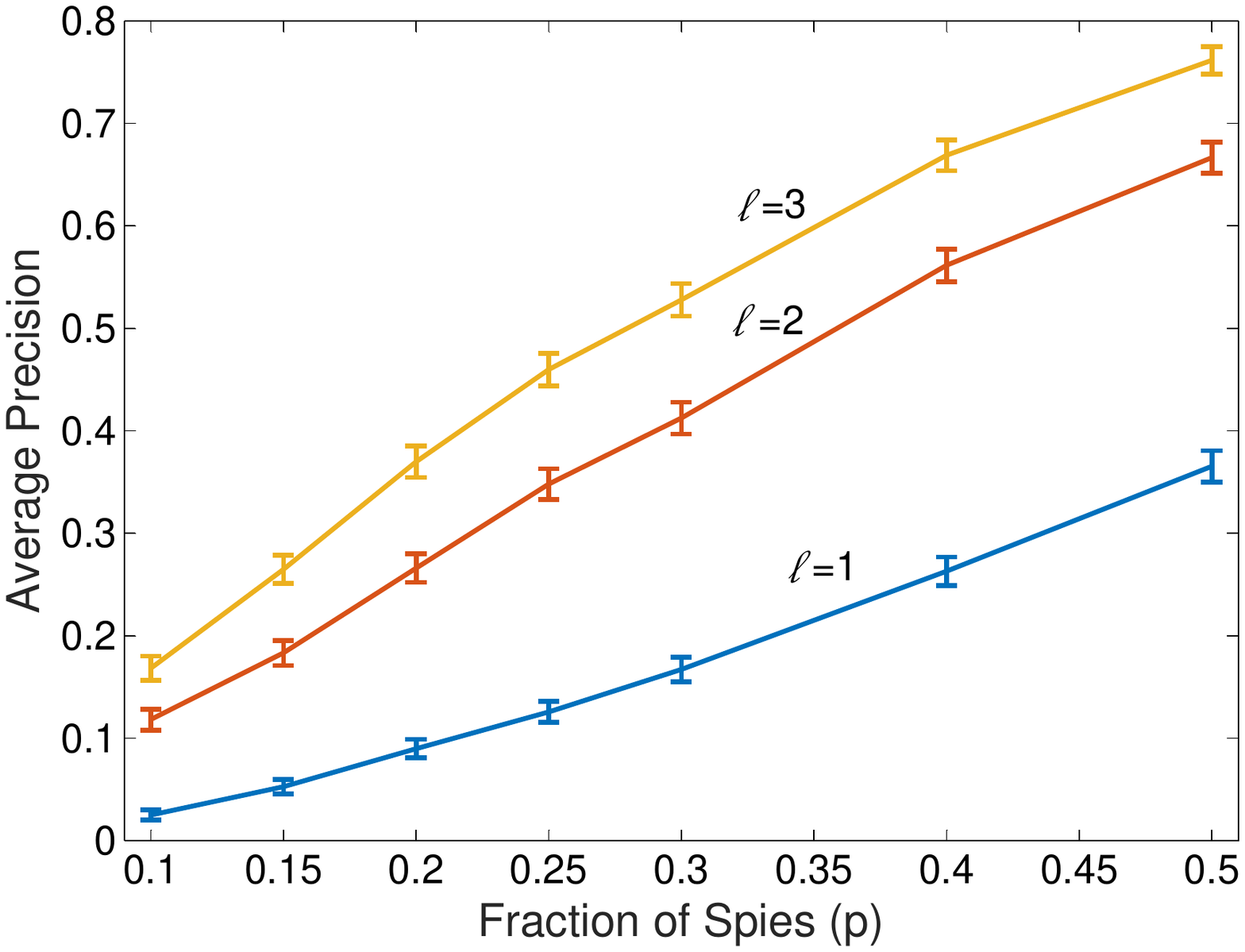}
%    \caption{Average \Algo~precision when Byzantine nodes lie about their degrees in the \textsf{ApxLine($\ell$)} graph-construction algorithm.}
%    \label{fig:k_choices}
%    \end{minipage}
%    ~
    \begin{minipage}{.32\textwidth}
%\end{figure}
%\begin{figure}[t]
    \centering
  \includegraphics[width=\linewidth]{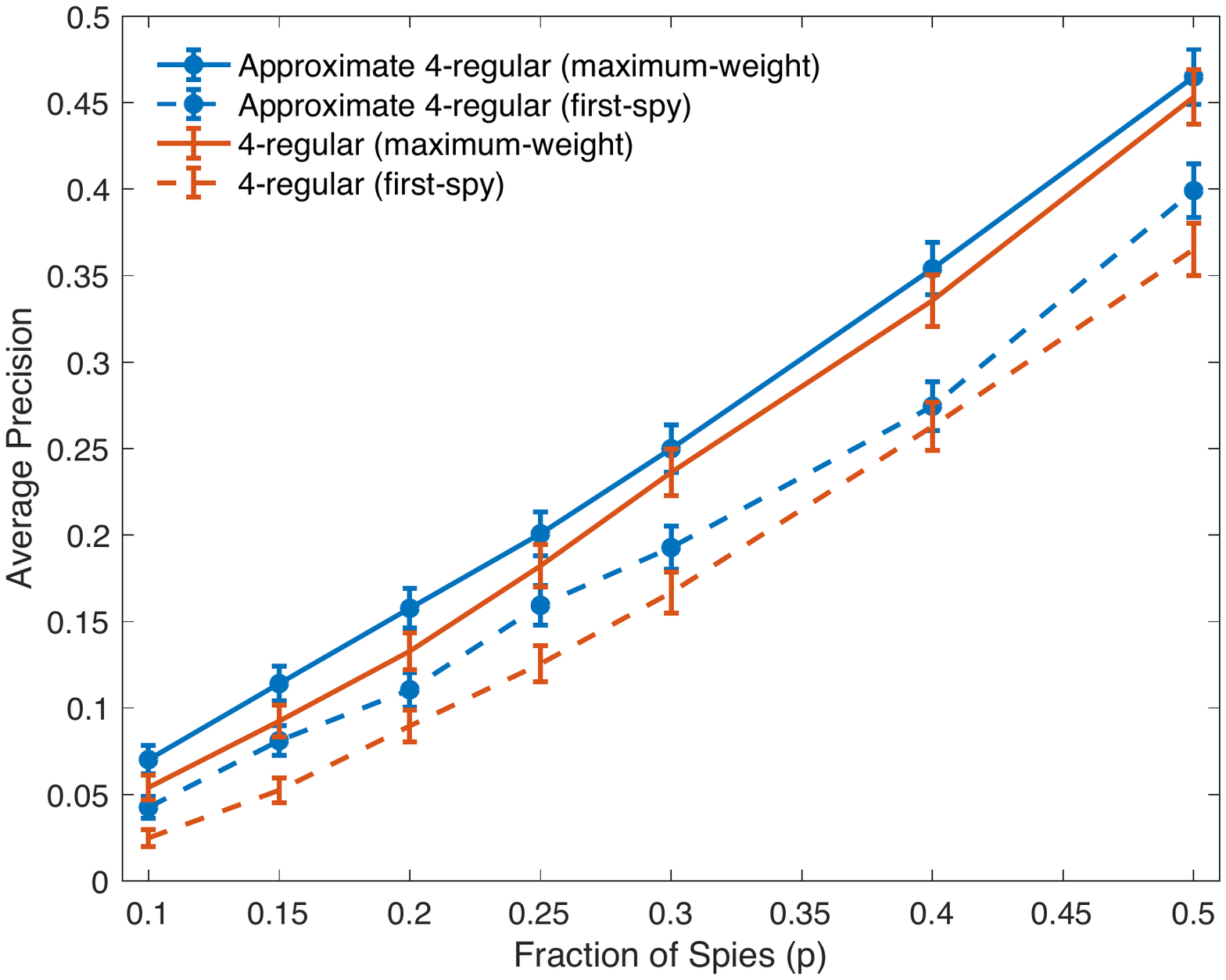}
  \caption{Average precision for approximate 4-regular graphs compared to exact 4-regular graphs.}
  \label{fig:approx_vs_reg}
  \end{minipage}
      \hspace{0.02in}
      \begin{minipage}{.32\textwidth}
    \centering
  \includegraphics[width=\linewidth]{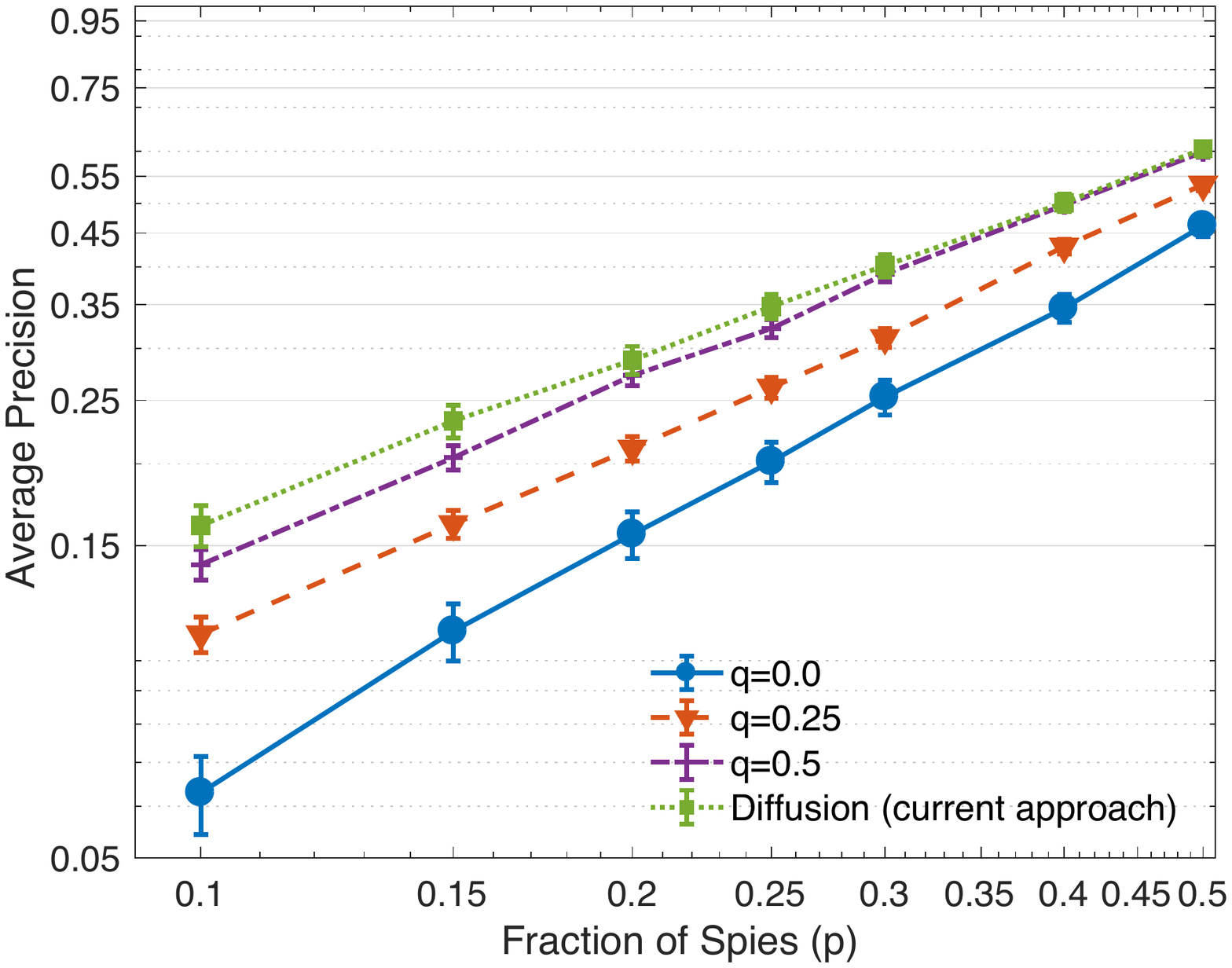}
  \caption{Honest-but-curious spies obey graph construction protocol.}
  \label{fig:behave}
    \end{minipage}
    \hspace{0.02in}
    \begin{minipage}{.32\textwidth}
%\end{figure}
%\begin{figure}[t]
    \centering
  \includegraphics[width=\linewidth]{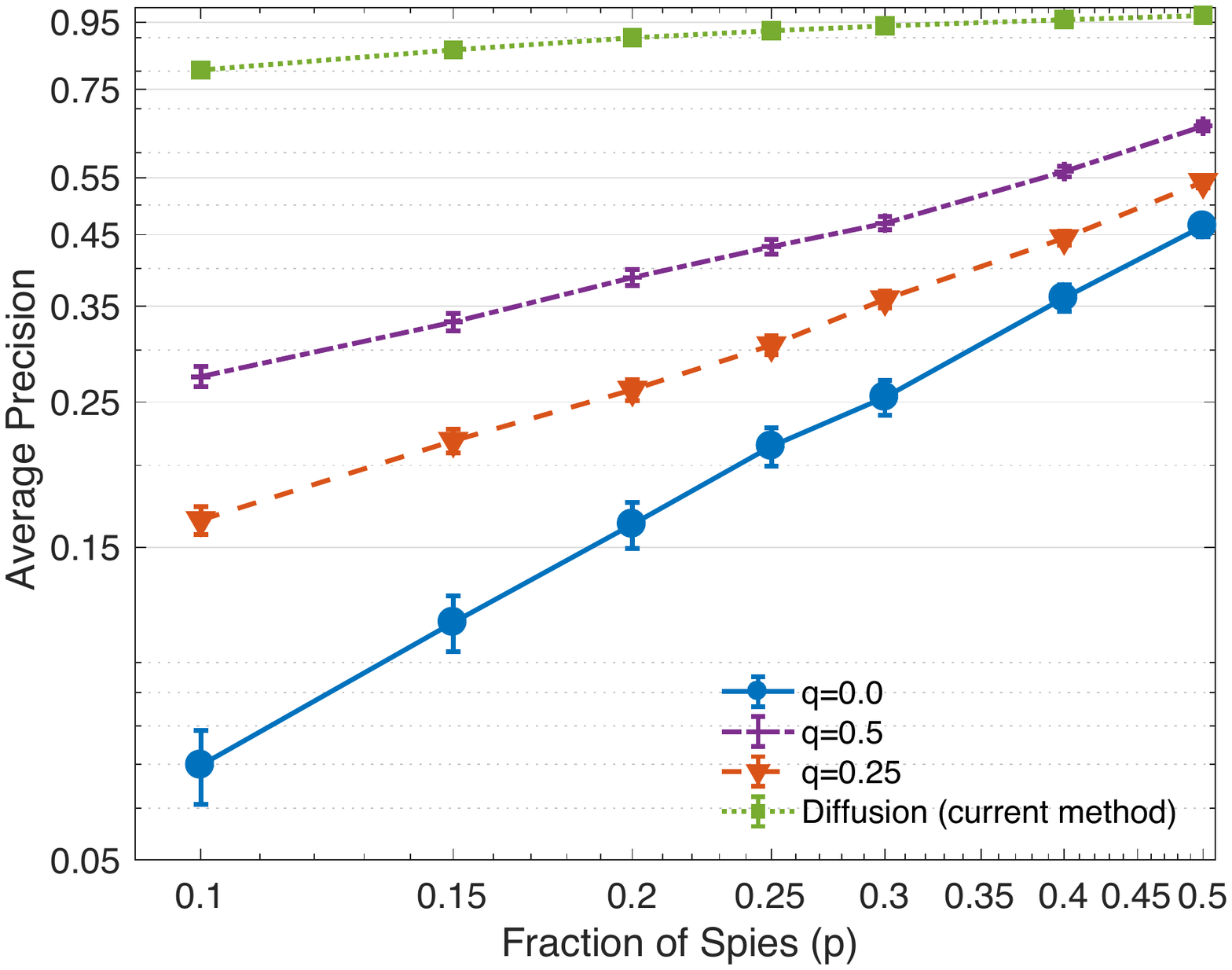}
  \caption{Malicious spies make outbound edges to every honest node.}
  \label{fig:misbehave}
  \end{minipage}
\end{figure*}

\subsection{Graph-Construction Attacks} \label{sec: const graph}
We have so far considered graph-learning attacks and intersection attacks. 
Another important aspect of \Algopp~is the anonymity graph construction, which should be fully distributed. 
\Algo~proposed an interactive, distributed algorithm (explained below) that constructs a randomized \emph{approximate} line graph. % where each node has an expected degree of two. 
%Approximate line graphs suffice for anonymity (compared to exact line graphs), provided the graph does not contain too many leaves. 
%To control the number of leaf nodes,  \Algo~requires each node to contact a random set of peers and retrieve local topology information from them;
%the node uses this topology information to decide which node it should connect to.
In this section, we study how Byzantine nodes can change the graph to boost their accuracy.
First, we show how to generate 4-regular graphs in the presence of Byzantine nodes. %, since \Algopp~does not use lines.
Next, we show how to choose $q$ (path length parameter) for robustness against Byzantine nodes. % manipulating the graph. 
%We describe a simple extension to \Algo's approach that can be used to generate approximately-$4$-regular graphs. 
%The second question, which represents the main contribution of this section, is how to deal with Byzantine nodes. % that disobey protocol to alter the underlying graph structure.
%%performs the requesting operation multiple times and then chooses the best peer to connect to.  
%%While such a policy can reduce the number of leaves in the constructed graph, Byzantine nodes can lie about their local topologies, 
%%thereby exerting undue influence on the graph topology. 
%We first show that Byzantine nodes can significantly degrade the anonymity properties of \Algo~ by lying during the interactive graph-construction protocol.
%Then, we propose a simplified graph-construction policy in \Algopp~ that favors robustness to Byzantine nodes over exactness of topology. 
%The anonymity penalty of this approximate construction is minimal when spy nodes are honest-but-curious, and the anonymity gains are significant when spy nodes are malicious. 
%We start with the graph-construction protocol from \Algo.

%\\  \\
%\noindent {\bf 
\subsubsection{Graph construction in \Algo~\cite{dand}} 
For any integral parameter $\ell>0$, \Algo~uses the protocol \textsf{ApxLine($\ell$)} to build a $\ell$-approximate line graph as follows:
\emph{(1) Each node contacts $\ell$ random candidate neighbors, and asks for their current in-degrees.
(2) The node makes an outbound connection to the candidate with the smallest  in-degree. Ties are broken at random.}
%Each user executes the above two steps.  
This protocol  is simple, distributed, and allows the graph to be periodically rebuilt.
Though the resulting graph need not be an exact line, %(e.g. it could be disconnected), 
nodes have an expected degree of two, and experiments show low precision for adversaries. 
Increasing  $\ell$ also enhances the likeness of the graph to a line and reduces the expected precision in simulation. 
%\\  \\
%\noindent {\bf 
\subsubsection{Construction of $4$-regular graphs.} 
A natural extension of \Algo's graph-construction protocol to 4-regular graphs involves repeating  \textsf{ApxLine($\ell$)}  twice for parameter $\ell>0$.
That is, each peer makes two outgoing edges, where the target of each edge is chosen according to \textsf{ApxLine($\ell$)}.
As in the approximate line algorithm, the resulting graph is not exactly regular. 
However, the expected node degree is 4, and because each node generates two outgoing edges, the resulting graph has no leaves. 
This improves anonymity because leaf nodes are known to degrade average precision \cite{dand}.
%We shall see later that this improves anonymity guarantees compared to approximate lines.
%Next, we explore the anonymity implications of this protocol in the presence of Byzantine nodes.

%\vspace{0.05in}
%\noindent {\bf 
\subsubsection{The impact of Byzantine nodes} 
%The simplicity of the graph-construction protocol makes it straightforward to understand the effects of malicious parties.
Byzantine nodes can misbehave as recipients and/or creators of edges. 
As recipients, nodes can lie about their in-degrees during the degree-checking phase.
As creators of edges, misbehaving nodes can generate many edges, even connecting to each honest node.
%In the extreme, each spy could connect to each honest node.
%We discuss the impact of both issues here.

\vspace{0.05in}
\noindent {\bf Lying about in-degrees.}
%The \textsf{ApxLine(k)} algorithm is vulnerable to adversaries manipulating the protocol. 
In step (1) of the graph-construction protocol, when a user queries an adversarial neighbor for its current in-degree, the adversary might deliberately report a lower degree than its actual degree. 
This can cause the querying user to falsely underestimate the neighbor's in-degree and make a connection. 
In the extreme case, the adversarial node can consistently report an in-degree of zero, thus attracting many incoming edges from honest nodes. 
This degrades anonymity by increasing the likelihood of honest nodes passing their transactions directly to the adversary in the first hop. % allowing adversaries to learn which node created which . 
We find experimentally that such attacks significantly increase precision as $\ell$ grows; 
plots are omitted due to space constraints. %included in Appendix \ref{app:graph_construction}.

To avoid nodes lying about their in-degrees, we abandon the interactive aspect of \Algo~graph construction. % and use $k=1$ throughout.
%However, as discussed previously, adversaries can abuse the protocol to gain more incoming edges.
%To avoid this problem, we first note that such an abuse of \textsf{ApxLine(k)} occurs only at high values of the parameter $k$, wherein the lying adversarial nodes have a greater likelihood of being selected by honest nodes.
In \textsf{ApxLine(1)}, users select a random peer and make an edge regardless of the recipient's in-degree.
We therefore  run \textsf{ApxLine(1)} twice, as shown in Algorithm~\ref{algo:4reg_approx}.
Note that Algorithm \ref{algo:4reg_approx}  closely mirrors the graph construction protocol used in Bitcoin's P2P network today, so the protocol itself is not novel.
In the line graphs of \Algo, a higher $\ell$ value was needed since  \textsf{ApxLine($\ell$)} for $\ell = 1$ was shown to have significantly worse anonymity performance than $\ell\geq 2$.
However such a loss is avoided in \Algopp~since the application of \textsf{ApxLine(1)} twice eliminates leaves. %, thereby improving anonymity.

%Eliminating the degree-checking step reduces the complexity of graph construction and removes the opportunity for Byzantine nodes to impact the graph structure by lying. 
%However, it is unclear how much anonymity we lose by using a graph that is not exactly $d$-regular. 
%This question was explored for line graphs in \cite{dand}, but we are advocating the use of 4-regular graphs.
What is \emph{not} previously known is how the approximate-regular construction in Algorithm \ref{algo:4reg_approx} affects anonymity compared to an exact 4-regular topology.
%We wish to understand how Algorithm \ref{algo:4reg_approx} impacts the anonymity properties of \Algopp.
%In particular, we want to understand the effect of using an approximately-4-regular topology compared to an exact 4-regular topology.
First, note that the  expected recall does not change
because dandelion spreading (Algorithm \ref{algo:dandelion}) has an optimally-low maximum recall of $p+O(\frac{1}{n})$, regardless of the underlying graph (Thm. 4, \cite{dand}).
Hence, we wish to understand the effect of approximate regularity on maximum expected precision.
We simulated dandelion spreading on approximate 4-regular graphs and exact 4-regular graphs,
using the same approximate precision-optimal estimator from Figure \ref{fig:4regular}.
For comparison, we have also included the first-spy estimator. %, which is consistently lower than the maximum-weight matching estimator.
Figure \ref{fig:approx_vs_reg} shows that the % resulting average precision curves as a function of the fraction of spies.
 difference in precision between 4-regular graphs and approximate 4-regular graphs (computed with Algorithm \ref{algo:4reg_approx}) is less than 0.02 across a wide range of spy fractions $p$.
Compared to line graphs \cite[Figure 8]{dand}, 4-regular graphs appear significantly more robust to irregularities in the graph construction.
%Hence we lose minimal anonymity by using Algorithm \ref{algo:4reg_approx} instead of constructing an exact 4-regular graph.
%In return, we gain a fully-distributed graph-maintenance algorithm that is robust to nodes lying during the graph construction protocol.

%\begin{figure*}[t]
%    \centering
%    \begin{minipage}{.32\textwidth}
%    \centering
%  \includegraphics[width=\linewidth]{figures/spies_outbound_edges_behaving_new}
%  \caption{Honest-but-curious spies obey graph construction protocol.}
%  \label{fig:sub1}
%    \end{minipage}
%    ~
%    \begin{minipage}{.32\textwidth}
%%\end{figure}
%%\begin{figure}[t]
%    \centering
%  \includegraphics[width=\linewidth]{figures/spies_outbound_edges_misbehaving_new}
%  \caption{Malicious spies make outbound edges to every honest node.}
%  \label{fig:behave}
%  \end{minipage}
%      ~
%    \begin{minipage}{.32\textwidth}
%%\end{figure}
%%\begin{figure}[t]
%    \centering
% \includegraphics[width=\linewidth]{figures/approx_4_regular}
%  \caption{  Average first spy precision as a function of $p$ for random, approximate
%   $d$-regular graphs.}
%  \label{fig:pseudorandom}
%  \end{minipage}
%\end{figure*}

\begin{algorithm}[t]
\DontPrintSemicolon
\KwIn{Set $V=\{v_1, v_2, \ldots, v_n\}$ of nodes; }
\KwOut{A connected, directed anonymity graph $H(V,E)$ with average degree $4$}
\For{$v \gets V$} {
%  \For{$c \gets 1$ \textbf{to} $d$} {
    \tcc{pick two random targets} 
    $u_1 \sim$ Unif$(V \setminus \{v\})$ \;
    $u_2 \sim$ Unif$(V \setminus \{v,u_1\})$ \;
    \tcc{make connections}
    $E = E \cup (v \rightarrow u_1) \cup (v \rightarrow u_2)$ 
%  }
  
}
\Return{$H(V,E)$}\;
\caption{{\sc Approximate 4-Regular Graph} Approximates a directed 4-regular graph in a fully-distributed fashion.}
\label{algo:4reg_approx}
\end{algorithm}

%\vspace{0.05in}
\noindent {\bf Creating many edges.}
\Algo~is naturally robust to nodes that create a disproportionate number of edges, because
%The reason is that 
spies can only create \emph{outbound} edges to honest nodes. %; they cannot force honest nodes to establish outbound edges to spies.
This matters because in the stem phase,  honest nodes only forward messages on outbound edges. 
%The following proposition makes this intuition precise.
%Moreover, once the message reaches a spy in the stem, subsequent observations are conditionally independent given the first spy's observation, 
%so spy nodes gain nothing by altering the propagation pattern once a transaction has reached them.
%This implies that the likelihood of a given candidate stem trajectory from a candidate source to a spy node is independent of the in-degrees of the nodes in the trajectory.
%So although we cannot prevent spy nodes from generating arbitrarily many outbound edges, there is no deanonymization incentive to do so.

%\red{should we include a small proposition here to this effect?}
\begin{prop} \label{thm:outbound}
Consider dandelion spreading (Algorithm \ref{algo:dandelion}) with $q=0$ over a connected anonymity graph $H$ constructed according to graph-construction policy $\mathcal P$ (Algorithm \ref{algo:dreg_approx}).\footnote{$q$ denotes the probability of transitioning to fluff phase in each hop.}
%Let the probability of transitioning from stem to fluff phase $q=0$.
Let $\mathbf{D}_\mathtt{OPT}(\mathcal P)$ and $\mathbf{R}_\mathtt{OPT}(\mathcal P)$ denote the maximum expected precision and recall over graphs constructed according to $\mathcal P$. 
Now consider an alternative policy $\mathcal Q$ that is identical to $\mathcal P$ except adversarial nodes are allowed to choose their outbound edges arbitrarily. 
Let $\mathbf{D}_\mathtt{OPT}(\mathcal Q)$ and $\mathbf{R}_\mathtt{OPT}(\mathcal Q)$ denote the maximum expected precision and recall over all graphs constructed according to $\mathcal Q$.
Then 
\begin{eqnarray}
\mathbf{R}_\mathtt{OPT}(\mathcal Q) &=& \mathbf{R}_\mathtt{OPT}(\mathcal P) \notag \\
\mathbf{D}_\mathtt{OPT}(\mathcal Q) &=& \mathbf{D}_\mathtt{OPT}(\mathcal P).
\end{eqnarray}
\end{prop}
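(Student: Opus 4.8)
It suffices to establish the following claim: for \emph{any} connected anonymity graph $H$, the maximum (over source estimators) expected precision and recall of dandelion spreading with $q=0$ on $H$ depend on $H$ only through its \emph{honest sub-structure} --- the honest vertices $V_H$ together with their outbound edges and the induced adjacencies. Granting this, the proposition follows at once: $\mathcal P$ and $\mathcal Q$ differ only in how adversarial nodes pick their outbound edges, so they induce the same distribution over honest sub-structures and hence the same values of $\mathbf D_\mathtt{OPT}$ and $\mathbf R_\mathtt{OPT}$. (One direction is trivial anyway: an adversary operating under $\mathcal Q$ may choose its outbound edges exactly as $\mathcal P$ prescribes, so $\mathbf D_\mathtt{OPT}(\mathcal Q)\ge\mathbf D_\mathtt{OPT}(\mathcal P)$ and likewise for recall; the content is the reverse inequality.)

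\textbf{Step 1: the honest-side trajectory.} In the stem phase with $q=0$ every honest node forwards each transaction it originates or relays \emph{only} along one of its own outbound edges, and never fluffs. Hence, tracing any honest-generated transaction $x$ from its source, the sequence of hops up to and including the first spy it reaches is a walk using honest vertices and honest outbound edges exclusively, ``absorbed'' at that first spy. The joint law --- over all transactions simultaneously and over the uniform source assignment --- of these initial trajectories is therefore a function of the honest outbound edges and the honest relay randomness alone, and in particular is identical under $\mathcal P$ and $\mathcal Q$. Consequently the \emph{first-contact data} $\mathbf O^{(1)}$ (for each transaction, which honest node first delivered it to which spy) has a law that depends on $H$ only through its honest sub-structure. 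Note in passing that adversarial outbound edges --- however many a Byzantine adversary creates, even one to every honest node --- are never traversed by an honest stem transaction before it reaches a spy and, since spies cannot force honest nodes to create inbound edges toward them, cannot enlarge $\mathbf O^{(1)}$.

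\textbf{Step 2: reducing the optimal estimators to $\mathbf O^{(1)}$.} It remains to show that the only extra information $\mathcal Q$ can provide --- sightings of a transaction \emph{after} it first reaches a spy, produced when a spy re-injects a caught transaction into the honest graph --- is useless for associating transactions with honest sources. By \cite[Theorem~4]{dand} the recall-optimal estimator for dandelion spreading is the first-spy estimator, which by definition reads off only $\mathbf O^{(1)}$. By \cite[Theorem~3]{dand} the precision-optimal estimator is a maximum-weight matching with edge weights $\prob(X_v=x\mid\mathbf O)$, so it suffices to argue that, conditioned on $\mathbf O^{(1)}$, the source assignment $\{X_v\}_{v\in V_H}$ is independent of the post-first-contact sightings: once a transaction has first reached a spy, any further propagation of it is a function of the adversary's own (source-independent) forwarding choices, fresh honest relay randomness, and the honest graph --- not of where the transaction originated. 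Under this reduction an optimal estimator may be taken to be a function of $\mathbf O^{(1)}$ alone; applying the same function under $\mathcal P$ yields the same expected precision and recall, establishing the reverse inequality and hence the proposition.

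\textbf{Main obstacle.} The delicate point is the conditional-independence assertion in Step 2: the post-first-contact ``decoy'' walk of a caught transaction must avoid the honest nodes that already hold it, so its law retains a faint dependence on the pre-first-contact trajectory, hence on the source. The proof must show this residual dependence cannot improve the optimal estimator --- most cleanly by arguing that the adversary may without loss of generality \emph{drop} every stem transaction it receives (so that $\mathbf O=\mathbf O^{(1)}$ exactly and there is nothing further to condition on), or else via a direct data-processing argument. The coupling of Step 1 and the equality of honest sub-structures under $\mathcal P$ and $\mathcal Q$ are routine once this is settled.
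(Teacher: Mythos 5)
Your proposal follows essentially the same route as the paper's proof: both reduce the optimal estimators to the first-spy observations $\mathbf O^{(1)}$ via conditional independence of the source from post-first-contact sightings, observe that the stem with $q=0$ traverses only honest outbound edges (the paper formalizes your ``honest sub-structure'' by deleting all adversarial outbound edges to obtain $H'$ with $H'_{\mathcal P}=H'_{\mathcal Q}$), and then invoke Theorems~3 and~4 of \cite{dand} to conclude that the optimal precision and recall are determined by posteriors that coincide under the two policies. The ``main obstacle'' you flag is handled in the paper by exactly the assertion you propose---that once a transaction first reaches a spy, its subsequent spread is governed by that spy's own actions and is conditionally independent of the past---so your treatment is no less complete than the published one.
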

%(Proof in Appendix \ref{app:outbound})
\emph{(Proof in Appendix \ref{app:outbound}).}~~
%These bounds are loose because we assume that 
%Intuitively, the result holds because  \Algopp~spreading is independent of the inbound connections to each stem node. %, the expected precision and recall guarantees are unchanged 
%We formalize this observation in Proposition \ref{}, later in this section.
This result bounds the deanonymization abilities of Byzantine nodes in general and supernodes in particular \cite{biryukov,koshy2014analysis}, neither of which was covered by the analysis of \cite{dand}.
%with respect to an adversary that creates the protocol-specified number of outbound edges. 
It shows that for the special case where the transition probability from stem to fluff phase $q=0$ (i.e., infinite stem phase), supernodes gain no deanonymization power by connecting to most or all of the honest nodes. 
In practice, we need $q>0$ to reduce broadcast latency, but
analyzing this requires an upper bound on the probability of detecting the source of a diffusion process under sampled timestamp observations---a known open problem  \cite{PTV12,ZY13}.
%Since theoretical analysis appears challenging in this case, 
We therefore simulate precision for nonzero $q$ as a function of spy fraction $p$, when spies obey protocol (Figure \ref{fig:behave}) and when they form outbound edges to all honest nodes (Figure \ref{fig:misbehave}).
We generate a P2P graph via Algorithm \ref{algo:dreg_approx} with out-degree $\eta=8$, and an anonymity graph $H$ via Algorithm \ref{algo:4reg_approx},  except spies form outbound edges to \emph{all} honest nodes. 
%This worst-case scenario bounds the adversary's capabilities.

%Each honest node creates a transaction, which propagates according to \Algopp. 
%For each transaction, we first check if it reached a spy during the stem phase. 
%If so, the preceding honest node in the stem is treated as the stem's terminus. 
%If not, the adversarial spies must infer the stem's terminus using a diffusion source estimator;
%in this simulation, we use the first-spy estimator.
%Now, given the estimated stem terminus for each transaction, we compute the approximate likelihoods of each node being the source (using full knowledge of the anonymity graph), and compute a max-weight matching, just as in prior simulations.
%Note that this approach does not strictly upper bound the maximum expected precision because we use (a) a suboptimal diffusion source estimator to identify the stem's terminus, and (b) approximate likelihoods in our max-weight matching.  
%Nonetheless, we expect the first-spy estimator to be near-optimal in this experiment since spy nodes connect to every honest node. 

Figures \ref{fig:behave} and \ref{fig:misbehave} highlights two points: 
First, even when spies follow  protocol, increasing $q$ increases precision. 
$q$ has the largest effect when $p$ is small; when $p=0.1$, using $q=0.5$ increases the expected precision by about 0.1  compared to  $q=0.0$.
For $q\leq 0.2$, we expect increases in precision on the order of 0.05.  
Second, spies can increase their precision by adding outbound edges; when $p=0.1$ and $q=0.5$, we observe a precision increase of about 0.2. 
%This data can inform the choice of parameter $q$. 
Thus by choosing parameter $q \leq 0.2$, we can limit the increase in average precision to 0.1, even when spies connect to every honest node. 
%We revisit this in \S\ref{sec:evaluation}, where we make recommendations for $q$ based on latency measurements.

\noindent \textbf{Lesson.} \emph{To defend against graph-manipulation attacks,  use noninteractive protocols and small $q$.}

\subsection{Black-Hole Attacks} \label{sec: msg frwd}

%Adversaries can also misbehave during message forwarding in the stem phase.
Since \algo~spreading forwards messages to exactly one neighbor in each hop, propagation can terminate entirely if an adversarial relay in the stem chooses not to forward a message; we refer to this as a \emph{black-hole attack}. 
To prevent black-hole attacks, \Algopp~ sets a random expiration timer at each stem relay immediately  upon receiving transaction messages. 
If the relay does not receive an \texttt{INV} (i.e. an advertisement) for the transaction before his timer expires, then the relay diffuses the transaction. 
This policy provides a two-fold advantage over \Algo: (1) messages are guaranteed to eventually propagate through the network, and (2) the random timers can help anonymize peers initiating the spreading phase in the event of a black-hole attack.

%\begin{figure}[t]
%    \centering
%  \includegraphics[width=.2\textwidth]{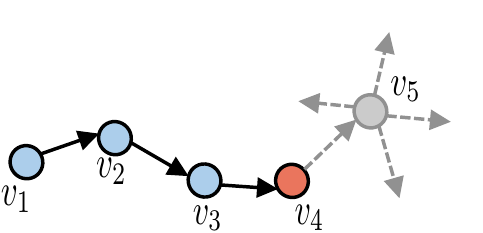}
%  \caption{Adversaries can prevent messages from propagating in \Algo.}
%  \label{fig:dandelion_stall}
%\end{figure}

%\vspace{0.1in}
%\noindent {\bf \Algopp~ propagation protocol.}
%These timers can be set without leaking any information about the source as follows: 
To implement this, \Algopp~nodes are initialized with a timeout parameter $T_\mathrm{base}$. 
%For each new transaction, peers first compute a timeout parameter $T_\mathrm{base}$ that is appended to the transaction before forwarding. 
In the stem phase, when a relay $v$ receives a transaction, it sets an expiration time $T_\mathrm{out}(v)$: 
\begin{align}
T_\mathrm{out}(v) \sim \text{\texttt{current\_time}} + \exp(1/T_\mathrm{base}),
\end{align}
i.i.d. across relays. 
If the transaction is not received again by relay $v$ before $T_\mathrm{out}(v)$, $v$ broadcasts the message using diffusion.
Pseudocode is shown in Algorithm~\ref{algo:dandelionpp} (Appendix \ref{app:algo}).
%\begin{algorithm}[t]
%\DontPrintSemicolon
%\KwIn{Message and timeout parameter $(X, T_\mathrm{base})$ received by $v$ in the anonymity phase, out-neighbors $\mathcal N_{out}(G,v)$ on anonymity graph $G$, spreading graph $H$, parameter $q\in (0,1)$}
%%\KwOut{A connected, directed graph $G(V,E)$ with average degree $d$}
%$T_\mathrm{out}(v) \sim \exp(1/T_\mathrm{base})$ \tcp*{set timer} 
%forward $(X,T_\mathrm{base})$ according to dandelion\; 
%\tcc{wait until message re-received}
%\While{current\_time $\leq T_\mathrm{out}$} {
%\If{$X$ received} {
%timer $\leftarrow$ inactive\;
%break\;
%}
%continue\;
%}
%\tcc{start diffusion}
%\If {timer is active} 
%{ 
%{\sc Diffusion}$(X, v, H)$ \;
%}
%\caption{{\sc Dandelion++ Spreading} at node $v$. The protocol guarantees eventual network-wide propagation of transactions.}
%\label{algo:dandelionpp}
%\end{algorithm}

Algorithm~\ref{algo:dandelionpp} solves the problem of message stalling, as relays independently broadcast if they have not received the message within a certain time.
However, the protocol also ensures that the first relay node to broadcast is approximately uniformly selected among all relays that have received the message. 
%For example, in Figure~\ref{fig:dandelion_stall} if node $v_4$ stalls a message that is propagating along the path $v_1,v_2,v_3,v_4$, then each of the peers that have previously forwarded the message---$v_1,v_2,v_3$---are approximately equally likely to expire their timer and start broadcasting. 
This is due to the memorylessness of the exponential clocks: conditioned on a given node blocking the message, each of the remaining clocks %at $v_1,v_2$ and $v_3$ 
can be reset assuming propagation latency in the stem is negligible.
%However, care has to be taken to ensure that the timers do not prematurely trigger diffusion.
Ideally, the exponential clocks should be slow enough that they only trigger (with high probability) during a black-hole attack. %to allow the message to propagate a sufficient number of hops before diffusion starts; this is necessary for anonymity. 
On the other hand, they must also be fast enough to keep propagation latency low. 
This trade-off is analyzed in the following proposition. 
\begin{prop} \label{prop: clock rate}
For a timeout parameter $T_\mathrm{base} \geq \frac{-k(k-1)\delta_\mathrm{hop}}{2\log(1-\epsilon)}$, where $k, \epsilon$ are parameters and $\delta_\mathrm{hop}$ is the time between each hop (e.g., network and/or internal node latency), transactions travel for $k$ hops without any peer initiating diffusion with a probability of at least $1-\epsilon$. 
\end{prop}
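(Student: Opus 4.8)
The plan is to reduce the statement to a one‑line computation with independent exponential variables, after pinning down a deterministic timeline for the stem. First I would set up notation: let $v_1, v_2, \dots, v_k$ be the relays that successively hold the transaction in stem mode, with $v_1$ receiving it from the source and each $v_{i+1}$ receiving it from $v_i$. Since the per‑hop delay is $\delta_\mathrm{hop}$ and intra‑stem processing is negligible (as assumed in the text preceding the proposition), $v_i$ receives the message at time $(i-1)\,\delta_\mathrm{hop}$ measured from $v_1$'s receipt, and at that instant draws an independent timer $E_i \sim \exp(1/T_\mathrm{base})$, so that $T_\mathrm{out}(v_i) = (i-1)\delta_\mathrm{hop} + E_i$. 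If $\delta_\mathrm{hop}$ is only an upper bound on the true per‑hop latency rather than an exact value, I would simply substitute it for every actual delay; since this can only advance each $T_\mathrm{out}(v_i)$ relative to the true timeline, the probability I compute will remain a valid lower bound.

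Next I would identify the event of interest. The transaction completes its $k$-th hop when $v_k$ receives it, at time $(k-1)\delta_\mathrm{hop}$; no peer initiates diffusion during these $k$ hops exactly when none of the timers at $v_1,\dots,v_{k-1}$ has expired by then (the relay $v_k$ receives the message precisely at the end of hop $k$ and hence imposes no constraint within this window; and should some node appear more than once along the stem, it contributes only one timer, i.e.\ strictly fewer constraints, so treating the relays as distinct is the conservative case). Thus the good event is $\bigcap_{i=1}^{k-1}\{E_i > (k-i)\,\delta_\mathrm{hop}\}$, and since the $E_i$ are mutually independent its probability is
\[
\prod_{i=1}^{k-1} e^{-(k-i)\,\delta_\mathrm{hop}/T_\mathrm{base}}
 = \exp\!\Big(-\frac{\delta_\mathrm{hop}}{T_\mathrm{base}}\sum_{j=1}^{k-1} j\Big)
 = \exp\!\Big(-\frac{k(k-1)\,\delta_\mathrm{hop}}{2\,T_\mathrm{base}}\Big).
\]

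Finally I would enforce the target guarantee: requiring $\exp\!\big(-k(k-1)\delta_\mathrm{hop}/(2T_\mathrm{base})\big) \ge 1-\epsilon$ and taking logarithms (legitimate since both sides are positive, with $\log(1-\epsilon)<0$) rearranges to $T_\mathrm{base} \ge -k(k-1)\delta_\mathrm{hop}/\big(2\log(1-\epsilon)\big)$, which is exactly the stated threshold; since the survival probability above is increasing in $T_\mathrm{base}$, the bound then holds for every larger value as well. The only part that needs genuine care is not the algebra but the modeling: justifying the deterministic‑timeline reduction, the independence of the $E_i$ (granted by ``i.i.d.\ across relays'' in the protocol), and the boundary conventions for the last relay and for repeated relays, so that the displayed product is provably a lower bound on the true probability. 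Once those conventions are fixed, the proposition follows immediately.
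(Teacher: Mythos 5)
Your proposal is correct and follows essentially the same argument as the paper: treat each relay's timeout as an independent $\exp(1/T_\mathrm{base})$ variable, require that relay $v_i$'s timer survive $(k-i)\delta_\mathrm{hop}$, multiply the survival probabilities to get $e^{-k(k-1)\delta_\mathrm{hop}/(2T_\mathrm{base})}$, and solve $\ge 1-\epsilon$ for $T_\mathrm{base}$. The extra care you take with boundary conventions (the last relay, repeated relays, $\delta_\mathrm{hop}$ as an upper bound) is a welcome tightening but does not change the route.
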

%(Proof in Appendix \ref{app: clock rate})
\emph{(Proof in Appendix \ref{app: clock rate})}~~
Let $\Delta_1 \triangleq k\delta_\mathrm{hop}$ be the time taken for the message to traverse $k$ hops. 
Conditioned on reaching $k$ hops and stalling, let $\Delta_2$ denote the additional time taken for the message to start diffusion. 	
Then, $\Delta_2$ is the minimum of exponential random variables each of rate $1/T_\mathrm{base}$.  
As such $\Delta_2$ is itself exponentially distributed with rate $k/T_\mathrm{base}$. 
Choosing $T_\mathrm{base}$ as in Proposition~\ref{prop: clock rate}, the mean additional time taken to diffuse the message is
\begin{align}
\E[\Delta_2] = \frac{T_\mathrm{base}}{k} = \frac{-(k-1)\delta_\mathrm{hop}}{2\log(1-\epsilon)} \leq  \frac{-\Delta_1}{2\log(1-\epsilon)} \approx \frac{\Delta_1}{2\epsilon}. 
\end{align}
The standard deviation of $\Delta_2$ is identical to the mean. 
Thus by choosing $T_\mathrm{base}$ as in Proposition~\ref{prop: clock rate} in Algorithm~\ref{algo:dandelionpp} we incur an additional delay at most a constant factor $1/(2\epsilon)$ from our delay $\Delta_1$ otherwise.  

\noindent \textbf{Lesson.} \emph{Use random timers selected according to Prop. \ref{prop: clock rate}.}

%\vspace{0.1in}
%\noindent {\em Note:} 
%While we have chosen a timer-based protocol in \Algopp~for practical reasons, it is not clear if this is an optimal solution from an anonymity perspective. 
%For example, an alternative protocol in which source peers simply restart dandelion spreading if they have not heard back their message reveals less information to the adversary and hence provides better anonymity. 
%However this protocol can significantly increase latency;
%for small $q$ (which provides better anonymity in the honest-but-curious case), the probability of encountering a spy in the stem is high.
%If spies are not forwarding \Algopp~transactions, the source may need to wait a prohibitively long time before one of its stems successfully reaches the spreading phase.
%%A systematic study of this problem to understand all the latency-anonymity tradeoffs is an important direction for future work. 

\subsection{Partial-Deployment Attacks}
\label{sec:partial}

%The analysis of \Algo~ in \cite{dand} does not consider two practical aspects of integration and deployment:
%(1) maintaining a separate anonymity graph from the main P2P network may be undesirable; 
The original \Algo~analysis does not consider the fact that instantaneous, full-network deployment of \Algopp~is practically infeasible. 
In this section, we show that if implemented naively, Byzantine nodes can exploit partial \Algo~deployment to launch serious anonymity attacks. 
We also demonstrate a (counterintuitive) implementation mechanism that neutralizes this threat.
%The first issue arises because added complexity can be difficult to maintain; 
%poor implementation choices may introduce new attack vectors that are difficult to foresee and analyze.
%Therefore, simplicity is a desirable property. 
%For this reason, we suggest that the anonymity graph should be a subgraph of the main P2P graph, rather than a distinct graph. 
%%That is, the anonymity graph should be chosen as a subgraph of the main P2P graph.
%While this may increase the adversary's ability to learn the anonymity graph, we make three observations:
%(1) the implementation complexity of \Algopp~declines significantly if the anonymity graph is chosen as a subset of the main graph;
%(2) by using a $4$-regular topology for the anonymity graph, we limit the adversary's gain in precision that comes from knowing the graph (\S \ref{sec: topology}); 
%(3) it may still be difficult for the adversary to learn the anonymity graph, since each node would only use a subset of its edges for \Algopp.
%Moreover, nodes can still cycle their connections periodically to further complicate anonymity graph estimation.
%Our discussions with Bitcoin Core developers suggest that they are more likely to accept solutions in which the anonymity graph is an overlay of the existing P2P network.
%Assuming the anonymity graph is an overlay of the main P2P network, the key question is how to choose it.
We consider two natural approaches for constructing the anonymity graph. 
%:\emph{version-checking} and \emph{no-version-checking}.

\begin{algorithm}[t]
\DontPrintSemicolon
\KwIn{Main (directed) P2P graph $G(V,E)$, desired degree $d$ of output anonymity graph}
\KwOut{Directed anonymity graph $H(V,\tilde E)$}
\For{$v \in V$} {
    \tcc{Find TwistedPair neighbors} 
    $\mathcal D_v \gets \{w\in \mathcal N_{out}(G,v) ~|~ w\text{ supports \Algopp}\}$\;
    \If{$0 \leq |\mathcal D_v| < \frac{d}{2}$} {
      $\tilde E \gets \tilde E \cup \mathcal D_v$
    }
    \tcc{Non-TwistedPair neighbors}  
    \If{$|\mathcal D_v| == 0$}{
    	$\mathcal R \gets \frac{d}{2}$ nodes drawn uniformly from $\mathcal N_{out}(G,v)$, without replacement \;
	$\tilde E \gets \tilde E \cup \mathcal R$
    }
}
\caption{{\sc Version-checking}. $\mathcal N_{out}(G,v)$ denotes the out-neighbors of node $v$ on digraph $G$.}
\label{algo:version-checking}
\end{algorithm}

\noindent \textbf{Version-checking} (Algorithm \ref{algo:version-checking}) generates an anonymity graph from edges between \Algopp-compatible nodes.
Each node $v$ first identifies its outbound peers on the main P2P network that support \Algopp; we call this set $\mathcal D_v$.
$\mathcal D_v$ can be learned from existing signaling in Bitcoin's version handshake. 
Next, %to approximate a $4$-regular anonymity graph, 
$v$ runs Algorithm \ref{algo:4reg_approx}, drawing candidate neighbors only from $\mathcal D_v$.
If $|\mathcal D_v| =1$, $v$ uses the single node in $\mathcal D_v$ as its outbound anonymity graph edge.
If $|\mathcal D_v|=0$, $v$ picks $2$ outbound neighbors uniformly at random, and uses them as anonymity graph edges. 
% the remaining outbound edges uniformly from the set of nodes that do not support \Algopp.
Upon forwarding a transaction to a node that does not support \Algopp, the receiving node will, by default, relay the message using diffusion, thereby ending the stem phase.
While version checking is a natural strategy, adversarial nodes can lie about their version number and/or run nodes that support \Algopp. 
%This may change the anonymity guarantees offered by \Algopp.
%Below we discuss the corresponding anonymity ramifications on \Algopp.

Under the second approach, \textbf{no-version-checking}, each node $v$ instead selects $2$ outgoing edges uniformly from the set of \emph{all} outgoing edges, without considering \Algopp-compatibility.
This noninteractive protocol shortens the expected length of the stem, thereby potentially weakening anonymity guarantees.

If all nodes supported \Algopp, these two approaches would be identical. 
%differences arise only because of gradual rollout.
%---i.e., the observation that we cannot feasibly force all Bitcoin nodes to adopt \Algopp~ at the same time. 
%Understanding the anonymity properties of these two approaches is closely related to the second deployment issue: dealing with gradual adoption.
%Hence, any realistic implementation must be interoperable with nodes that do not support it.
%We wish to understand how such a gradual rollout would affect the anonymity guarantees of \Algopp.
To model gradual deployment, we assume that all spy nodes run \Algopp, 
and a fraction $\beta $ of the remaining honest nodes are using \Algopp.
Honest users are distributed uniformly over the network.
Let $V_D$ denote the nodes that support \Algopp: $|V_D| = pn + (1-p)\beta n$.
We wish to characterize the maximum expected recall of \Algopp~ as a function of $p$ and $\beta $,
 for version-checking and no-version-checking.
The following theorem bounds this quantity under a recall-optimal estimator. 
%The implication is that when \Algopp~adoption is low, version-checking can lead to an expected recall that is significantly higher than diffusion itself.

\begin{thm}\label{thm:recall_partial}
Consider $n$ nodes in an approximately-$2\eta$-regular graph $G$ generated according to Algorithm \ref{algo:dreg_approx}. 
A fraction $p$ of nodes are spies running \Algopp. 
Among the remaining honest nodes, a fraction $\beta $ support \Algopp. 
Under \textbf{version-checking}, the expected recall across honest, \Algopp-compatible nodes, under a recall-optimal mapping strategy satisfies 
\begin{align}
\frac{p}{f} \left (1- (1-f)^\eta \right ) \leq \mathbf{R}_\mathtt{OPT}  \lesssim & \frac{p}{f} \left (1- (1-f)^\eta \right ) +  (1-f)^\eta   + C (1-\beta)) 
\end{align} 
where $f$ is the fraction of \emph{all} nodes that support \Algopp~
 and $C = \left( 1 - \frac{p}{f}\right)  \left (1 - \left(1-f\right )^\eta \right )\frac{1 - (1-\phi)^{\tilde n}}{\tilde n \phi}$, where $\phi=1-(1-\frac{1}{n-\eta})^\eta$, and $\tilde n=(1-p)n$ is the number of honest nodes in the system.
Under \textbf{no-version-checking}, % we have 
\begin{align}
%p + \frac{(1-p)(1-\beta )}{d} \leq \mathbf{R}_\mathtt{OPT}  &\leq p + \frac{1-p}{d} .
p \leq \mathbf{R}_\mathtt{OPT}  &\lesssim p +  (1-\beta(1-q))(1-p)\frac{1 - (1-\phi)^{\tilde n}}{\tilde n \phi}.
\end{align} 
%where $\zeta=\frac{1 - (1-\phi)^{\tilde n}}{\tilde n \phi} \triangleq \zeta$.
%where $C' = (1-p)\left (\frac{1-(1-\phi)^{\tilde n}}{\tilde n \phi} \right )$.
\end{thm}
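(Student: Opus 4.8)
The plan is to reduce, using the symmetry of Algorithm~\ref{algo:dreg_approx} and of the uniform assignment of spy status and \Algopp-adoption, to the expected recall $\E[R(v)]$ of a single typical honest \Algopp-compatible node $v$, so that $\mathbf{R}_\mathtt{OPT}=\E[R(v)]$ under the recall-optimal map. For the lower bounds I would simply exhibit the first-spy estimator: since each node forwards \emph{all} of its own transactions along the same stem edge each epoch, $R(v)=1$ whenever the endpoint of that edge in the anonymity graph is a spy. Under version-checking this endpoint is a uniformly random element of $v$'s \Algopp-supporting out-neighbors, a set that is nonempty with probability $1-(1-f)^\eta$ and, conditioned on that, lands on a spy with probability $p/f$; under no-version-checking it is a uniformly random out-neighbor, a spy with probability $p$. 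This yields the left-hand inequalities.

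For the upper bounds I would partition on the type of the stem edge carrying $v$'s own transactions. (i) If it leads to a spy, $R(v)=1$, and this event has exactly the ``main'' probability. (ii) Under version-checking, if $v$ has no \Algopp-supporting out-neighbor, bound $R(v)\le 1$ crudely, contributing $(1-f)^\eta$. (iii) Otherwise the transaction enters a genuine stem $v\to u_1\to u_2\to\cdots$ with $u_1$ honest, and the crux is to show the recall-optimal estimator recovers $v$ only if the stem diffuses within essentially one honest hop. If the stem instead reaches a spy first, the first-spy predecessor is some $u_i\ne v$ and, as in the line-graph analysis of \cite{dand}, nearer candidates dominate the posterior so $R(v)=0$; if the stem relays through several honest \Algopp~nodes, $v$ is buried too deep to be the posterior maximizer, because (by the locally-tree-like structure of the anonymity graph) the number of length-$k$ stem continuations ending at the eventual fluff origin, hence the number of exchangeable candidate sources, grows fast enough to swamp $v$. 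Conditioned on the stem diffusing at a node $w$ one honest hop out, the adversary's task reduces to inverting a single diffusion step, and by exchangeability it succeeds with probability at most the expected reciprocal of the number of nodes that could have handed $X_v$ to $w$, which for graphs produced by Algorithm~\ref{algo:dreg_approx} evaluates to $\frac{1-(1-\phi)^{\tilde n}}{\tilde n\phi}$ with $\phi=1-(1-\frac{1}{n-\eta})^\eta$. The probability that the stem diffuses at that first honest hop is $1-\beta(1-q)$ under no-version-checking (the next honest node either does not run \Algopp, or runs it but is a diffuser this epoch) and a corresponding factor of the form $1-\beta$ under version-checking; multiplying the three case probabilities and collecting terms gives the right-hand inequalities, with multi-hop-then-diffuse stems, the approximate (rather than exact) regularity of $G$, and the $O(1/\tilde n)$ slack in recall-optimality all absorbed into ``$\lesssim$''.

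The main obstacle is part (iii): quantifying rigorously that a transaction which has traveled two or more honest hops into the stem contributes negligibly to $v$'s recall requires the tree-like neighborhood structure so that the count of competing candidate sources grows at the right rate, and the single-step diffusion-inversion probability must be evaluated exactly as $\frac{1-(1-\phi)^{\tilde n}}{\tilde n\phi}$, which means controlling the in-degree distribution induced by Algorithm~\ref{algo:dreg_approx} (including the without-replacement sampling responsible for the $n-\eta$) and verifying that the candidate predecessors are exchangeable given the observations. A secondary difficulty is keeping the version-checking and no-version-checking accounting consistent, in particular tracking why $q$ surfaces in one bound but not the other, and confirming that the honest-but-curious spies assumed here gain no extra leverage from the partial-deployment structure itself.
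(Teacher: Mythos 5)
Your skeleton matches the paper's: lower bounds via the first-spy estimator applied to the single stem edge carrying $v$'s own transactions, and upper bounds by conditioning on (spy first relay) / (no \Algopp-compatible neighbor, version-checking only) / (honest first relay), with the last hop of a terminated stem inverted at rate $\E[1/|R|]=\frac{1-(1-\phi)^{\tilde n}}{\tilde n\phi}=:\zeta$, where $R$ is the set of feasible predecessors of the stem terminus. But two points in your part (iii) need repair. First, your claim that a source two or more honest hops deep is ``buried'' because the number of candidate sources ``grows fast enough to swamp $v$'' is not the mechanism the bound can rest on; making it rigorous would require controlling the posterior of the stem/diffusion process over all distant candidates, which is essentially the open problem the paper explicitly avoids. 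The paper instead hands the adversary an oracle revealing the final stem node $\ell_{M_x}(x)$ (and the set $V_D$); with that oracle the recall-optimal estimate is $\argmax_u \prob(\ell_{M_x}(x)=\ell \mid X_u=x)$, which concentrates on the \emph{direct} feasible predecessors of $\ell$. Detection of $v$ after the first relay extends the stem then requires either that the stem loop back to $v$ (probability $\to 0$, since the return time to $v$ is $\mathrm{Geom}(\tfrac{1}{n-1})$ while the residual stem length is $\mathrm{Geom}(q)$) or that $v$ happen to have an outbound edge to the terminus (probability $\tfrac{\eta-1}{n-\eta+2}\to 0$). You need this explicit event decomposition --- loop-back plus terminus-adjacency --- not a counting heuristic about competing candidates.

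Second, your diffuse-at-the-first-honest-hop factor for version-checking is inconsistent with your own logic. Conditioned on $D$ and on the first relay $w$ being honest, version-checking guarantees that $w$ supports \Algopp, so $w$ terminates the stem with probability exactly $q$; your no-version-checking reasoning (``does not run \Algopp, or runs it but is a diffuser'') specializes here to $q$, not to a factor of the form $1-\beta$. The paper's own derivation produces $\left(1-(1-f)^\eta\right)\left(1-\tfrac{p}{f}\right) q\, \zeta$ for this term (note this is $Cq$ rather than the $C(1-\beta)$ printed in the theorem statement --- an inconsistency internal to the paper), so you should carry $q$ through the version-checking case and reserve the factor $1-\beta(1-q)$ for no-version-checking, where the uniformly chosen honest relay fails to support \Algopp~with probability $1-\beta$.
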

%\noindent(Proof in Appendix \ref{app:recall_partial})
\emph{(Proof in Appendix \ref{app:recall_partial})}~~
Here $\lesssim$ denotes approximate inequality; i.e., $A(n) \lesssim B(n)$ implies that there exist constants $n_0>0$ and $C'>0$ such that for all $n>n_0$, $A(n) \leq C' B(n)$.
In this case, such a condition holds for any $C'>1$.
%The approximate inequality in Theorem \ref{thm:recall_partial} is used purely for readability. 
%We have exact bounding expressions that are included in the proof, and plotted in Figure \ref{fig:recall_partial}.

%\begin{figure*}[!htb]
%\minipage{0.32\textwidth}
%  \includegraphics[width=\linewidth]{figures/intersection_attack_regular_graph}
%  \caption{Average recall as a function of number of transactions per node in random,
%   $4$-regular graphs.} \label{fig:int_attack}
%\endminipage\hfill
%\minipage{0.32\textwidth}
%  \includegraphics[width=\linewidth]{figures/approx_4_regular}
%  \caption{  Average first spy precision as a function of $p$ for random, approximate
%   $d$-regular graphs.}
%  \label{fig:pseudorandom}
%\endminipage\hfill
%\minipage{0.32\textwidth}%
%  \includegraphics[width=\linewidth]{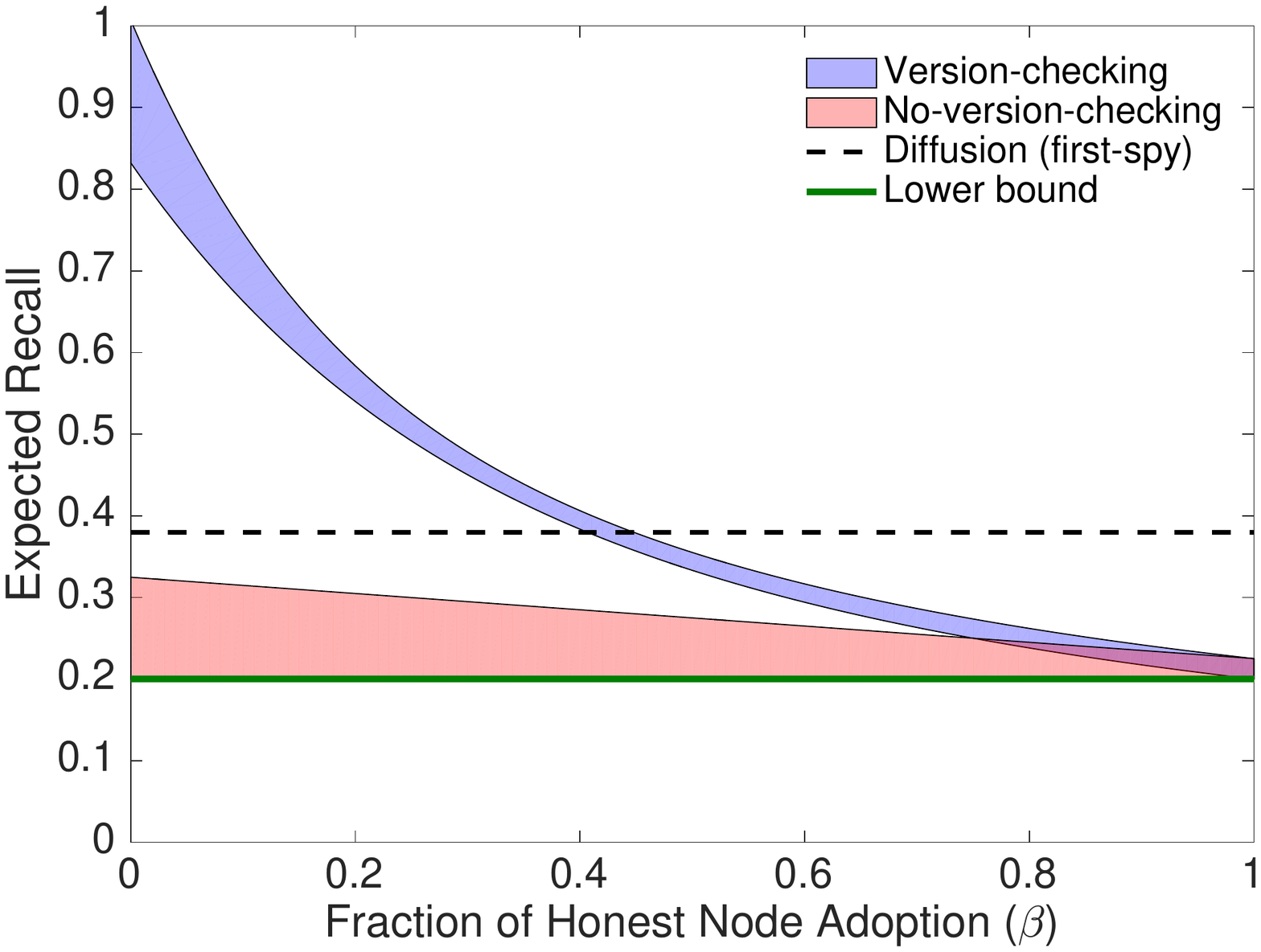}
%\caption{Bounds on maximum expected recall for  $p=q=0.2$ and $n=1000$, on an approximately-16-regular graph.}
%  \label{fig:recall_partial}
%\endminipage
%\end{figure*}

%\begin{figure}[t]
%    \centering
%  \includegraphics[width=.45\textwidth]{figures/recall_partial_deployment}
%  \caption{Bounds on maximum expected recall for  $p=q=0.2$ and $n=1000$, on an approximately-16-regular graph.}
%  \label{fig:recall_partial}
%\end{figure}

\begin{figure*}[t]
    \begin{minipage}{.32\textwidth}
    \centering
    \includegraphics[width=\linewidth]{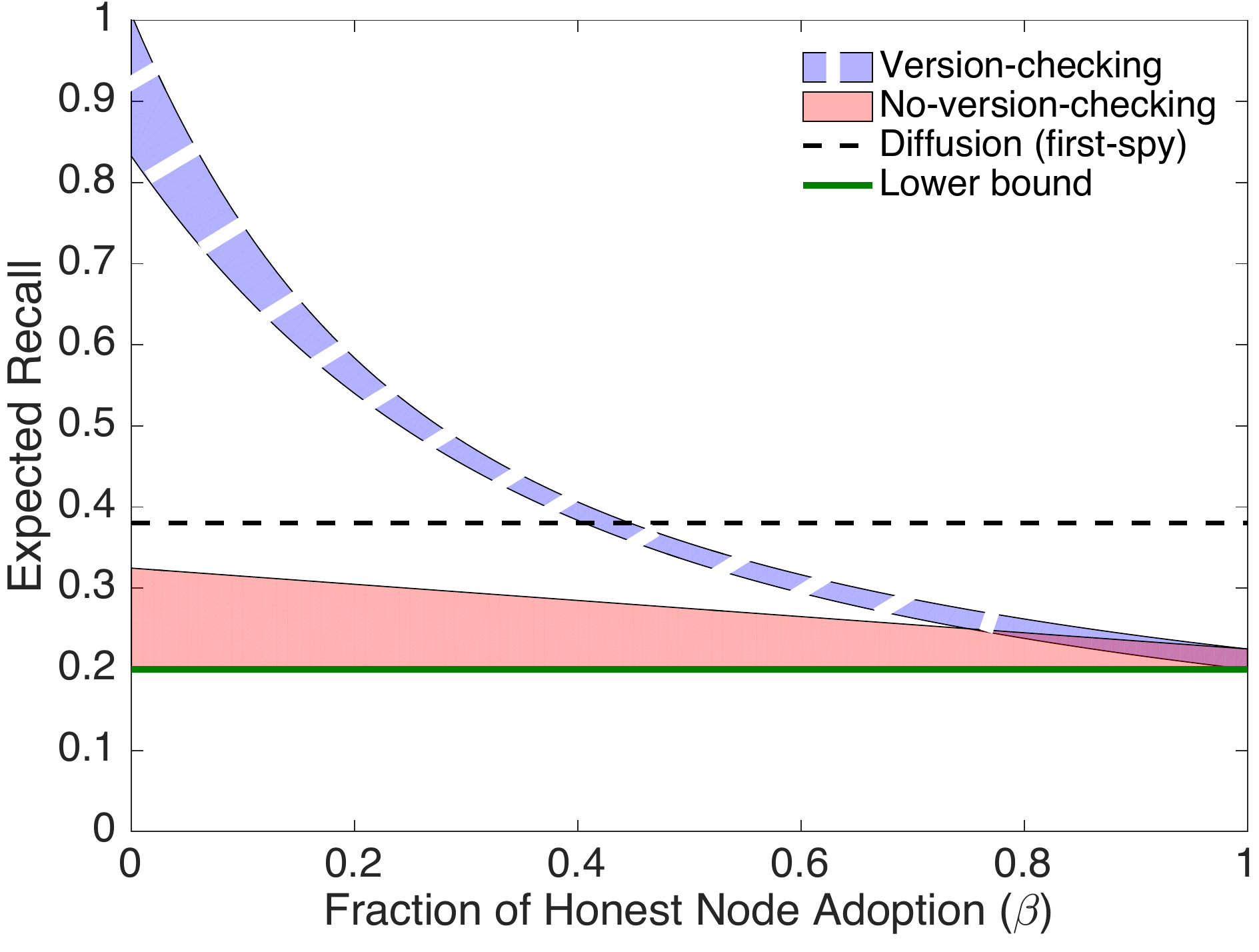}
  \caption{Bounds on maximum expected recall for  $p=q=0.2$ and $n=1000$, on an approximately-16-regular graph.}
  \label{fig:recall_partial}
  \end{minipage}
  \hfill
      \begin{minipage}{.32\textwidth}
    \centering
    \includegraphics[width=\linewidth]{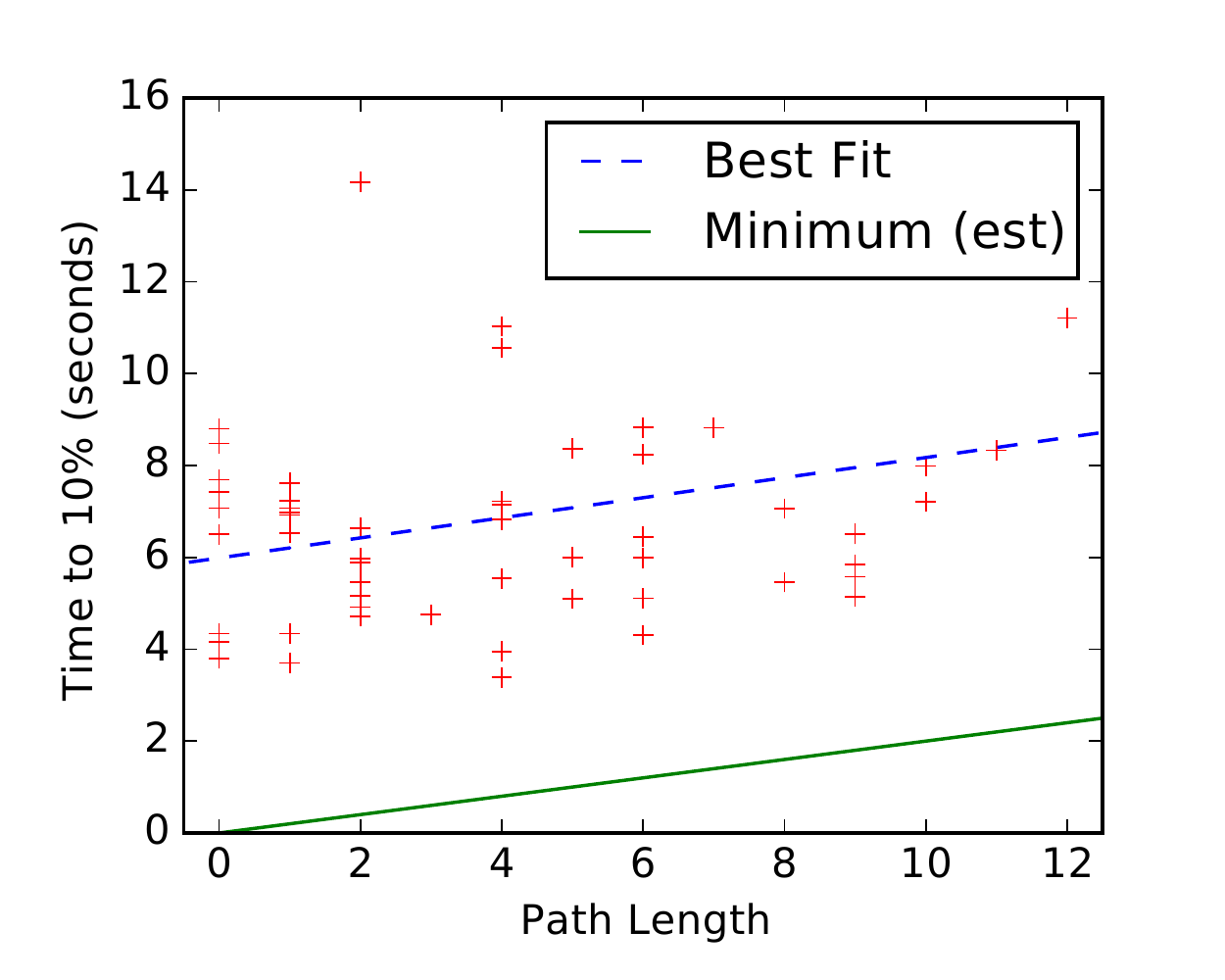}
    \caption{Time to propagate to 10\% of the network as a function of the path length. Blue is the line of best fit at 218ms per hop, and green is the minimum estimated delay at 300ms per hop.
%    \red{Why is the slope of the blue line less than 300 ms per hop?}
    }
    \label{fig:timetoten}
    \end{minipage}
    \hfill
    \begin{minipage}{.32\textwidth}
    \centering
    \includegraphics[width=\linewidth]{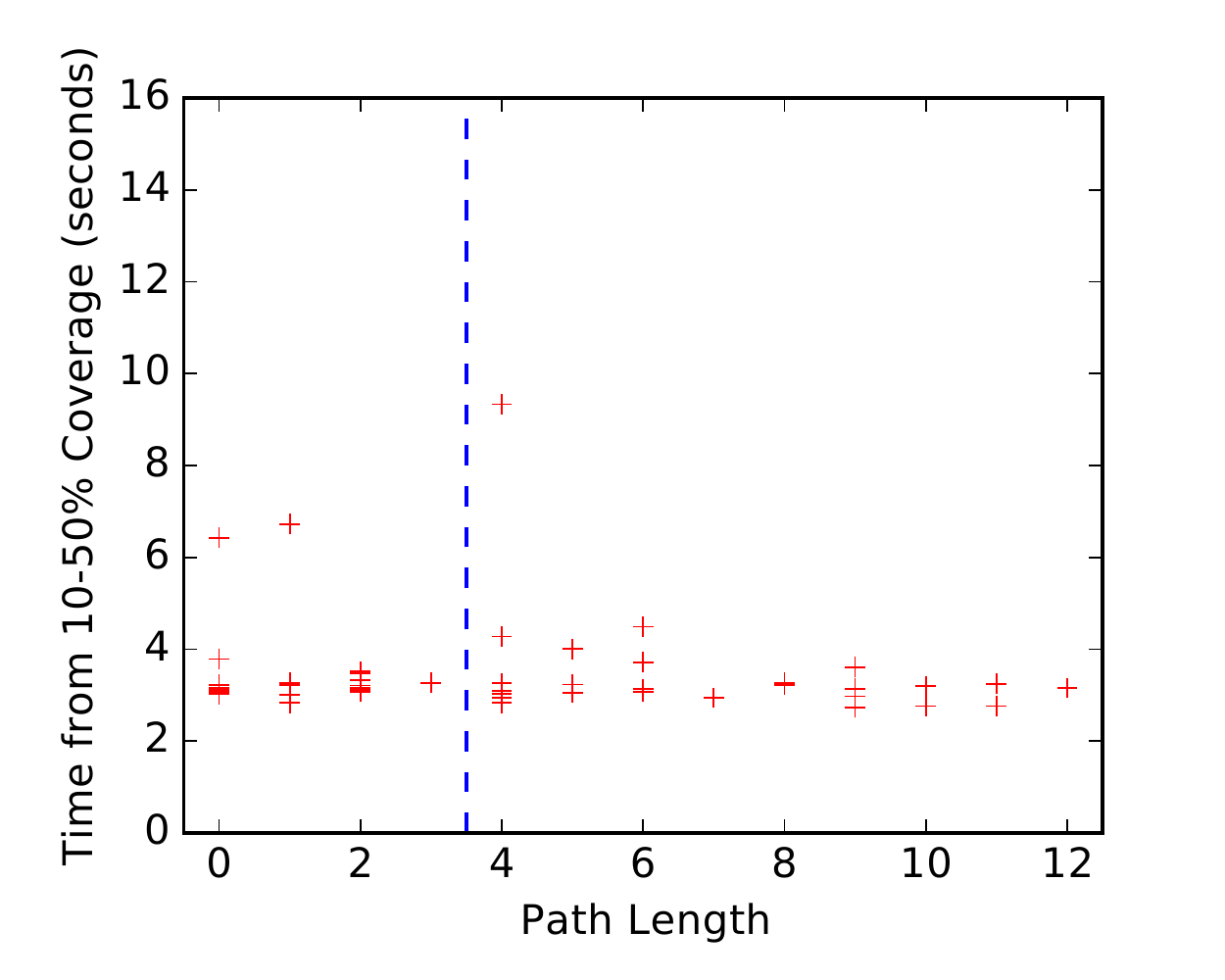}
    \caption{Time  to go from 10\% to 50\% of the network. The blue line indicated where data was split into two categories by hop length with almost equal numbers of samples (low: 26, high: 29).}
    \label{fig:tentofifty}
  \end{minipage}
\end{figure*}

Figure \ref{fig:recall_partial} plots these results for $p=0.2$ fraction of spies and $q=0.2$ probability of ending the \Algopp~stem, on an approximately-16-regular P2P graph. % (Bitcoin nodes establish up to 8 outbound connections).
Our theoretical bounds delimit the shaded regions; %, so the true probability of detection lies somewhere inside the shaded region.
for comparison, we include a lower bound on the recall of diffusion, computed by simulating diffusion with a first-spy estimator. 
The green solid line is a lower bound on the maximum expected recall of any spreading protocol (Thm. 2, \cite{dand}).
When $\beta $ is small, version-checking recall is close to 1; in the same regime, no-version-checking exhibits lower recall than both diffusion and version-checking.
%Hence, no-version-checking may be a better approach, despite creating shorter stems in the anonymity phase.
Intuitively, when adoption is low (low $\beta$), any \Algopp~peers are likely to be spies.
Therefore, version-checking actually \emph{increases} the likelihood of getting deanonymized. %, because it favors \Algopp~nodes.
In the same low-$\beta$ regime, no-version-checking is more likely to choose a \Algopp-incompatible node $w$ as the next stem node.
While this prematurely ends the stem, it still introduces more uncertainty than vanilla diffusion.

\noindent \textbf{Lesson.} \emph{Use no-version-checking to construct the graph. }
%Similar intuition holds for expected precision. 

%Finally, \Algopp~does not increase (and may actually decrease) the expected precision and recall for \emph{non}-\Algopp~nodes.
%%The recall-optimal estimator simply outputs the MAP source estimate for each transaction independently, so recall for non-Dandelion nodes is unaffected by Dandelion nodes.
%%Moreover, non-Dandelion nodes may actually experience \emph{reduced} expected precision due to their Dandelion-enabled peers.
%\cite{dand} shows that the recall-optimal estimator simply outputs the MAP source estimate for each transaction independently; 
%similarly, the precision-optimizing estimator is a maximum-weight matching between transactions and nodes, where the weights are the \emph{a posteriori} probabilities of each node being the source of a particular transaction. %, given the adversary's observations.
%When some nodes are running \Algopp, these posteriors get flattened, in that non-source-nodes have a higher probability of being the source of a given transaction. 
%%The observed metadata for transactions from non-Dandelion nodes is independent of \Algopp~nodes, so the weights associated with such transactions also remain unchanged. 
%%However, transactions generated by Dandelion nodes will achieve some degree of mixing, which causes the weights associated with other (false) source nodes to increase.
%This can cause the resulting matchings to misclassify a Dandelion transaction to a non-Dandelion node (and vice-versa), which can only reduce expected precision and recall. % for non-Dandelion nodes.

\section{Evaluation}
\label{sec:evaluation}

\subsection{Implementation in Bitcoin}
We have developed a prototype implementation of \Algopp.
Our prototype is a modification of Bitcoin Core (referred to as Core from now on), the most commonly-used client implementation in the Bitcoin network. In total, our implementation required a patch modifying approximately 500 lines of code.
A vital part of our implementation is allowing nodes to recognize other \Algopp~ nodes in a straightforward way. In Core, supported features are signaled by modifying the \texttt{nServices} field
in the handshake. For example, segwit support is signalled by setting bit $4$ in \texttt{nServices}. In our case, \Algopp~ support is signaled by setting the $25$th bit.

To minimize our
footprint on the Core codebase, we insert \Algopp~ functionality into the preexisting main threads/signals
that handle the processing and transmission of messages. 
However, creating the $4$-regular anonymity graph and processing \Algopp~ transactions requires careful consideration
of the many concurrency and DoS-protection mechanisms already at play in Core. 
For instance, Core's data structures for transactions and inventory messages are designed to facilitate 
responses to \texttt{GetData} requests, while broadcasting transactions to all nodes with exponential delays. 
%Poisson process. 
%The data structures in Core, on the other hand, are not sufficient for \Algopp~ because they are meant
These data structures are insufficient for \Algopp~ because they facilitate broadcasting knowledge of transaction and block hashes. 
%In particular, our implementation needs to hide all knowledge of transactions until the stem phase begins, while still ensuring proper relaying and DoS protections for propagation. 
In particular, \Algopp~ nodes need to hide knowledge of transactions that are still in the stem phase, but at the same time ensure that they are relayed properly and not stalled by adversaries. 
Our approach is to store stem mode transactions in an additional data structure, the ``embargo map,'' such that embargoed transactions are omitted from \texttt{GetData} responses.
The embargo map serves two purposes: 1. it tracks transactions currently in the stem phase and 2. it ensures that malicious adversaries can not stop transaction propagation (\S \ref{sec: msg frwd}). 

\subsection{Experimental Setup}
We used our prototype implementation to conduct integration experiments, by launching our own nodes running the \Algopp~ software, and connecting to the actual Bitcoin network.
The goal of our experiments is to characterize how \Algopp~ affects transaction propagation latency. The experiments also  validate our implementation and its compatibility with the existing network.
%both typical cases and an attack scenario, for a wide range of parameters, and under varying levels of adoption.

For our experiments we launched  30 Dandelion instances of \texttt{m3.medium} Amazon EC2 nodes (\texttt{t2.medium} used in Seoul and Mumbai where \texttt{m3.medium} is not available). The nodes are spread geographically across 10 different AWS regions (California, Sydney, Tokyo, Frankfurt, etc.)---3 nodes per region~\cite{awsregions}. To control the topology between the Dandelion nodes, we use Core's \texttt{-connect} or \texttt{-addnode} command line flags.
Our measurements use the Coinscope~\cite{coinscope} tool to connect to each node in the Bitcoin network and record a timestamped log of transaction propagation messages.

%In order to evluate our implementation, we conducted a few tests of \Algopp~ in different scenarios.
%The first measurement is of the propagation delay as a function of the number of hops in the stem
%phase. This experiment measures the delay agnostic of the actual coin flip implementation and shows
%how propagation behaves over fixed stem lengths. The second experiment measures delay as a function
%of network adoption of \Algopp~. The third, and final, measures delay in the face of adversaries 
%that stall stem propagation (blackholes).

%% \begin{figure}
%%     \includegraphics[width=3.5in]{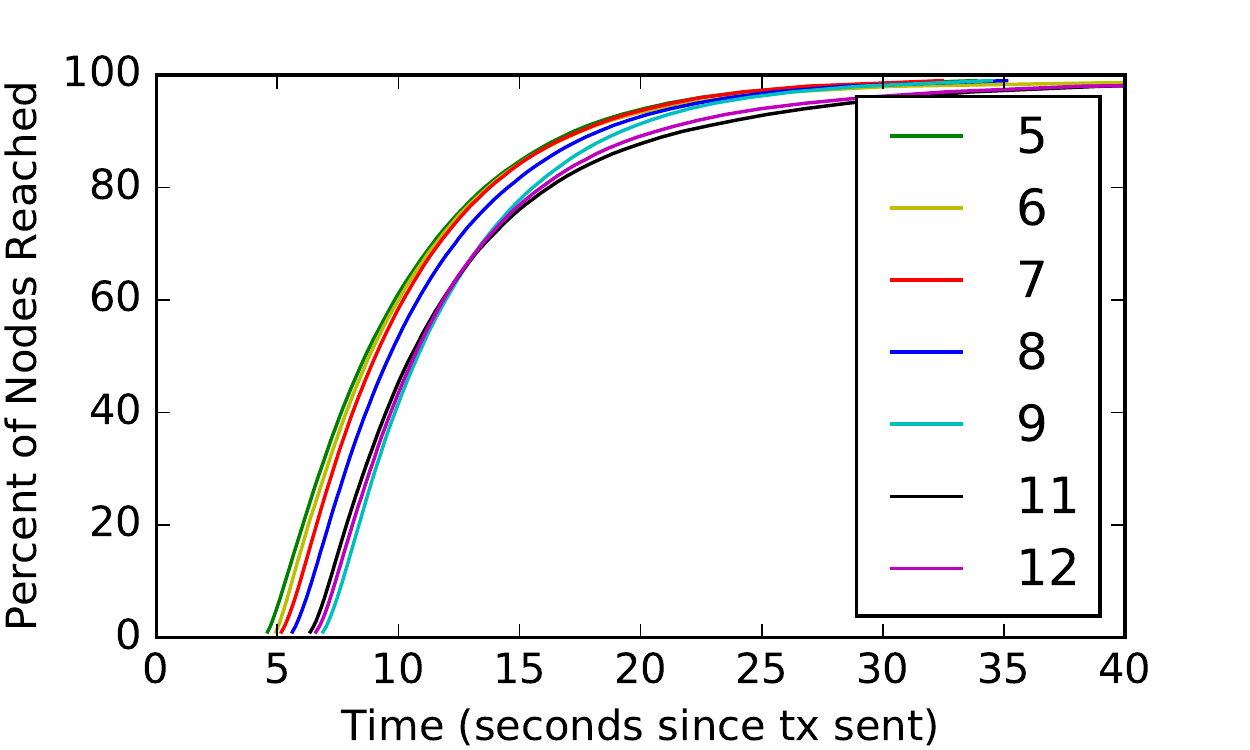}
%%     \caption{Mean time to reach network for varying stem lengths. For every extra hop added the latency should increase on the scale of a second from the peer-to-peer latency, and this is what we see from this figure.}
%%     \label{fig:hops}
%% \end{figure}

\subsection{Evaluation Results}

\subsubsection{Propagation latency at fixed stem lengths} \label{subsec:hopdelay}
We conducted a preliminary experiment to inform our choice of the coin flip (stem length) parameter, $q$. In this experiment, we arranged our \Algopp~ nodes in a chain topology (each with one outgoing connection to the next, with the last node connected to the Bitcoin network) so that we could deterministically control the stem length. Based on 20 trials for stems up to length 12, we estimated that each additional hop adds an expected 300 miliseconds to the propagation latency. There is also a constant 2.5 second delay added to each transactions due to the exponential process of propagation when it first enters the fluff phase.
%This latency stems mainly from the exponential delays in Core's transaction propagation mechanism.
Taking a propagation delay of 4.5 seconds as our goal, we chose $q=0.1$ or $q=0.2$ as our parameter (recall the expected stem length is $1/q$).

The observed delays originate from two main sources.
%Each additional node in the stem introduces two forms of delay. 
First, each hop incurs network latency due to transit time between nodes. A recent measurement study of the Bitcoin network estimated a median latency of 110ms between nodes~\cite{stateofbitcoin}, 
and Bitcoin transaction propagation requires three messages (\texttt{INV}, \texttt{GETDATA}, \texttt{TX}). The latency between our EC2 nodes is faster, with a median of only 86ms across all pairs. Although our EC2 nodes are geographically distributed, they are closely connected to internet backbone endpoints. Second, Bitcoin Core buffers each \texttt{Inv} message for an average of 2.5 seconds; however, our implementation relays \Algopp~ transactions immediately, so internal node delays should be negligible.  We therefore estimate 300 miliseconds of delay per stem node, which is consistent with our preliminary experiments.

%In the first experiment, our nodes are connected in a fixed, geographically distributed chain.
%The nodes in the stem only form one outgoing connection to the next node in the chain, but
%continue to accept any incoming connections (so that embargo expiration has many candidates for
%relaying transactions). For every stem length, 20 transactions were sent out to the first
%node in the stem, through Bitcoin's rpc interface, at 10 second intervals. The transactions
%were indistinguishable from any other transaction and spent a fee that passes most minimum relay
%fees. The results of each stem length are show in Figure \ref{fig:hops}. The figure shows that there
%is very minimal latency added by the our implementation of stem propagation aside from the additional
%latency of making another hop.

\subsubsection{Topology} \label{subsec:topology}
In order to create a connected graph of \Algopp~ nodes, we use all of the eight outgoing connections from each of our nodes to connect to other \Algopp~ nodes. This ensures that we can measure many different stem lengths and how they affect the propagation to the rest of the network.
%As mentioned in Section~\ref{sec:partial}, we envision a gradual deployment of \Algopp~ nodes in the network. If $\beta$ fraction of the nodes in the network support \Algopp, then in expectation we would expect each node to make $8\beta$ outgoing connections to other \Algopp~ nodes.
%Since we do not wish to connect too many of our own nodes to the Bitcoin network (as this amounts to a Sybil attack and could adversely affect its topology), we instead approximate varying adoption levels by controlling the number of interconnections among our 30 \Algopp~ nodes.
%For example, since our \Algopp~ nodes are based on Bitcoin Core and therefore support a maximum number of 8 outgoing connections, to approximate 50\% adoption we would set 4 of these outgoing connections to randomly chosen \Algopp~ peers.
We must also account for an artifact of our experiment setup, namely that short cycles in the stems are more likely among our 30 well connected \Algopp~ nodes \footnote{Our implementation enters fluff mode if there is a loop in the stem.}. We therefore parse debug logs from our nodes for each trial in order to determine the effective stem length, which we then use as the basis of our evaluation.

In our experiment, we re-randomize the topology to avoid biasing our results. For each randomization of the connections, we generate 5 transactions, 10 seconds apart. We repeat this ``burst'' 4 times a day over three separate days.  The transactions are injected into the network via a randomly-chosen node. We show the results of the experiment in Figures~\ref{fig:timetoten} and \ref{fig:tentofifty}.

Figure~\ref{fig:timetoten} plots the time it takes \Algopp~ transactions to reach 10\% of the network. As mentioned in Section~\ref{subsec:hopdelay}, for every additional hop in the stem phase there is a minimum delay added by three messages and a expected delay of 2.5 seconds. The solid green line represents the minimum expected delay and the dotted blue line representst the best linear fit over the transaction data.
The propagation delay to reach 10\% coverage increases with the path length due to the minimum added delay. We also computed the two-sided Pearson Correlation Coefficient over the two variables: path length and time to reach 10\%. The coefficient $r=0.292$ implies a small positive correlation between the two variables.% and at worse no significant relationship between the two.

Figure~\ref{fig:tentofifty} plots the the time it takes a transaction to go from 10\% to 50\% of the network with respect to the path length. Unlike the first scatter plot, visual inspection doesn't reveal any relationship between the two variables in this case. This is also what we expect because \Algopp~ should not have any impact on transaction propagation after it has entered fluff phase. We also perform a Mann-Whitney U Test to test the null hypothesis: time from 10-50\% coverage does not depend on path length. Using the path length as the independent variable, we split it into two categories: high and low. The blue dotted line in Figure~\ref{fig:tentofifty} is the boundary of the two categories where there are 26 samples in ``low'' and 29 in ``high''.  The Mann Whitney U test gives a U statistic of 443 and a p-value of 0.269. This implies that there is weak evidence against the null hypothesis, therefore we fail to reject it.

%In all of the transaction data, there is high variance in the time it takes to reach 10\% of the network. To confirm that this variance is not introduced by \Algopp~, we also measure normal transaction propagation on the network during the time of the experiments. We find that there is generally very high variance in how fast transactions propagate through the network in any give period of time. Therefore, we can conclude that the high variance is not a consequence of the \Algopp~ but just normal behavior of the Bitcoin network.

As expected, the minimum delay brought on by \Algopp~ has a positive correlation with the time it takes to reach 10\% of the Bitcoin network. Similarly, once a transaction has left the stem phase, it begins normal propagation through the network and is therefore no longer be affected by \Algopp. This prediction is  confirmed by our test of independence of hop length (high v. low) and time from 10-50\% coverage.

\section{Conclusion}
\label{sec:conclusion}

A gap exists between the theory and practice of protecting user anonymity in cryptocurrency P2P networks.
In particular, there are no safeguards against population-level deanonymization, which is the focus of this paper.
We aim to narrow that gap by identifying strong or unrealistic assumptions in a state-of-the-art proposal \cite{dand}, 
demonstrating the anonymity effects of violating those assumptions, and proposing lightweight, theoretically-justified fixes in the form of \Algopp.
This methodology complements the usual development  pattern in cryptocurrencies,  which has mainly  evolved by applying ad hoc patches against specific attacks. 
We instead take a first-principles approach to design. 

Note that \Algopp~does not explicitly protect against ISP- or AS-level adversaries, which can deanonymize users through routing attacks \cite{apostolaki2016hijacking}. 
Understanding how to analyze and protect against such attacks is of fundamental interest. 
However, \Algopp~is already compatible with a number of the countermeasures proposed in \cite{apostolaki2016hijacking}. For instance, \cite{apostolaki2016hijacking} proposes to enhance network diversity through multi-homing of nodes and routing-aware network connectivity. Such countermeasures directly support \Algopp~by ensuring that nodes are less likely to establish outbound anonymity edges exclusively to spies. 
%We expect that understanding how to enforce such network diversity in the presence of Byzantine nodes will be an important question moving forward, both for anonymity and general network robustness. 

%\bibliographystyle{IEEEtranS}
\bibliographystyle{ACM-Reference-Format}
\bibliography{privacy} 

%%% -*-BibTeX-*-
%%% Do NOT edit. File created by BibTeX with style
%%% ACM-Reference-Format-Journals [18-Jan-2012].

\begin{thebibliography}{54}

%%% ====================================================================
%%% NOTE TO THE USER: you can override these defaults by providing
%%% customized versions of any of these macros before the \bibliography
%%% command.  Each of them MUST provide its own final punctuation,
%%% except for \shownote{}, \showDOI{}, and \showURL{}.  The latter two
%%% do not use final punctuation, in order to avoid confusing it with
%%% the Web address.
%%%
%%% To suppress output of a particular field, define its macro to expand
%%% to an empty string, or better, \unskip, like this:
%%%
%%% \newcommand{\showDOI}[1]{\unskip}   % LaTeX syntax
%%%
%%% \def \showDOI #1{\unskip}           % plain TeX syntax
%%%
%%% ====================================================================

\ifx \showCODEN    \undefined \def \showCODEN     #1{\unskip}     \fi
\ifx \showDOI      \undefined \def \showDOI       #1{#1}\fi
\ifx \showISBNx    \undefined \def \showISBNx     #1{\unskip}     \fi
\ifx \showISBNxiii \undefined \def \showISBNxiii  #1{\unskip}     \fi
\ifx \showISSN     \undefined \def \showISSN      #1{\unskip}     \fi
\ifx \showLCCN     \undefined \def \showLCCN      #1{\unskip}     \fi
\ifx \shownote     \undefined \def \shownote      #1{#1}          \fi
\ifx \showarticletitle \undefined \def \showarticletitle #1{#1}   \fi
\ifx \showURL      \undefined \def \showURL       {\relax}        \fi
% The following commands are used for tagged output and should be
% invisible to TeX
\providecommand\bibfield[2]{#2}
\providecommand\bibinfo[2]{#2}
\providecommand\natexlab[1]{#1}
\providecommand\showeprint[2][]{arXiv:#2}

\bibitem[\protect\citeauthoryear{??}{aws}{[n. d.]}]%
        {awsregions}
 \bibinfo{year}{[n. d.]}\natexlab{}.
\newblock \bibinfo{title}{{AWS Regions and Endpoints}}.
\newblock   (\bibinfo{year}{[n. d.]}).
\newblock
\newblock
\shownote{\url{http://docs.aws.amazon.com/general/latest/gr/rande.html}.}


\bibitem[\protect\citeauthoryear{??}{bit}{[n. d.]}]%
        {bitcoind}
 \bibinfo{year}{[n. d.]}\natexlab{}.
\newblock \bibinfo{title}{Bitcoin Core integration/staging tree}.
\newblock   (\bibinfo{year}{[n. d.]}).
\newblock
\newblock
\shownote{\url{https://github.com/bitcoin/bitcoin}.}


\bibitem[\protect\citeauthoryear{??}{cha}{[n. d.]}]%
        {chainalysis}
 \bibinfo{year}{[n. d.]}\natexlab{}.
\newblock \bibinfo{title}{Chainalysis}.
\newblock   (\bibinfo{year}{[n. d.]}).
\newblock
\newblock
\shownote{\url{https://www.chainalysis.com/}.}


\bibitem[\protect\citeauthoryear{??}{kov}{[n. d.]}]%
        {kovri}
 \bibinfo{year}{[n. d.]}\natexlab{}.
\newblock \bibinfo{title}{The Kovri I2P Router Project}.
\newblock   (\bibinfo{year}{[n. d.]}).
\newblock
\newblock
\shownote{\url{https://github.com/monero-project/kovri}.}


\bibitem[\protect\citeauthoryear{??}{mon}{[n. d.]}]%
        {monero}
 \bibinfo{year}{[n. d.]}\natexlab{}.
\newblock \bibinfo{title}{Monero}.
\newblock   (\bibinfo{year}{[n. d.]}).
\newblock
\newblock
\shownote{\url{https://getmonero.org/home}.}


\bibitem[\protect\citeauthoryear{??}{bit}{2015}]%
        {bitcoindTrickleDiffusion}
 \bibinfo{year}{2015}\natexlab{}.
\newblock \bibinfo{title}{Bitcoin Core Commit 5400ef6}.
\newblock   (\bibinfo{year}{2015}).
\newblock
\newblock
\shownote{\url{https://github.com/bitcoin/bitcoin/commit/5400ef6bcb9d243b2b21697775aa6491115420f3}.}


\bibitem[\protect\citeauthoryear{??}{red}{2016}]%
        {reddit_i2p}
 \bibinfo{year}{2016}\natexlab{}.
\newblock \bibinfo{title}{reddit/r/monero}.
\newblock   (\bibinfo{year}{2016}).
\newblock
\newblock
\shownote{\url{https://www.reddit.com/r/Monero/comments/4aki0k/what_is_the_status_of_monero_and_i2p/}.}


\bibitem[\protect\citeauthoryear{Androulaki, Karame, Roeschlin, Scherer, and
  Capkun}{Androulaki et~al\mbox{.}}{2013}]%
        {androulaki2013evaluating}
\bibfield{author}{\bibinfo{person}{Elli Androulaki}, \bibinfo{person}{Ghassan~O
  Karame}, \bibinfo{person}{Marc Roeschlin}, \bibinfo{person}{Tobias Scherer},
  {and} \bibinfo{person}{Srdjan Capkun}.} \bibinfo{year}{2013}\natexlab{}.
\newblock \showarticletitle{Evaluating user privacy in bitcoin}. In
  \bibinfo{booktitle}{\emph{International Conference on Financial Cryptography
  and Data Security}}. Springer, \bibinfo{pages}{34--51}.
\newblock


\bibitem[\protect\citeauthoryear{Apostolaki, Zohar, and Vanbever}{Apostolaki
  et~al\mbox{.}}{2016}]%
        {apostolaki2016hijacking}
\bibfield{author}{\bibinfo{person}{Maria Apostolaki}, \bibinfo{person}{Aviv
  Zohar}, {and} \bibinfo{person}{Laurent Vanbever}.}
  \bibinfo{year}{2016}\natexlab{}.
\newblock \showarticletitle{Hijacking Bitcoin: Large-scale Network Attacks on
  Cryptocurrencies}.
\newblock \bibinfo{journal}{\emph{arXiv preprint arXiv:1605.07524}}
  (\bibinfo{year}{2016}).
\newblock


\bibitem[\protect\citeauthoryear{Athreya and Ney}{Athreya and Ney}{2004}]%
        {athreya2004branching}
\bibfield{author}{\bibinfo{person}{Krishna~B Athreya} {and}
  \bibinfo{person}{Peter~E Ney}.} \bibinfo{year}{2004}\natexlab{}.
\newblock \bibinfo{booktitle}{\emph{Branching processes}}.
\newblock \bibinfo{publisher}{Courier Corporation}.
\newblock


\bibitem[\protect\citeauthoryear{Biryukov, Khovratovich, and
  Pustogarov}{Biryukov et~al\mbox{.}}{2014}]%
        {biryukov}
\bibfield{author}{\bibinfo{person}{Alex Biryukov}, \bibinfo{person}{Dmitry
  Khovratovich}, {and} \bibinfo{person}{Ivan Pustogarov}.}
  \bibinfo{year}{2014}\natexlab{}.
\newblock \showarticletitle{Deanonymisation of clients in Bitcoin P2P network}.
  In \bibinfo{booktitle}{\emph{Proceedings of the 2014 ACM SIGSAC Conference on
  Computer and Communications Security}}. ACM, \bibinfo{pages}{15--29}.
\newblock


\bibitem[\protect\citeauthoryear{Biryukov and Pustogarov}{Biryukov and
  Pustogarov}{2015}]%
        {biryukov2015bitcoin}
\bibfield{author}{\bibinfo{person}{Alex Biryukov} {and} \bibinfo{person}{Ivan
  Pustogarov}.} \bibinfo{year}{2015}\natexlab{}.
\newblock \showarticletitle{Bitcoin over Tor isn't a good idea}. In
  \bibinfo{booktitle}{\emph{Symposium on Security and Privacy}}. IEEE,
  \bibinfo{pages}{122--134}.
\newblock


\bibitem[\protect\citeauthoryear{Bohannon}{Bohannon}{2016}]%
        {tracking}
\bibfield{author}{\bibinfo{person}{John Bohannon}.}
  \bibinfo{year}{2016}\natexlab{}.
\newblock \showarticletitle{Why criminals can't hide behind Bitcoin}.
\newblock \bibinfo{journal}{\emph{Science}} (\bibinfo{year}{2016}).
\newblock


\bibitem[\protect\citeauthoryear{Bojja~Venkatakrishnan, Fanti, and
  Viswanath}{Bojja~Venkatakrishnan et~al\mbox{.}}{2017}]%
        {dand}
\bibfield{author}{\bibinfo{person}{Shaileshh Bojja~Venkatakrishnan},
  \bibinfo{person}{Giulia Fanti}, {and} \bibinfo{person}{Pramod Viswanath}.}
  \bibinfo{year}{2017}\natexlab{}.
\newblock \showarticletitle{Dandelion: Redesigning the Bitcoin Network for
  Anonymity}.
\newblock \bibinfo{journal}{\emph{POMACS}} \bibinfo{volume}{1},
  \bibinfo{number}{1} (\bibinfo{year}{2017}), \bibinfo{pages}{22}.
\newblock


\bibitem[\protect\citeauthoryear{Chaum}{Chaum}{1988}]%
        {chaum88}
\bibfield{author}{\bibinfo{person}{D. Chaum}.} \bibinfo{year}{1988}\natexlab{}.
\newblock \showarticletitle{The dining cryptographers problem: Unconditional
  sender and recipient untraceability}.
\newblock \bibinfo{journal}{\emph{Journal of cryptology}} \bibinfo{volume}{1},
  \bibinfo{number}{1} (\bibinfo{year}{1988}).
\newblock


\bibitem[\protect\citeauthoryear{Chellappa and Sin}{Chellappa and Sin}{2005}]%
        {chellappa2005personalization}
\bibfield{author}{\bibinfo{person}{Ramnath~K Chellappa} {and}
  \bibinfo{person}{Raymond~G Sin}.} \bibinfo{year}{2005}\natexlab{}.
\newblock \showarticletitle{Personalization versus privacy: An empirical
  examination of the online consumer's dilemma}.
\newblock \bibinfo{journal}{\emph{Information technology and management}}
  \bibinfo{volume}{6}, \bibinfo{number}{2} (\bibinfo{year}{2005}),
  \bibinfo{pages}{181--202}.
\newblock


\bibitem[\protect\citeauthoryear{Corrigan-Gibbs and Ford}{Corrigan-Gibbs and
  Ford}{2010}]%
        {corrigan2010dissent}
\bibfield{author}{\bibinfo{person}{H. Corrigan-Gibbs} {and} \bibinfo{person}{B.
  Ford}.} \bibinfo{year}{2010}\natexlab{}.
\newblock \showarticletitle{Dissent: accountable anonymous group messaging}. In
  \bibinfo{booktitle}{\emph{CCS}}. ACM.
\newblock


\bibitem[\protect\citeauthoryear{Danezis, Diaz, K{\"a}sper, and
  Troncoso}{Danezis et~al\mbox{.}}{2009}]%
        {danezis2009wisdom}
\bibfield{author}{\bibinfo{person}{George Danezis}, \bibinfo{person}{Claudia
  Diaz}, \bibinfo{person}{Emilia K{\"a}sper}, {and} \bibinfo{person}{Carmela
  Troncoso}.} \bibinfo{year}{2009}\natexlab{}.
\newblock \showarticletitle{The wisdom of Crowds: attacks and optimal
  constructions}. In \bibinfo{booktitle}{\emph{European Symposium on Research
  in Computer Security}}. Springer, \bibinfo{pages}{406--423}.
\newblock


\bibitem[\protect\citeauthoryear{Danezis, Diaz, Troncoso, and Laurie}{Danezis
  et~al\mbox{.}}{2010}]%
        {danezis2010drac}
\bibfield{author}{\bibinfo{person}{George Danezis}, \bibinfo{person}{Claudia
  Diaz}, \bibinfo{person}{Carmela Troncoso}, {and} \bibinfo{person}{Ben
  Laurie}.} \bibinfo{year}{2010}\natexlab{}.
\newblock \showarticletitle{Drac: An Architecture for Anonymous Low-Volume
  Communications.}. In \bibinfo{booktitle}{\emph{Privacy Enhancing
  Technologies}}, Vol.~\bibinfo{volume}{6205}. Springer,
  \bibinfo{pages}{202--219}.
\newblock


\bibitem[\protect\citeauthoryear{Dingledine, Mathewson, and
  Syverson}{Dingledine et~al\mbox{.}}{2004}]%
        {tor}
\bibfield{author}{\bibinfo{person}{R. Dingledine}, \bibinfo{person}{N.
  Mathewson}, {and} \bibinfo{person}{P. Syverson}.}
  \bibinfo{year}{2004}\natexlab{}.
\newblock \bibinfo{booktitle}{\emph{Tor: The second-generation onion router}}.
\newblock \bibinfo{type}{{T}echnical {R}eport}. \bibinfo{institution}{DTIC
  Document}.
\newblock


\bibitem[\protect\citeauthoryear{Fanti, Kairouz, Oh, and Viswanath}{Fanti
  et~al\mbox{.}}{2015}]%
        {KFSV14}
\bibfield{author}{\bibinfo{person}{G. Fanti}, \bibinfo{person}{P. Kairouz},
  \bibinfo{person}{S. Oh}, {and} \bibinfo{person}{P. Viswanath}.}
  \bibinfo{year}{2015}\natexlab{}.
\newblock \showarticletitle{Spy vs. Spy: Rumor Source Obfuscation}. In
  \bibinfo{booktitle}{\emph{SIGMETRICS Perform. Eval. Rev.}},
  Vol.~\bibinfo{volume}{43}. \bibinfo{pages}{271--284}.
\newblock
Issue 1.


\bibitem[\protect\citeauthoryear{Fanti and Viswanath}{Fanti and
  Viswanath}{2017}]%
        {fanti2017}
\bibfield{author}{\bibinfo{person}{Giulia Fanti} {and} \bibinfo{person}{Pramod
  Viswanath}.} \bibinfo{year}{2017}\natexlab{}.
\newblock \showarticletitle{Anonymity Properties of the Bitcoin P2P Network}.
\newblock \bibinfo{journal}{\emph{arXiv preprint arXiv:1703.08761}}
  (\bibinfo{year}{2017}).
\newblock


\bibitem[\protect\citeauthoryear{Freedman and Morris}{Freedman and
  Morris}{2002}]%
        {tarzan}
\bibfield{author}{\bibinfo{person}{M.J. Freedman} {and} \bibinfo{person}{R.
  Morris}.} \bibinfo{year}{2002}\natexlab{}.
\newblock \showarticletitle{Tarzan: A peer-to-peer anonymizing network layer}.
  In \bibinfo{booktitle}{\emph{Proc. CCS}}. ACM.
\newblock


\bibitem[\protect\citeauthoryear{Frizell}{Frizell}{2015}]%
        {bitcoin_track}
\bibfield{author}{\bibinfo{person}{Sam Frizell}.}
  \bibinfo{year}{2015}\natexlab{}.
\newblock \showarticletitle{Bitcoins Are Easier To Track Than You Think}.
\newblock \bibinfo{journal}{\emph{Time}} (\bibinfo{date}{January}
  \bibinfo{year}{2015}).
\newblock


\bibitem[\protect\citeauthoryear{Gencer and Sirer}{Gencer and Sirer}{2017}]%
        {stateofbitcoin}
\bibfield{author}{\bibinfo{person}{Adam~Efe Gencer} {and}
  \bibinfo{person}{Emin~G\"un Sirer}.} \bibinfo{year}{2017}\natexlab{}.
\newblock \bibinfo{title}{State of the Bitcoin Network}.
\newblock \bibinfo{howpublished}{Hacking Distributed,
  \url{http://hackingdistributed.com/2017/02/15/state-of-the-bitcoin-network/}}.
    (\bibinfo{date}{February} \bibinfo{year}{2017}).
\newblock


\bibitem[\protect\citeauthoryear{Goel, Robson, Polte, and Sirer}{Goel
  et~al\mbox{.}}{2003}]%
        {goel2003herbivore}
\bibfield{author}{\bibinfo{person}{S. Goel}, \bibinfo{person}{M. Robson},
  \bibinfo{person}{M. Polte}, {and} \bibinfo{person}{E. Sirer}.}
  \bibinfo{year}{2003}\natexlab{}.
\newblock \bibinfo{booktitle}{\emph{Herbivore: A scalable and efficient
  protocol for anonymous communication}}.
\newblock \bibinfo{type}{{T}echnical {R}eport}.
\newblock


\bibitem[\protect\citeauthoryear{Golle and Juels}{Golle and Juels}{2004}]%
        {golle2004dining}
\bibfield{author}{\bibinfo{person}{P. Golle} {and} \bibinfo{person}{A. Juels}.}
  \bibinfo{year}{2004}\natexlab{}.
\newblock \showarticletitle{Dining cryptographers revisited}. In
  \bibinfo{booktitle}{\emph{Advances in Cryptology-Eurocrypt 2004}}.
\newblock


\bibitem[\protect\citeauthoryear{Heilman, Alshenibr, Baldimtsi, Scafuro, and
  Goldberg}{Heilman et~al\mbox{.}}{2016}]%
        {heilman2016tumblebit}
\bibfield{author}{\bibinfo{person}{Ethan Heilman}, \bibinfo{person}{Leen
  Alshenibr}, \bibinfo{person}{Foteini Baldimtsi}, \bibinfo{person}{Alessandra
  Scafuro}, {and} \bibinfo{person}{Sharon Goldberg}.}
  \bibinfo{year}{2016}\natexlab{}.
\newblock \bibinfo{booktitle}{\emph{TumbleBit: An untrusted Bitcoin-compatible
  anonymous payment hub}}.
\newblock \bibinfo{type}{{T}echnical {R}eport}.
  \bibinfo{institution}{Cryptology ePrint Archive, Report 2016/575}.
\newblock


\bibitem[\protect\citeauthoryear{Jedusor}{Jedusor}{2016}]%
        {mimblewimble}
\bibfield{author}{\bibinfo{person}{TE Jedusor}.}
  \bibinfo{year}{2016}\natexlab{}.
\newblock \bibinfo{title}{Mimblewimble}.
\newblock   (\bibinfo{year}{2016}).
\newblock


\bibitem[\protect\citeauthoryear{Koshy}{Koshy}{2013}]%
        {coinseer}
\bibfield{author}{\bibinfo{person}{Philip Koshy}.}
  \bibinfo{year}{2013}\natexlab{}.
\newblock \emph{\bibinfo{title}{CoinSeer: A Telescope Into Bitcoin}}.
\newblock \bibinfo{thesistype}{Ph.D. Dissertation}. \bibinfo{school}{The
  Pennsylvania State University}.
\newblock


\bibitem[\protect\citeauthoryear{Koshy, Koshy, and McDaniel}{Koshy
  et~al\mbox{.}}{2014}]%
        {koshy2014analysis}
\bibfield{author}{\bibinfo{person}{Philip Koshy}, \bibinfo{person}{Diana
  Koshy}, {and} \bibinfo{person}{Patrick McDaniel}.}
  \bibinfo{year}{2014}\natexlab{}.
\newblock \showarticletitle{An analysis of anonymity in bitcoin using p2p
  network traffic}. In \bibinfo{booktitle}{\emph{International Conference on
  Financial Cryptography and Data Security}}. Springer,
  \bibinfo{pages}{469--485}.
\newblock


\bibitem[\protect\citeauthoryear{Maxwell}{Maxwell}{2013}]%
        {maxwell2013coinjoin}
\bibfield{author}{\bibinfo{person}{Greg Maxwell}.}
  \bibinfo{year}{2013}\natexlab{}.
\newblock \showarticletitle{CoinJoin: Bitcoin privacy for the real world}. In
  \bibinfo{booktitle}{\emph{Post on Bitcoin Forum}}.
\newblock


\bibitem[\protect\citeauthoryear{McMillen}{McMillen}{2017}]%
        {bitcoin_botnet}
\bibfield{author}{\bibinfo{person}{Dave McMillen}.}
  \bibinfo{year}{2017}\natexlab{}.
\newblock \showarticletitle{Mirai IoT Botnet: Mining for Bitcoins?}
\newblock \bibinfo{journal}{\emph{SecurityIntelligence}} (\bibinfo{date}{April}
  \bibinfo{year}{2017}).
\newblock


\bibitem[\protect\citeauthoryear{Meiklejohn, Pomarole, Jordan, Levchenko,
  McCoy, Voelker, and Savage}{Meiklejohn et~al\mbox{.}}{2013}]%
        {fistful}
\bibfield{author}{\bibinfo{person}{Sarah Meiklejohn}, \bibinfo{person}{Marjori
  Pomarole}, \bibinfo{person}{Grant Jordan}, \bibinfo{person}{Kirill
  Levchenko}, \bibinfo{person}{Damon McCoy}, \bibinfo{person}{Geoffrey~M
  Voelker}, {and} \bibinfo{person}{Stefan Savage}.}
  \bibinfo{year}{2013}\natexlab{}.
\newblock \showarticletitle{A fistful of bitcoins: characterizing payments
  among men with no names}. In \bibinfo{booktitle}{\emph{Proceedings of the
  2013 conference on Internet measurement conference}}. ACM,
  \bibinfo{pages}{127--140}.
\newblock


\bibitem[\protect\citeauthoryear{Mezard and Montanari}{Mezard and
  Montanari}{2009}]%
        {mezard2009information}
\bibfield{author}{\bibinfo{person}{Marc Mezard} {and} \bibinfo{person}{Andrea
  Montanari}.} \bibinfo{year}{2009}\natexlab{}.
\newblock \bibinfo{booktitle}{\emph{Information, physics, and computation}}.
\newblock \bibinfo{publisher}{Oxford University Press}.
\newblock


\bibitem[\protect\citeauthoryear{Miller, Litton, Pachulski, Gupta, Levin,
  Spring, and Bhattacharjee}{Miller et~al\mbox{.}}{2015}]%
        {coinscope}
\bibfield{author}{\bibinfo{person}{Andrew Miller}, \bibinfo{person}{James
  Litton}, \bibinfo{person}{Andrew Pachulski}, \bibinfo{person}{Neal Gupta},
  \bibinfo{person}{Dave Levin}, \bibinfo{person}{Neil Spring}, {and}
  \bibinfo{person}{Bobby Bhattacharjee}.} \bibinfo{year}{2015}\natexlab{}.
\newblock \bibinfo{title}{Discovering Bitcoin's public topology and influential
  nodes}.
\newblock   (\bibinfo{year}{2015}).
\newblock


\bibitem[\protect\citeauthoryear{Mittal, Wright, and Borisov}{Mittal
  et~al\mbox{.}}{2013}]%
        {mittal2013pisces}
\bibfield{author}{\bibinfo{person}{Prateek Mittal}, \bibinfo{person}{Matthew
  Wright}, {and} \bibinfo{person}{Nikita Borisov}.}
  \bibinfo{year}{2013}\natexlab{}.
\newblock \showarticletitle{Pisces: Anonymous communication using social
  networks}. In \bibinfo{booktitle}{\emph{{NDSS}}}. {ACM}.
\newblock


\bibitem[\protect\citeauthoryear{Nakamoto}{Nakamoto}{2008}]%
        {bitcoin}
\bibfield{author}{\bibinfo{person}{Satoshi Nakamoto}.}
  \bibinfo{year}{2008}\natexlab{}.
\newblock \bibinfo{title}{Bitcoin: A peer-to-peer electronic cash system}.
\newblock   (\bibinfo{year}{2008}).
\newblock


\bibitem[\protect\citeauthoryear{Ober, Katzenbeisser, and Hamacher}{Ober
  et~al\mbox{.}}{2013}]%
        {ober2013structure}
\bibfield{author}{\bibinfo{person}{Micha Ober}, \bibinfo{person}{Stefan
  Katzenbeisser}, {and} \bibinfo{person}{Kay Hamacher}.}
  \bibinfo{year}{2013}\natexlab{}.
\newblock \showarticletitle{Structure and anonymity of the bitcoin transaction
  graph}.
\newblock \bibinfo{journal}{\emph{Future internet}} \bibinfo{volume}{5},
  \bibinfo{number}{2} (\bibinfo{year}{2013}), \bibinfo{pages}{237--250}.
\newblock


\bibitem[\protect\citeauthoryear{Peterson and Davie}{Peterson and
  Davie}{2007}]%
        {peterson2007computer}
\bibfield{author}{\bibinfo{person}{Larry~L Peterson} {and}
  \bibinfo{person}{Bruce~S Davie}.} \bibinfo{year}{2007}\natexlab{}.
\newblock \bibinfo{booktitle}{\emph{Computer networks: a systems approach}}.
\newblock \bibinfo{publisher}{Elsevier}.
\newblock


\bibitem[\protect\citeauthoryear{Pinto, Thiran, and Vetterli}{Pinto
  et~al\mbox{.}}{2012}]%
        {PTV12}
\bibfield{author}{\bibinfo{person}{P.~C. Pinto}, \bibinfo{person}{P. Thiran},
  {and} \bibinfo{person}{M. Vetterli}.} \bibinfo{year}{2012}\natexlab{}.
\newblock \showarticletitle{Locating the source of diffusion in large-scale
  networks}.
\newblock \bibinfo{journal}{\emph{Physical review letters}}
  \bibinfo{volume}{109}, \bibinfo{number}{6} (\bibinfo{year}{2012}),
  \bibinfo{pages}{068702}.
\newblock


\bibitem[\protect\citeauthoryear{Reid and Harrigan}{Reid and Harrigan}{2013}]%
        {reid2013analysis}
\bibfield{author}{\bibinfo{person}{Fergal Reid} {and} \bibinfo{person}{Martin
  Harrigan}.} \bibinfo{year}{2013}\natexlab{}.
\newblock \showarticletitle{An analysis of anonymity in the bitcoin system}.
\newblock In \bibinfo{booktitle}{\emph{Security and privacy in social
  networks}}. \bibinfo{publisher}{Springer}, \bibinfo{pages}{197--223}.
\newblock


\bibitem[\protect\citeauthoryear{Reiter and Rubin}{Reiter and Rubin}{1998}]%
        {reiter1998crowds}
\bibfield{author}{\bibinfo{person}{Michael~K Reiter} {and}
  \bibinfo{person}{Aviel~D Rubin}.} \bibinfo{year}{1998}\natexlab{}.
\newblock \showarticletitle{Crowds: Anonymity for web transactions}.
\newblock \bibinfo{journal}{\emph{ACM Transactions on Information and System
  Security (TISSEC)}} \bibinfo{volume}{1}, \bibinfo{number}{1}
  (\bibinfo{year}{1998}), \bibinfo{pages}{66--92}.
\newblock


\bibitem[\protect\citeauthoryear{Ron and Shamir}{Ron and Shamir}{2013}]%
        {ron2013quantitative}
\bibfield{author}{\bibinfo{person}{Dorit Ron} {and} \bibinfo{person}{Adi
  Shamir}.} \bibinfo{year}{2013}\natexlab{}.
\newblock \showarticletitle{Quantitative analysis of the full bitcoin
  transaction graph}. In \bibinfo{booktitle}{\emph{International Conference on
  Financial Cryptography and Data Security}}. Springer, \bibinfo{pages}{6--24}.
\newblock


\bibitem[\protect\citeauthoryear{Ruffing, Moreno-Sanchez, and Kate}{Ruffing
  et~al\mbox{.}}{2014}]%
        {coinshuffle}
\bibfield{author}{\bibinfo{person}{Tim Ruffing}, \bibinfo{person}{Pedro
  Moreno-Sanchez}, {and} \bibinfo{person}{Aniket Kate}.}
  \bibinfo{year}{2014}\natexlab{}.
\newblock \showarticletitle{Coin{S}huffle: Practical decentralized coin mixing
  for Bitcoin}. In \bibinfo{booktitle}{\emph{European Symposium on Research in
  Computer Security}}. Springer, \bibinfo{pages}{345--364}.
\newblock


\bibitem[\protect\citeauthoryear{Sasson, Chiesa, Garman, Green, Miers, Tromer,
  and Virza}{Sasson et~al\mbox{.}}{2014}]%
        {sasson2014zerocash}
\bibfield{author}{\bibinfo{person}{Eli~Ben Sasson}, \bibinfo{person}{Alessandro
  Chiesa}, \bibinfo{person}{Christina Garman}, \bibinfo{person}{Matthew Green},
  \bibinfo{person}{Ian Miers}, \bibinfo{person}{Eran Tromer}, {and}
  \bibinfo{person}{Madars Virza}.} \bibinfo{year}{2014}\natexlab{}.
\newblock \showarticletitle{Zerocash: Decentralized anonymous payments from
  bitcoin}. In \bibinfo{booktitle}{\emph{Symposium on Security and Privacy}}.
  IEEE, \bibinfo{pages}{459--474}.
\newblock


\bibitem[\protect\citeauthoryear{Schrijver}{Schrijver}{2002}]%
        {schrijver2002combinatorial}
\bibfield{author}{\bibinfo{person}{Alexander Schrijver}.}
  \bibinfo{year}{2002}\natexlab{}.
\newblock \bibinfo{booktitle}{\emph{Combinatorial optimization: polyhedra and
  efficiency}}. Vol.~\bibinfo{volume}{24}.
\newblock \bibinfo{publisher}{Springer Science \& Business Media}.
\newblock


\bibitem[\protect\citeauthoryear{Sherwood, Bhattacharjee, and
  Srinivasan}{Sherwood et~al\mbox{.}}{2005}]%
        {sherwood2005p5}
\bibfield{author}{\bibinfo{person}{Rob Sherwood}, \bibinfo{person}{Bobby
  Bhattacharjee}, {and} \bibinfo{person}{Aravind Srinivasan}.}
  \bibinfo{year}{2005}\natexlab{}.
\newblock \showarticletitle{P5: A protocol for scalable anonymous
  communication}.
\newblock \bibinfo{journal}{\emph{Journal of Computer Security}}
  \bibinfo{volume}{13}, \bibinfo{number}{6} (\bibinfo{year}{2005}),
  \bibinfo{pages}{839--876}.
\newblock


\bibitem[\protect\citeauthoryear{van~den Hooff, Lazar, Zaharia, and
  Zeldovich}{van~den Hooff et~al\mbox{.}}{[n. d.]}]%
        {vuvuzela1}
\bibfield{author}{\bibinfo{person}{Jelle van~den Hooff}, \bibinfo{person}{David
  Lazar}, \bibinfo{person}{Matei Zaharia}, {and} \bibinfo{person}{Nickolai
  Zeldovich}.} \bibinfo{year}{[n. d.]}\natexlab{}.
\newblock \showarticletitle{Scalable Private Messaging Resistant to Traffic
  Analysis}.
\newblock  (\bibinfo{year}{[n. d.]}).
\newblock


\bibitem[\protect\citeauthoryear{Wang, Dong, Zhang, and Tan}{Wang
  et~al\mbox{.}}{2014}]%
        {wang2014rumor}
\bibfield{author}{\bibinfo{person}{Zhaoxu Wang}, \bibinfo{person}{Wenxiang
  Dong}, \bibinfo{person}{Wenyi Zhang}, {and} \bibinfo{person}{Chee~Wei Tan}.}
  \bibinfo{year}{2014}\natexlab{}.
\newblock \showarticletitle{Rumor source detection with multiple observations:
  Fundamental limits and algorithms}. In \bibinfo{booktitle}{\emph{ACM
  SIGMETRICS Performance Evaluation Review}}, Vol.~\bibinfo{volume}{42}. ACM,
  \bibinfo{pages}{1--13}.
\newblock


\bibitem[\protect\citeauthoryear{Wolinsky, Corrigan-Gibbs, Ford, and
  Johnson}{Wolinsky et~al\mbox{.}}{2012}]%
        {anonymousscale}
\bibfield{author}{\bibinfo{person}{David~Isaac Wolinsky},
  \bibinfo{person}{Henry Corrigan-Gibbs}, \bibinfo{person}{Bryan Ford}, {and}
  \bibinfo{person}{Aaron Johnson}.} \bibinfo{year}{2012}\natexlab{}.
\newblock \showarticletitle{Dissent in Numbers: Making Strong Anonymity
  Scale.}. In \bibinfo{booktitle}{\emph{OSDI}}. \bibinfo{pages}{179--182}.
\newblock


\bibitem[\protect\citeauthoryear{Zamani, Saia, Movahedi, and Khoury}{Zamani
  et~al\mbox{.}}{2013}]%
        {zamani2013towards}
\bibfield{author}{\bibinfo{person}{M. Zamani}, \bibinfo{person}{J. Saia},
  \bibinfo{person}{M. Movahedi}, {and} \bibinfo{person}{J. Khoury}.}
  \bibinfo{year}{2013}\natexlab{}.
\newblock \showarticletitle{Towards provably-secure scalable anonymous
  broadcast}. In \bibinfo{booktitle}{\emph{USENIX FOCI}}.
\newblock


\bibitem[\protect\citeauthoryear{Zantout and Haraty}{Zantout and
  Haraty}{2011}]%
        {i2p}
\bibfield{author}{\bibinfo{person}{Bassam Zantout} {and} \bibinfo{person}{Ramzi
  Haraty}.} \bibinfo{year}{2011}\natexlab{}.
\newblock \showarticletitle{{I2P} data communication system}. In
  \bibinfo{booktitle}{\emph{Proceedings of ICN}}. Citeseer,
  \bibinfo{pages}{401--409}.
\newblock


\bibitem[\protect\citeauthoryear{Zhu and Ying}{Zhu and Ying}{2014}]%
        {ZY13}
\bibfield{author}{\bibinfo{person}{Kai Zhu} {and} \bibinfo{person}{Lei Ying}.}
  \bibinfo{year}{2014}\natexlab{}.
\newblock \showarticletitle{A robust information source estimator with sparse
  observations}.
\newblock \bibinfo{journal}{\emph{Computational Social Networks}}
  \bibinfo{volume}{1}, \bibinfo{number}{1} (\bibinfo{year}{2014}),
  \bibinfo{pages}{3}.
\newblock


\end{thebibliography}

%\newpage
\appendix

\section{Algorithms}
\label{app:algo}
\Algo~pseudocode is presented in Algorithm \ref{algo:dandelion}. 
Pseudocode for handling black-hole attacks is included in Algorithm \ref{algo:dandelionpp}.

\begin{algorithm}[t]
\DontPrintSemicolon
\KwIn{Message $X_v$, source $v$, anonymity graph $H$, spreading graph $G$, parameter $q\in (0,1)$}
%\KwOut{A connected, directed graph $G(V,E)$ with average degree $d$}
anonPhase $\gets$ True \;
head $\gets v$ \;
recipients $\gets  \{v \}$ \;
\While{anonPhase} {
    \tcc{relay message to random node} 
    target $\sim$ Unif$(\mathcal N_{out}(H, \text{head}))$\;
    recipients $\gets$  recipients $\cup \{X_v\}$ from head to target \;
    head $\gets$ target \;
    $u \sim$ Unif$([0,1])$ \;
    \If{$u \leq q$} {
      anonPhase $\gets$ False \;
    }
}
\tcc{Run diffusion on $G$ from `head'} 
{\sc Diffusion}$(X_v, \text{head}, G)$
\caption{{\sc Dandelion Spreading}~\cite{dand}. $\mathcal N_{out}(G,v)$ denotes the out-neighbors of node $v$ on directed graph $G$.}
\label{algo:dandelion}
\end{algorithm}

\begin{algorithm}[t]
\DontPrintSemicolon
\KwIn{Message and timeout parameter $(X, T_\mathrm{base})$ received by $v$ in the anonymity phase, out-neighbors $\mathcal N_{out}(G,v)$ on anonymity graph $G$, spreading graph $H$, parameter $q\in (0,1)$}
%\KwOut{A connected, directed graph $G(V,E)$ with average degree $d$}
$T_\mathrm{out}(v) \sim \exp(1/T_\mathrm{base})$ \tcp*{set timer} 
forward $(X,T_\mathrm{base})$ according to dandelion\; 
\tcc{wait until message re-received}
\While{current\_time $\leq T_\mathrm{out}$} {
\If{$X$ received} {
timer $\leftarrow$ inactive\;
break\;
}
continue\;
}
\tcc{start diffusion}
\If {timer is active} 
{ 
{\sc Diffusion}$(X, v, H)$ \;
}
\caption{{\sc \Algopp~ Spreading} at node $v$. The protocol guarantees eventual network-wide propagation of transactions.}
\label{algo:dandelionpp}
\end{algorithm}

\section{Proofs}

\subsection{Proof  Theorem~\ref{thm: dregular result}} \label{apx: dandelionpp}

\thmfirstspy*
\begin{proof}
Each node has two predecessors and two successors; let us arbitrarily label these as the left predecessor (successor) and the right predecessor (successor). 
For $i,j,k,l \in \{a,h\}$, let $\mathcal{E}_v(^{i,j}_{k,l})$ denote the event that $v$'s left successor, right successor, left predecessor and right predecessor are of types $i,j,k$ and $l$ respectively, where type of $a$ denotes an adversarial node and a type $h$ denotes an honest node.
Also, assume that for any honest node $v\in V_H$ the number of messages forwarded by $v$ is statistically independent of the local neighborhood upstream of $v$. 
This assumption follows from the locally-tree-like nature of sparse random graphs \cite{mezard2009information}.
We use the following two lemmas, whose proofs are included  below. %in Appendix \ref{apx: dandelionpp}:

\begin{restatable}{lemma}{dregbound} \label{lemma: d regular fwd bound}
Let $J_v$ denote the number of transactions (from honest servers) that reach $v$ before reaching an adversary. Then, for each event $\mathcal{E}\in \{\mathcal{E}_v(^{a,h}_{h,h}), \mathcal{E}_v(^{h,a}_{h,h}), \mathcal{E}_v(^{h,h}_{a,h}), \mathcal{E}_v(^{h,h}_{h,a})\}$ 
%\begin{align*}
$
%\mathbb{E}[\max_{x\in\mathcal{X}}\mathbb{P}(X_v=x|\mathbf{S},\Gamma(V_A),\mathcal{E}, J_v)|\mathcal{E},J_v] \leq \frac{1}{J_v + 1}.
\mathbb{E}[\max_{x\in\mathcal{X}}\mathbb{P}(X_v=x|\mathbf{O},\mathcal{E}, J_v)|\mathcal{E},J_v] \leq \frac{1}{J_v + 1}.
$
%\end{align*}
\end{restatable}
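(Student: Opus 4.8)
\medskip
The plan is to prove the stronger pointwise statement $\max_{x\in\mathcal X}\mathbb P(X_v = x\mid \mathbf O,\mathcal E,J_v)\le \frac1{J_v+1}$ for every realization of $\mathbf O$; the asserted inequality in expectation is then immediate. By the symmetry of the construction under exchanging the ``left'' and ``right'' labels it suffices to treat two representatives, $\mathcal{E}_v(^{a,h}_{h,h})$ (the adversarial neighbor is an out-neighbor of $v$) and $\mathcal{E}_v(^{h,h}_{a,h})$ (it is an in-neighbor); the other two events follow by relabeling. First I would introduce the latent ``candidate set'' $\mathcal T_v$, consisting of $v$'s own transaction $X_v$ together with the $J_v$ honest-source transactions that are relayed into $v$ before having touched any adversary, so that $X_v\in\mathcal T_v$ and $|\mathcal T_v| = J_v+1$ deterministically given $J_v$. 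The heart of the argument is the claim that, conditioned on $(\mathbf O,\mathcal E,J_v)$ and on the \emph{unordered} set $\mathcal T_v$, the source $X_v$ is distributed uniformly over $\mathcal T_v$. Granting this, I would write
\begin{align*}
\mathbb P(X_v = x\mid \mathbf O,\mathcal E,J_v)
 &= \sum_{\mathcal T\,\ni\, x}\mathbb P(\mathcal T_v=\mathcal T\mid \mathbf O,\mathcal E,J_v)\,\mathbb P\!\left(X_v=x\mid \mathcal T_v=\mathcal T,\mathbf O,\mathcal E,J_v\right)\\
 &= \frac1{J_v+1}\,\mathbb P\!\left(x\in\mathcal T_v\mid \mathbf O,\mathcal E,J_v\right)\ \le\ \frac1{J_v+1},
\end{align*}
so the maximum over $x$ is at most $\tfrac1{J_v+1}$ pointwise.

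To establish the uniformity claim, I would exhibit, for each $x'\in\mathcal T_v$, a measure-preserving involution of the underlying probability space (the random anonymity graph, the uniform assignment of transactions to honest sources, the i.i.d.\ uniform out-neighbor choices, and the i.i.d.\ geometric fluff coins) that maps realizations with $X_v$ equal to one element of $\mathcal T_v$ to realizations with $X_v$ equal to another, while fixing $\mathbf O$, $\mathcal E$, $J_v$ and the unordered set $\mathcal T_v$; since every pair of elements of $\mathcal T_v$ is so connected, all are equally likely. Concretely, write $x = X_v$ and suppose $x'$ enters $v$ along a stem prefix $P$ from its true source $w$, where $P$ consists entirely of honest nodes (this is exactly what ``$x'$ reaches $v$ before an adversary'' means). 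The involution reassigns $w$ to be the source of $x$, routes $x$ along $P$ to $v$ and then along $x$'s original post-$v$ trajectory, makes $v$ the source of $x'$, and routes $x'$ along $x'$'s original post-$v$ trajectory. Since \algo~spreading treats a node's own transaction and a relayed transaction identically at $v$ (a uniformly chosen out-neighbor, a fluff coin of bias $q$), and the only edges whose traffic is altered lie on $P$, which is entirely honest and hence unobserved, the image realization produces exactly the same observations $\mathbf O$; the total number of stem hops is unchanged, so the map is probability-preserving; and $\mathcal T_v$, $J_v$, and the neighbor types of $v$ are manifestly preserved. (Transaction receipt times do not break this, since by assumption generation times are unknown to the adversary.)

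The main obstacle is turning this ``relabel the paths'' description into a bona fide measure-preserving map: one must check that splicing the prefix $P$ onto $x$ and deleting it from $x'$ does not create an inconsistent configuration --- e.g.\ a node forwarding a message it never received, or $P$ colliding with another transaction's route --- and that the joint law of the coins is genuinely invariant under the splice. This is precisely where the locally-tree-like hypothesis stated just before the lemma enters: on the asymptotically tree-like neighborhood of a sparse random $4$-regular graph, $P$ lives in its own branch above $v$, disjoint from $v$'s downstream subtree and from the branch feeding $v$'s other in-edge, so it can be moved between transactions without disturbing the rest of the configuration; and the assumed independence of $v$'s forwarding load from its upstream neighborhood ensures that conditioning on $(\mathcal E,J_v)$ does not distort the relevant likelihoods. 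A secondary check is the out-neighbor case $\mathcal{E}_v(^{a,h}_{h,h})$: there the traffic $v$ sends to its adversarial out-neighbor is observed, but because $X_v$ is forwarded along that edge with the same probability as any relayed transaction, this only raises $\mathbb P(x\in\mathcal T_v\mid\mathbf O,\cdots)$ toward $1$ for the transactions actually seen crossing it, and never disturbs the symmetry among the candidates --- consistent with the bound $\tfrac1{J_v+1}$ being attained rather than merely approached.
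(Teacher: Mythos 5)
Your proposal is correct and takes essentially the same route as the paper: both rest on the exchangeability of the $J_v+1$ candidate sources/transactions in the "ward" of $v$ given the adversary's observations, yielding a uniform posterior over the candidates and hence the pointwise bound $1/(J_v+1)$. The paper states this exchangeability as a likelihood identity under permutation of the message assignment within $W_v$ (asserted with little justification), whereas you justify it via an explicit path-splicing involution; this is a more detailed rendering of the same step, relying on the same locally-tree-like assumption the paper invokes.
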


\begin{restatable}{lemma}{msgsbound} \label{lemma: fwd msgs bound}
For any server $v\in V_H$, let $F_v$ denote the number of transactions that (i) reach $v$ before reaching any adversary and (ii) are forwarded by $v$ along its left outgoing edge. Then
%\begin{align}
$
\mathbb{E}\left[ \frac{1}{F_v + 1} \right] \leq \frac{2\mathbf{D}_\mathtt{FS}(v)}{p}.
$
%\end{align} 
\end{restatable}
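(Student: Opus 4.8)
\emph{Proof proposal.} The plan is to establish the equivalent inequality $\mathbf{D}_\mathtt{FS}(v) \geq \frac{p}{2}\,\mathbb{E}\!\left[\frac{1}{F_v+1}\right]$ by exhibiting a single ``good'' event of probability $\Theta(p)$ on which the realized first-spy precision at $v$ is at least $\frac{1}{F_v+1}$, and then using the graph's locally-tree-like structure to decouple that event from $F_v$. First I would write the realized first-spy precision as $D_\mathtt{FS}(v) = R_\mathtt{FS}(v)/N_v$, where $R_\mathtt{FS}(v) = \mathds{1}(\mathtt{M}_\mathtt{FS}(X_v)=v)$ is the recall indicator and $N_v = \sum_{u\in V_H}\mathds{1}(\mathtt{M}_\mathtt{FS}(X_u)=v)$ is the number of honest transactions the first-spy map assigns to $v$, so that $\mathbf{D}_\mathtt{FS}(v) = \mathbb{E}[D_\mathtt{FS}(v)]$. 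The structural fact I would use is that a transaction is assigned to $v$ by the first-spy estimator precisely when it reaches $v$ before reaching any adversary and $v$ then forwards it to a successor that is a spy. Hence, writing $L,R$ for the left and right out-neighbours of $v$ in the $4$-regular anonymity graph, on the event $\mathcal{G} := \{L \text{ is a spy},\ R \text{ is honest}\}$ we have \emph{exactly} $N_v = F_v$, since the transactions mapped to $v$ are then precisely those that reach $v$ before an adversary and are relayed along $v$'s left edge.

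Next I would lower-bound recall on $\mathcal{G}$. Since $X_v$ originates at $v$ and is always relayed at least once (Algorithm~\ref{algo:dandelion} forwards to \texttt{target} before flipping its $q$-coin), on the sub-event $\mathcal{G}\cap\{v\text{ forwards }X_v\text{ left at hop }1\}$ the transaction $X_v$ is delivered directly to the spy $L$, so $R_\mathtt{FS}(v)=1$; and $X_v$ is itself counted by $F_v$, so there $F_v = F_v^{-}+1$, where $F_v^{-}$ counts only transactions originating at honest nodes other than $v$. The coin flip selecting the direction of $v$'s own transaction is fair and independent of everything else, and---by the locally-tree-like structure of sparse random regular graphs, the same assumption already invoked in the proof of Theorem~\ref{thm: dregular result}---the count $F_v^{-}$, a functional of the neighbourhood \emph{upstream} of $v$ together with $v$'s relay coin flips, is (asymptotically) independent of the \emph{downstream} events $\{L\text{ spy}\}$ and $\{R\text{ honest}\}$. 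Combining these,
\begin{align*}
\mathbf{D}_\mathtt{FS}(v)
\;\geq\; \mathbb{E}\!\left[ \frac{R_\mathtt{FS}(v)}{N_v}\,\mathds{1}_{\mathcal{G}}\,\mathds{1}(v\text{ forwards }X_v\text{ left at hop }1) \right]
\;=\; \mathbb{E}\!\left[\frac{1}{F_v^{-}+1}\right] \cdot \tfrac{1}{2}\,\mathbb{P}(\mathcal{G}),
\end{align*}
and since $F_v^{-}\leq F_v$ we have $\mathbb{E}[1/(F_v^{-}+1)] \geq \mathbb{E}[1/(F_v+1)]$, while $\mathbb{P}(\mathcal{G}) \to p(1-p)$ for the random $4$-regular graph with an honest $v$.

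This already gives a bound of the advertised shape; to recover the stated constant $\tfrac{2}{p}$ I would additionally fold in the event $\{L,R\text{ both spies}\}$ (on which $R_\mathtt{FS}(v)=1$ deterministically and $N_v$ is a bounded multiple of $F_v$) together with the left--right symmetric copies of the above argument, so that the probabilities $p(1-p)$ and $p^2$ recombine to $p$ up to lower-order terms; alternatively, for $p$ bounded away from $1$ one simply absorbs the $(1-p)^{-1}$ into the constant. \textbf{The main obstacle} is the independence step: rigorously justifying that the upstream count $F_v^{-}$ is asymptotically independent of the identities of $v$'s out-neighbours (and of whether they are spies) requires the locally-tree-like coupling of the random $4$-regular graph, and one must handle transactions that revisit $v$ by reading ``$F_v$'' as the number of transactions for which there exists a hop where $v$ relays them left with no adversary having yet observed them. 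The remaining manipulations are routine.
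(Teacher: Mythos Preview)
Your approach is essentially the paper's: condition on the event that $v$'s left out-neighbour is a spy and that $X_v$ is routed left, identify the realized first-spy precision at $v$ with $1/(F_v+1)$, and then invoke the locally-tree-like independence of the upstream count from the downstream spy/honest status. The paper does exactly this, but more tersely: it conditions only on $\mathcal{E}_v=\{L\text{ is a spy}\}$ and $\mathcal{L}_v=\{X_v\text{ goes left}\}$ (probability $p/2$), asserts directly that $\mathbb{E}[D_\mathtt{FS}(v)\mid \mathcal{E}_v,\mathcal{L}_v]=\mathbb{E}[1/(F_v+1)\mid \mathcal{E}_v,\mathcal{L}_v]$, and then drops the conditioning by independence. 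In particular the paper does \emph{not} condition on $R$ being honest.

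Your extra conditioning on $\{R\text{ honest}\}$ is the more careful move: without it, on $\mathcal{E}_v\cap\mathcal{L}_v$ one only has $N_v\ge F_v+1$ (right-going fresh transactions are also mapped to $v$ when $R$ is a spy), so $D_\mathtt{FS}(v)\le 1/(F_v+1)$, which is the wrong direction for the lower bound the lemma needs. The paper's equality is thus really an implicit use of the same restriction you make explicit. The price you pay is the factor $(1-p)$, which is harmless for the use of the lemma in Theorem~\ref{thm: dregular result}.

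Your proposed fix to recover the exact constant by ``folding in'' the event $\{L,R\text{ both spies}\}$ does not work as stated: on that event $N_v$ also counts the right-going fresh transactions, so $N_v$ is not a bounded multiple of $F_v$ in the direction you need (you would need $N_v\le c\,F_v$, but the right-going count is not controlled by $F_v$). The clean resolution is simply the one you already give: accept the bound $\mathbb{E}[1/(F_v+1)]\le \tfrac{2}{p(1-p)}\mathbf{D}_\mathtt{FS}(v)$ and absorb $(1-p)^{-1}$ into the constant in Theorem~\ref{thm: dregular result}.
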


Now, recall the events $\mathcal{E}_v(^{i,j}_{k,l})$, where $i,j,k,l\in\{a,h\}$. 
There are a total of $2^4=16$ such events that are possible for the neighborhood around server $v$.
Out of these events, $\binom{4}{2} = 6$ of them occur with a probability of $p^2(1-p)^2$ (such as $\mathcal{E}_v(^{a,a}_{h,h})$ for e.g.). Similarly $\binom{4}{3} = 4$ events occur with a probability of $p^3(1-p)$ and one event occurs with a probability of $p^4$. Since the per-node precision can be at most 1, the above events contribute to a cumulative precision gain of at most $6p^2 + 4p^3 + p^4$.  

The remaining cases are events where only one neighbor is adversarial---$\mathcal{E}_v(^{a,h}_{h,h}), \mathcal{E}_v(^{h,a}_{h,h}), \mathcal{E}_v(^{h,h}_{a,h})$ and $\mathcal{E}_v(^{h,h}_{h,a})$---or when all of the neighbors are honest $\mathcal{E}_v(^{h,h}_{h,h})$. 
Note that each of these events occur with a probability of at least $p(1-p)^3$ and hence the trivial bound used above cannot be used here. Let us first consider the event $\mathcal{E}_v(^{h,h}_{a,h})$ where only the left predecessor node of $v$ is an adversary. Let $U\in V_H$ denote the right predecessor of $v$ and $F_U$ the number of fresh transactions that are forwarded by $U$ to $v$. Then from Lemma~\ref{lemma: d regular fwd bound} we have 
\begin{eqnarray*}
\scalebox{0.9}{$\mathbb{E}[\max_{x\in\mathcal{X}}\mathbb{P}(X_v=x|\mathbf{O}), F_U, \mathcal{E}_v(^{h,h}_{a,h}))|F_U, \mathcal{E}_v(^{h,h}_{a,h})] \leq  \frac{1}{F_U+1} $} \notag \\
\Rightarrow \sum_{f\geq 0} \mathbb{P}(F_U = f, \mathcal{E}_v(^{h,h}_{a,h}))  \cdot \notag \\ 
\mathbb{E}[\max_{x\in\mathcal{X}}\mathbb{P}(X_v=x|\mathbf{O}, F_U, \mathcal{E}_v(^{h,h}_{a,h}))|F_U, \mathcal{E}_v(^{h,h}_{a,h})] \notag \\
\leq \mathbb{P}(\mathcal{E}_v(^{h,h}_{a,h})) \mathbb{E}\left[ \frac{1}{F_U +1} \right]  \leq p \frac{2\mathbf{D}_\mathtt{FS}(v)}{p} = 2\mathbf{D}_\mathtt{FS}(v).
\end{eqnarray*} 
where we use (a) the independence of $F_U$ and the local neighborhood upstream of $U$, and (b) Lemma~\ref{lemma: fwd msgs bound}. By analogous arguments we can similarly bound the expected precision under events $\mathcal{E}_v(^{h,h}_{h,a}), \mathcal{E}_v(^{a,h}_{h,h})$ and $\mathcal{E}_v(^{h,a}_{h,h})$. 

Finally consider $\mathcal{E}_v(^{h,h}_{h,h})$, in which case $v$'s location is completely hidden from the adversaries. Let $I$ be the set of such nodes. Since each adversary is a neighbor to at most 4 honest nodes, there are at least $\tilde{n}-4np = (1-5p)n$ nodes in $I$. So  $\forall x\in\mathcal{X}$, we have $\mathbb{P}(X_v = x|\mathbf{O},G,I,\mathcal{E}_v(^{h,h}_{h,h}))$
\begin{align}
 = \mathbb{P}(X_{v'} = x|\mathbf{O},G,I,\mathcal{E}_v(^{h,h}_{h,h}))&~\forall v'\in I \\
 \Rightarrow \mathbb{P}(X_{v} = x|\mathbf{O},G,I,\mathcal{E}_v(^{h,h}_{h,h})) &\leq\frac{1}{|I|} \leq \frac{1}{(1-5p)n} \notag \\
\Rightarrow \max_{x\in\mathcal{X}} \mathbb{P}(X_v = x|\mathbf{O},\mathcal{E}_v(^{h,h}_{h,h})) &\leq \frac{1}{(1-5p)n},
\end{align}
and hence $\mathbb{E}[\max_{x\in\mathcal{X}} \mathbb{P}(X_v = x|\mathbf{O},\mathcal{E}_v(^{h,h}_{h,h}))|\mathcal{E}_v(^{h,h}_{h,h})] \leq \frac{1}{(1-5p)n}$. Summing over all the cases considered gives the result. 
\end{proof}

\subsubsection{Proof of Lemma \ref{lemma: d regular fwd bound}}
\dregbound*
\begin{proof}
W.l.o.g., let $U_1,U_2,\ldots,U_{J_v}$ be the servers whose transactions are received by $v$ and let $W_v = \{v, U_1,U_2,\ldots,U_{J_v}\}$. Consider any matching $\mathbf{x}$ where $x_u$ is the message assigned to server $u$. Then 
\begin{align}
\mathbb{P}(\mathbf{S}|G, W_v, \mathbf{X}=\mathbf{x}) = \mathbb{P}(\mathbf{S}|G, W_v, \mathbf{X}=\mathbf{x'})
\end{align}
where $\mathbf{x'}$ is a new assignment of messages such that $x'_u = x_u$ for all $u\in V_H, u\notin W_v$ and $x'_{W_v} = x_{W_v}$. This implies, for any fixed $x\in\mathcal{X}$,  $\mathbb{P}(X_v = x | \mathbf{O},\mathcal{E}, J_v, G, W_v, X_{V_H\backslash W_v})=$ 
\begin{eqnarray}
&=& \frac{\mathbb{P}(X_v = x, \mathbf{O}, X_{V_H\backslash W_v} | \mathcal{E}, J_v, G, W_v )}{\mathbb{P}(\mathbf{O},  X_{V_H\backslash W_v}|\mathcal{E}, J_v,G, W_v)} \notag \\
&=& \frac{\mathbb{P}(\mathbf{O} | \mathcal{E}, J_v,G, W_v, X_v = x, X_{V_H\backslash W_v} )}{\sum_{x'}\mathbb{P}(\mathbf{O} | \mathcal{E}, J_v,G, W_v, X_v = x', X_{V_H\backslash W_v} )} \notag \\
& \leq& \frac{1}{J_v + 1} 
\end{eqnarray}
\begin{eqnarray}
&\Rightarrow & \mathbb{P}(X_v=x|\mathbf{O},\mathcal{E}, J_v)  \leq \frac{1}{J_v + 1} \notag \\
&\Rightarrow & \max_{x\in\mathcal{X}}\mathbb{P}(X_v=x|\mathbf{O},\mathcal{E}, J_v)  \leq \frac{1}{J_v + 1} \notag \\
&\Rightarrow & \mathbb{E}[\max_{x\in\mathcal{X}}\mathbb{P}(X_v=x|\mathbf{O},\mathcal{E}, J_v)|\mathcal{E},J_v] \leq \frac{1}{J_v + 1} .
\end{eqnarray}
The claim follows.
\end{proof}

\subsubsection{Proof of Lemma \ref{lemma: fwd msgs bound}}
\msgsbound*
\begin{proof}
For server $v$, consider event $\mathcal{E}_v$ in which the node incident on $v$'s left outgoing edge is an adversary. Also, let $\mathcal{L}_v$ denote the event that $X_v$ is forwarded along $v$'s left outgoing edge. Then clearly, 
\begin{eqnarray}
\mathbf{D}_\mathtt{FS}(v) \geq \mathbb{P}(\mathcal{E}_v, \mathcal{L}_v)\mathbb{E}[D_\mathtt{FS}(v)|\mathcal{E}_v, \mathcal{L}_v] = \frac{p}{2}\mathbb{E}[D_\mathtt{FS}(v)|\mathcal{E}_v, \mathcal{L}_v]. 
\label{eq: d regular lemma}
\end{eqnarray}
Now, from our assumption $F_v$ is independent of the events $\mathcal{E}_v$ and $\mathcal{L}_v$. In this case, the expected precision becomes
%\begin{align}
$
\mathbb{E}[D_\mathtt{FS}(v)|\mathcal{E}_v, \mathcal{L}_v] = \mathbb{E}\left[\frac{1}{F_v + 1}\bigg|\mathcal{E}_v, \mathcal{L}_v\right] = \mathbb{E}\left[\frac{1}{F_v + 1}\right]$, 
%\end{align}
which combined with Equation~\eqref{eq: d regular lemma} gives the lemma. 
\end{proof}

\subsection{Proof of Theorem \ref{thm:intersection}} \label{sec:intersection proof}
\intersectionthm*
%\subsubsection{Proof of Theorem \ref{thm:intersection}} 
\begin{proof}
We start with two lemmas that reduce the problem to a first-spy precision calculation. 
\begin{lemma}[Intersection]
Under each of the spreading mechanisms (all-to-one, one-to-one, and per-incoming-edge), the adversary's maximum expected precision $\mathbf{D}_{\texttt{OPT}}$ is not a function of the number of transactions per node. 
\label{lem:inter}
\end{lemma}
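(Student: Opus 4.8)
The plan is to show that for each of the three forwarding schemes, the stochastic process governing the trajectory of any single transaction is determined entirely by the (fixed, per-epoch) forwarding rules of the nodes, together with the identity of the source — and in particular, conditioning on additional transactions from the same source reveals no new information to the adversary. First I would fix an arbitrary epoch and observe that under all-to-one, one-to-one, and per-incoming-edge forwarding, each node commits at the start of the epoch to a deterministic map from its incoming anonymity-graph edges (plus a distinguished ``own-transaction'' slot) to outgoing anonymity-graph edges, and independently to a pseudorandom decision of whether it is a diffuser or a relayer. Crucially, this map does \emph{not} depend on the transaction payload or on how many transactions traverse the node. Consequently, once the source $v$ is fixed, the path taken by \emph{every} transaction that $v$ originates in that epoch is the \emph{same} deterministic walk through the anonymity graph (for own-transactions, $v$ always uses its single chosen outgoing edge; thereafter each relay applies its fixed incoming-to-outgoing map), terminating at the first diffuser or the first spy encountered.

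The key step is then the following observation about the adversary's posterior. Let $\mathbf{O}$ denote the adversary's observations. For a source $v$ generating $m$ transactions $X_v^{(1)},\dots,X_v^{(m)}$ in the epoch, all of these transactions induce identical first-spy (and identical timing/metadata) observations, because they traverse the identical path. Hence the portion of $\mathbf{O}$ contributed by $v$'s transactions is, as a random object, a deterministic function of $v$ and the realized forwarding configuration $\mathcal{C}$ — it carries exactly the information of \emph{one} transaction from $v$. Therefore, for any candidate matching $\mathtt{M}$, the likelihood $\prob(\mathbf{O} \mid \{\text{source assignments}\})$ factorizes across sources in precisely the same way whether each source emits one transaction or many, and the precision-optimal matching (the max-weight matching of Theorem 3 of \cite{dand}, with weights $\prob(X_v = x \mid \mathbf{O})$) is the same combinatorial object with the same optimal value of expected precision. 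Formally, I would couple the $m$-transactions-per-node instance to the $1$-transaction-per-node instance on the same graph and same forwarding configuration $\mathcal{C}$, note the first-spy random variable $\Psi_v$ is a measurable function of $(\mathcal{C}, v)$ identical in both, and conclude $\mathbf{D}_{\mathtt{OPT}}$ is unchanged.

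I expect the main obstacle to be handling the per-incoming-edge scheme cleanly, since there the ``same deterministic walk'' claim must be stated carefully: the map from incoming edge to outgoing edge is fixed per epoch but sampled \emph{with replacement}, so two distinct incoming edges at a node may route to the same outgoing edge, and one must be sure that the relevant object is still ``the walk as a function of $(\mathcal{C},v)$'' rather than something that depends on transaction ordering or concurrency. A secondary subtlety is the fail-safe timer (item (4) of the protocol): if a relay's expiration timer fires it may diffuse a transaction early, which could in principle differentiate transactions from the same source that arrive at different wall-clock times. I would address this by invoking the standing assumption (Section \ref{sec: network model}) that transaction times are unknown to the adversary and that stem latency is negligible relative to the timeout, so that — from the adversary's viewpoint — all of $v$'s transactions in the epoch are still treated as following the common path; the timer only matters under an active black-hole attack, which is out of scope for this precision calculation. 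With these points pinned down, the lemma reduces to the single-transaction precision bounds proved in the remainder of the appendix.
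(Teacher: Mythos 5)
Your proposal is correct and is essentially a fleshed-out version of the paper's own (omitted) argument: the paper simply asserts that the lemma ``follows directly from the pseudorandomness of the forwarding mechanisms,'' and your writeup --- fixed per-epoch forwarding maps imply all of a source's transactions traverse the same path and hence yield no additional adversarial information, with the timing caveat handled by the standing assumption that transaction times are unknown --- is precisely the elaboration that sentence gestures at.
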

This proof follows directly from the pseudorandomness of the forwarding mechanisms, and is omitted for brevity. 

\begin{lemma}[First-Spy Optimality]
For all spreading mechanisms (all-to-one, one-to-one, and per-incoming-edge), there exists a constant $C$ such that $\mathbf{D}_{\texttt{OPT}} \leq C \cdot \mathbf{D}_{\texttt{FS}} + O(p^2)$, where $\mathbf{D}_{\texttt{FS}}$ denotes the expected precision of the first-spy estimator. \label{lem:fs_pseudorandom}
\end{lemma}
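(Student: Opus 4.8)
The plan is to follow the proof of Theorem~\ref{thm: dregular result} essentially verbatim at the level of the local-neighborhood enumeration, replacing only two mechanism-specific ingredients. First I would invoke Lemma~\ref{lem:inter}: since $\mathbf{D}_{\texttt{OPT}}$ does not depend on the number of transactions per node for any of the three pseudorandom mechanisms, it suffices to prove the bound in the one-transaction-per-node model, where $\mathcal{X}$ is in bijection with $V_H$. I would then fix an honest node $v$, consider its two incoming and two outgoing edges in the $4$-regular anonymity graph, and condition on which of these four neighbors are adversarial. As in Theorem~\ref{thm: dregular result}, the $11$ events in which at least two of the four neighbors are adversarial each occur with probability $O(p^2)$, so by the trivial per-node precision bound of $1$ they contribute at most $6p^2+O(p^3)$ to $\mathbf{D}_{\texttt{OPT}}$ and fall into the error term. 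The event that all four neighbors are honest is also handled exactly as in Theorem~\ref{thm: dregular result}: a $(1-O(p))$-fraction of honest nodes have all-honest neighborhoods and are mutually exchangeable from the adversary's viewpoint, so their posterior is $O(1/n)$, absorbed into lower-order terms.

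The substance lies in the four events with exactly one adversarial neighbor, and here I would prove mechanism-specific analogs of Lemmas~\ref{lemma: d regular fwd bound} and \ref{lemma: fwd msgs bound}. For the analog of Lemma~\ref{lemma: d regular fwd bound}, I would argue that conditioned on the adversary's observation $\mathbf{O}$ and the relevant local structure, $\max_{x}\prob(X_v=x\mid \mathbf{O},\cdot)\le 1/(J+1)$, where $J$ counts the honest transactions the adversary provably cannot distinguish from the one it received via the $v$-incident edge. The definition of $J$ is where the mechanisms differ: under all-to-one it is the number of honest transactions reaching $v$ fresh (before any spy); under one-to-one it is the number of fresh honest transactions reaching $v$ on the \emph{single} incoming edge whose image under $v$'s epoch map is the relevant outgoing edge; under per-incoming-edge it is the number of fresh honest transactions reaching $v$ that are routed by $v$ onto that same outgoing edge. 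In each case the bound is an exchangeability statement: because the anonymity graph and the per-epoch forwarding maps are unknown to the adversary, the group of relabelings of the $J+1$ candidate sources that can be extended to relabelings of the hidden edges and maps acts measure-preservingly on $\mathbf{O}$, so the posterior restricted to those sources is uniform. For the adversarial-predecessor sub-cases I would, as in the proof of Theorem~\ref{thm: dregular result}, bound $v$'s posterior through a sibling honest predecessor $U$ and its fresh-forwarding count $F_U\le J$.

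Next I would relate $\E[1/(J+1)]$ back to first-spy precision (the analog of Lemma~\ref{lemma: fwd msgs bound}). The first-spy estimator assigns $v$ (or the relevant upstream node) precision at least $\prob(\text{the pertinent outgoing edge leads to a spy})\cdot\E[1/(J+1)]=\tfrac{p}{2}\E[1/(J+1)]$ up to lower-order terms, using that on a locally-tree-like random $4$-regular digraph $J$ is independent of the type of that outgoing neighbor; hence $\E[1/(J+1)]\le \tfrac{2}{p}\mathbf{D}_{\texttt{FS}}$. Summing the single-adversarial-neighbor contributions then gives $O(1)\cdot\mathbf{D}_{\texttt{FS}}$, and collecting all the cases yields $\mathbf{D}_{\texttt{OPT}}\le C\,\mathbf{D}_{\texttt{FS}}+O(p^2)$ with a universal constant $C$ (the three mechanisms are then separated in Theorem~\ref{thm:intersection} by the downstream computation of $\mathbf{D}_{\texttt{FS}}$ itself via the branching process for each mechanism).

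I expect the exchangeability step to be the main obstacle. Unlike the i.i.d.\ forwarding of Theorem~\ref{thm: dregular result}, here the routing decision is a fixed secret function of a node's identity and epoch, so the trajectories of distinct transactions are correlated, and one must verify that conditioning on everything the adversary observes leaves a genuine measure-preserving relabeling action on the candidate sources --- in particular, that no side channel leaks ordering information, such as the fact that two transactions from a common source travel \emph{identical} stem paths under one-to-one and all-to-one. This is precisely why Lemma~\ref{lem:inter} is applied at the outset: it lets the argument work with distinct sources and ignore within-source multiplicities, so the only correlations left to control are those induced by the per-epoch maps, which can be coupled cleanly because the graph is unknown. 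Formalizing the branching process that tracks how fresh honest transactions funnel through a node's incoming edges under each map, and reading off the tail of $J$ from it, is the remaining technical work.
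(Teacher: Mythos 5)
Your proposal is correct and takes exactly the route the paper intends: the paper omits this proof entirely, stating only that Lemma~\ref{lem:fs_pseudorandom} ``is proved analogously to Theorem~\ref{thm: dregular result},'' and your elaboration---reducing to one transaction per node via Lemma~\ref{lem:inter}, enumerating the $16$ local neighborhood types, absorbing the multi-adversary and all-honest cases into $O(p^2)$ and $O(1/n)$ terms, and proving mechanism-specific analogs of Lemmas~\ref{lemma: d regular fwd bound} and~\ref{lemma: fwd msgs bound} with the indistinguishability count $J$ redefined per forwarding rule---is a faithful and correct unpacking of that analogy. Your identification of the exchangeability step under correlated (pseudorandom) routing as the genuinely new technical content is also apt; it is precisely the part the paper's one-line remark glosses over.
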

Lemma \ref{lem:fs_pseudorandom} is proved analogously to Theorem \ref{thm: dregular result}.
%, except instead of conditioning on the number of transactions that reach a given node $v$, 
%we condition on the number of transactions that traverse the same \emph{edge} as $v$'s own transactions. 
The full proof is omitted for brevity.
%
%Leveraging independence between
%the event that a downstream node is a spy and the expected precision, the
%optimal and first spy estimators become related by a constant factor. 
%
Together, lemmas \ref{lem:inter} and \ref{lem:fs_pseudorandom} imply that to characterize the precision-optimal estimator, we can focus on the first-spy estimator. 
For brevity, we prove only all-to-one and one-to-one results in the following lemmas.
%Characterizing this precision is itself challenging.
%\noindent \textbf{One-to-one forwarding.} 
\begin{lemma}[One-to-One First-Spy]
The expected precision of the first-spy estimator for one-to-one forwarding 
satisfies $\mathbf{D}_{\texttt{FS-OtO}} = O\Big(p^2\log\Big(\frac{1}{p}\Big)\Big)$.
\label{lem:oto}
\end{lemma}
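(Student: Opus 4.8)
The plan is to mirror the structure of the line-graph precision bound in \cite{dand}, but adapted to the locally-tree-like neighborhood of a random $4$-regular anonymity graph under one-to-one forwarding. First I would observe that, thanks to Lemma~\ref{lem:inter}, it suffices to analyze a single transaction per node; and thanks to Lemma~\ref{lem:fs_pseudorandom}, it suffices to bound $\mathbf{D}_{\texttt{FS-OtO}}$ directly. Fix an honest source $v$; under one-to-one forwarding, $v$ routes its own transaction along a single outbound edge, and every relay deterministically (given the epoch) maps that incoming edge to a fixed outbound edge. Hence the transaction follows a \emph{fixed path} (of geometric length, parameter $q$) through the anonymity graph until it either hits a spy or enters fluff phase. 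The first spy to see $X_v$ is thus the first adversarial node along this fixed path. I would then condition on the event that this path reaches a spy at hop $k$ for the first time — since the graph is unknown, the neighborhoods look like a Galton--Watson tree, and each new node encountered is independently a spy with probability $p$, so the first-spy hop is geometric.

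The core of the argument is a ``ball of confusion'' count: when $X_v$ first reaches spy $a$ at hop $k$ along an edge $e$, the adversary cannot tell $v$ apart from any honest node $u$ whose fixed one-to-one path \emph{also} terminates at $a$ via edge $e$ at the same hop, because the one-to-one mappings are consistent and hidden. I would count the expected number of such indistinguishable sources by walking backwards from $(a,e)$: the set of honest nodes routing into $e$ forms a tree (the pre-image tree of the one-to-one maps), and at each backward hop a node has (in expectation) a bounded number of honest pre-images on the $4$-regular graph, but the pre-image relation is one-to-one per edge, so the in-tree at depth $j$ has a bounded branching. The key quantitative point — exactly as in \cite{dand} — is that summing $1/(\text{size of confusion set})$ over the geometric hop distribution, with the confusion set growing geometrically in $k$ until spies cut it off, yields the harmonic-type sum $\sum_{k} p(1-p)^{k} \cdot \Theta(1/k) \cdot \Theta(p) = \Theta\!\big(p^2 \log(1/p)\big)$. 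The extra factor of $p$ comes from the probability that $v$'s own first hop is honest (otherwise precision at $v$ is trivially small and contributes at the $O(p^2)$ level), and the $\log(1/p)$ comes from truncating the harmonic sum at the typical first-spy depth $\Theta(1/p)$.

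Concretely, the steps in order are: (1) reduce to one transaction per node and to first-spy precision via Lemmas~\ref{lem:inter} and~\ref{lem:fs_pseudorandom}; (2) argue that under one-to-one forwarding each source's transaction traverses a deterministic (given epoch) path, so the first spy is well-defined and its depth $K$ is geometric with parameter $\approx 1-(1-p)$ conditioned on the stem not ending, combined with the geometric stem length $q$; (3) define, for a first spy $(a,e)$ reached at depth $k$, the confusion set $\mathcal{C}_k$ of honest nodes whose one-to-one path also first hits $(a,e)$ at depth $k$, and show $\mathbb{E}[|\mathcal{C}_k|] = \Theta(k)$ using the tree-like in-neighborhood and the fact that exactly one incoming edge maps to each outgoing edge at a relay; (4) bound $\mathbf{D}_{\texttt{FS-OtO}}(v) \le \mathbb{E}[\mathbf{1}(\text{1st hop honest}) / |\mathcal{C}_K|] + O(p^2)$ and evaluate $\mathbb{E}[1/|\mathcal{C}_K|]$ against the geometric law of $K$, getting $\sum_{k\ge 1} p(1-p)^{k-1} \Theta(1/k) = \Theta(p\log(1/p))$, then multiply by the $\Theta(p)$ first-hop-honest factor; (5) add the $O(p^2)$ contributions from the events where one or more of $v$'s immediate neighbors are spies (bounded trivially, as in Theorem~\ref{thm: dregular result}).

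The main obstacle I expect is step (3): rigorously controlling the size of the confusion set $|\mathcal{C}_k|$. On a line graph this is exactly $k$ (the $k$ honest predecessors), but on a $4$-regular graph with one-to-one routing the set of nodes whose \emph{fixed} path funnels into a given edge is itself random — it depends on how the per-node one-to-one matchings compose, and the in-neighborhood, while tree-like, has branching that must be shown not to blow up (the one-to-one constraint is what keeps it $\Theta(k)$ rather than exponential, since each outgoing edge receives exactly one incoming edge's traffic at an honest relay, but incoming-honest-degree can be $0,1$, or $2$). Making the "$\Theta(k)$ in expectation, and enough concentration to bound $\mathbb{E}[1/|\mathcal{C}_k|] = \Theta(1/k)$" precise via a Galton--Watson / branching-process comparison is the delicate part; I would handle it by coupling the backward exploration to a subcritical-or-critical branching process whose offspring distribution reflects the honest in-degree into a chosen edge, and invoking standard tree-size estimates, taking care that the lower bound $\mathbf{D}_{\texttt{FS-OtO}} = \Omega(p^2\log(1/p))$ (needed for the $\Theta$, though the lemma as stated only claims $O$) follows from the matching lower-bound construction in \cite{dand} restricted to a line embedded inside the $4$-regular graph.
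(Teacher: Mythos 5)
Your high-level intuition is in the right neighborhood — the one-to-one constraint collapses the relevant funnel to a line, a geometric spy cutoff truncates it, and a harmonic-type sum produces the $p\log(1/p)$ factor, which then gets multiplied by an extra $p$. This is indeed the skeleton of the paper's proof. But two structural errors in your accounting mean the argument as written does not bound the first-spy precision. First, you attribute the outer factor of $p$ to ``the probability that $v$'s own first hop is honest,'' and your step (4) bound carries an indicator of that event. This is inverted: under the first-spy estimator, $D(v)>0$ \emph{only if} $v$'s own-transaction outbound neighbor $s$ is a spy, since only then is $\mathtt{M}(X_v)=v$; if the first hop is honest, $X_v$ is attributed to some downstream relay and $D(v)=0$ exactly. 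The probability of the first hop being honest is $1-p=\Theta(1)$, not $\Theta(p)$, so your claimed source of the second factor of $p$ does not exist. The paper's decomposition is $\mathbb{E}[D_{\texttt{FS}}(v)] = \mathbb{P}(s\in V_A)\,\mathbb{E}[1/|W_v|\mid s\in V_A]$ with $\mathbb{P}(s\in V_A)=p$ supplying that factor.

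Second, your confusion set is the wrong object. You condition on $v$'s transaction reaching its first spy at forward depth $k$ and count candidate sources ``at the same hop $k$'' — that is the analysis of a posterior/matching estimator, not of the first-spy estimator, whose precision denominator at $v$ is the \emph{total} number of transactions for which $v$ is the last honest relay before a spy, i.e., all transactions funneling into $v$'s spy-pointing edge over \emph{all} backward depths. The correct exploration is purely backward from $v$: the one-to-one map sends exactly one incoming edge to each outgoing edge, so the pre-image of $v$'s edge to $s$ is a line of predecessors, cut at the first adversarial predecessor (geometric length with parameter $\approx p$), with each surviving predecessor contributing its own transaction to the ward only with probability $1/2$ (it may route its own transaction down its other outbound edge — a thinning step you omit). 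Your per-depth set $\mathcal{C}_k$ has expected size $O(1)$, not $\Theta(k)$; it is the total ward $|W_v|$, conditioned on the backward line having length $k$, that is $\Theta(k)$. Once you (i) replace ``first hop honest'' by ``first hop is a spy'' and (ii) replace the forward, per-depth confusion set by the total backward ward, your harmonic sum becomes exactly the paper's computation $\mathbb{E}[1/|W_v|\mid s\in V_A]=\Theta(p\log(1/p))$ and the proof goes through; your worry about controlling tree branching is also moot, since the backward pre-image under one-to-one forwarding is deterministically a line, not a bounded-branching tree.
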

\begin{proof}
Let $v \in V_H$, and denote the vertex to which $v$ forwards its message $X_v$
as $s \in V$. In the case that $s \in V_H$ the precision of the first spy
estimator for $v$ is $0$, 
%i.e. $D_{\texttt{FS}}(v) = 0$ when $s \not\in V_A$,
because $X_v$ is never matched to $v$. % (assuming no cycles in the graph).
When $s \in V_A$, the precision of the first spy estimator for $v$ is
$\frac{1}{|W_v|}$, where $W_v$ denotes the set of nodes from which all fresh messages
that $v$ transmits to $s$ originate. Note that $v \in W_v$. 
%So $D_{\texttt{FS}}(v) = \frac{1}{|W_v|}$ when $s \in V_A$. 
Now
$\mathbb{E}[D_{\texttt{FS}}(v)] =
 \mathbb{P}(s \in V_A)\mathbb{E}[\frac{1}{|W_v|}|s \in V_A]$,
where $\mathbb{P}(s \in V_A) = p$ and
$
\mathbb{E}\Big[\frac{1}{|W_v|} \Big| s \in V_A\Big] =
\sum_{w=1}^{\infty}\mathbb{P}\big(|W_v|=w\big| s \in V_A\big)\frac{1}{w}
$.
%To characterize this expectation, we use the following lemma:
\begin{restatable}{lemma}{primelemma}
%\begin{lemma}
%Let $v \in V_H$, and denote the vertex to which $v$ forwards its message $X_v$
%as $s \in V$. All messages received by $v$ from some edge are transmitted to $s$
%under the one-to-one forwarding scheme. Let $W_v$ be the set of nodes from which
%all fresh messages transmitted by $v$ to $s$ originate. 
Under one-to-one forwarding,
$
\mathbb{P}\big(|W_v|=w\big| s \in V_A\big) =
\frac{2p}{1-p}  \Big(\frac{1-p}{1+p}\Big)^{w}.
$ 
\label{lem:wardsize}
\end{restatable}
\emph{(Proof in Appendix \ref{app:intersection_lem}).}
%Lemma 1 states:
%\begin{gather*}
%\mathbb{P}\big(|W_v|=w\big| s \in V_A\big) =
%\frac{2p}{1-p} (-1)^w \Big(\frac{-1+p}{1+p}\Big)^{w}
%\end{gather*}

Using Lemma \ref{lem:wardsize} to expand the summation gives $\mathbb{E}  \Big[\frac{1}{|W_v|} \Big| s \in V_A\Big]=\sum_{w=1}^{\infty}\frac{1}{w}
             \frac{2p}{1-p}  \Big(\frac{1-p}{1+p}\Big)^{w} =\frac{-2p}{1-p}\log\Big(\frac{2p}{1+p}\Big)$.
%\begin{align*} 
%             \\
%            \frac{2p}{1-p} \sum_{w=1}^{\infty}
%             \frac{ (\frac{1-p}{1+p})^{w}}{w} =
%             
%\end{align*}
Thus,
$
\mathbb{E}\Big[\frac{1}{|W_v|} \Big| s \in V_A\Big] =
\frac{2p}{1-p}\log\Big(\frac{1+p}{2p}\Big)
$
and
$
\mathbb{E}[D_{\texttt{FS-OtO}}(v)]=\frac{2p^2}{1-p}\log\Big(\frac{1+p}{2p}\Big).
$
%The claim follows.
%Recall that the overall expected first spy precision
%$\mathbf{D}_{\texttt{FS}} =
%\frac{1}{(1-p)n} \sum_{i=1}^{(1-p)n} \mathbb{E}[D_{\texttt{FS}}(v_i)]$.
%Therefore,
%\begin{gather*}
%\mathbf{D}_{\texttt{FS-OtO}} = O\Big(p^2\log\Big(\frac{1}{p}\Big)\Big)
%\end{gather*}
%$\square$
\end{proof}

%\noindent \textbf{All-to-one forwarding.} 
\begin{lemma}[All-to-One First-Spy]
The expected precision of the first-spy estimator for all-to-one forwarding 
satisfies $\mathbf{D}_{\texttt{FS-AtO}} = \Theta (p)$.
\end{lemma}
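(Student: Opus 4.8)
The plan is to compute $\mathbf{D}_{\texttt{FS-AtO}}$ by the same per-node decomposition used for one-to-one forwarding, but to show that the combinatorics of the all-to-one rule force the bound up to $\Theta(p)$ rather than $\Theta(p^2\log(1/p))$. First I would fix an honest node $v\in V_H$ and let $s\in V$ denote the unique outbound anonymity-graph neighbor along which $v$ relays \emph{all} of its incoming stem traffic as well as its own transaction (this is the defining feature of all-to-one: both of $v$'s inbound edges and $v$'s own-transaction edge are mapped to the same out-edge). As in Lemma~\ref{lem:oto}, the first-spy precision for $v$ is $0$ when $s\in V_H$ and equals $1/|W_v|$ when $s\in V_A$, where $W_v$ is the set of honest nodes all of whose ``fresh'' stem transactions are first seen by the adversary at the hop where $v$ forwards to $s$; note $v\in W_v$, so $|W_v|\ge 1$. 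Hence $\mathbb{E}[D_{\texttt{FS-AtO}}(v)] = p\cdot \mathbb{E}[\,1/|W_v|\mid s\in V_A\,]$, exactly as before, and everything reduces to the distribution of $|W_v|$.

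The key step is to analyze $|W_v|$ under the all-to-one rule via the locally-tree-like branching-process description of the $4$-regular anonymity graph, paralleling the construction promised in the proof sketch of Theorem~\ref{thm:intersection}. Growing the stem subtree feeding into $v$: each honest node on this tree has two inbound anonymity edges, and under all-to-one \emph{both} of its inbound streams are merged and passed up toward $v$. Consequently, conditioned on a node being an honest relay that forwards toward $v$, it contributes (in expectation) roughly twice as many upstream honest branches as in the one-to-one case, where only one inbound edge is wired to the relevant out-edge. The branching process governing $|W_v|$ therefore has mean offspring on the order of $2(1-p)$ rather than $(1-p)$ — i.e. it is \emph{supercritical} for small $p$ — and so $|W_v|$ conditioned on $s\in V_A$ is, with constant probability, $\Theta(1)$ but with the heavy-tail/branching structure making $\mathbb{E}[1/|W_v|\mid s\in V_A]$ bounded below by a constant (rather than shrinking like $p\log(1/p)$). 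More concretely, I would show $\mathbb{P}(|W_v|=1\mid s\in V_A)\ge c$ for an absolute constant $c>0$ (this event just requires that both of $v$'s inbound anonymity neighbors are adversarial or otherwise do not deliver fresh honest traffic, which happens with probability bounded away from $0$), giving the lower bound $\mathbf{D}_{\texttt{FS-AtO}}=\Omega(p)$; and for the matching upper bound $\mathbf{D}_{\texttt{FS-AtO}}=O(p)$ I would simply use $1/|W_v|\le 1$ together with $\mathbb{P}(s\in V_A)=p$, so that $\mathbf{D}_{\texttt{FS-AtO}}\le p$. Combining the two gives $\Theta(p)$.

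The main obstacle I anticipate is making the branching-process ``merge'' argument rigorous: unlike one-to-one forwarding, the all-to-one rule couples the two inbound subtrees at every honest relay, so the naive tree does not have independent subtrees, and I must either (i) appeal carefully to the asymptotic local independence of sparse random-graph neighborhoods (as cited via \cite{mezard2009information} in the proof of Theorem~\ref{thm: dregular result}) to decouple them, or (ii) set up a multitype branching process whose types track ``which inbound slot'' a branch occupies. The lower bound $\Omega(p)$ is the easy direction — it only needs one favorable local configuration — so the real work is confirming that the merging does not somehow \emph{shrink} $|W_v|$ in a way that would recover a $\log(1/p)$ factor; the constant upper bound $\mathbf{D}_{\texttt{FS-AtO}}\le p$ is immediate. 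I would therefore present the $O(p)$ bound in one line and devote the body of the argument to the branching-process lower bound, after which the $\Theta(p)$ claim follows.
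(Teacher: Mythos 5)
Your decomposition $\mathbb{E}[D_{\texttt{FS-AtO}}(v)] = p\cdot\mathbb{E}[1/|W_v| \mid s\in V_A]$ and your lower bound are essentially the paper's argument: the paper also lower-bounds $\mathbb{E}[1/|W_v|\mid s\in V_A]$ by the single term $\mathbb{P}(|W_v|=1\mid s\in V_A)$, which it computes exactly as $\frac{1}{4}(1+p)^2\ge\frac14$ (both inbound neighbors are pruned, each independently with probability $\frac{1+p}{2}$), yielding $\mathbf{D}_{\texttt{FS-AtO}}\ge \frac{p}{4}+O(p^2)$. For the upper bound you diverge: the paper derives the exact law $\mathbb{P}(|W_v|=w\mid s\in V_A)=\frac{(2w)!}{(w+1)!w!}\bigl(\frac{1-p}{2}\bigr)^{w-1}\bigl(\frac{1+p}{2}\bigr)^{w+1}$ from a Galton--Watson model of the pruned binary upstream tree and then sums with Stirling's inequalities to get $\frac{e^2}{2\pi}p+O(p^2)$, whereas you simply use $1/|W_v|\le 1$ to get $\mathbf{D}_{\texttt{FS-AtO}}\le p$. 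Your route is shorter and, since $e^2/(2\pi)>1$, actually gives a tighter constant; the paper's heavier machinery is what it reuses to state the exact distribution (its Lemma on $|W_v|$), but it is not needed for the $\Theta(p)$ claim itself.

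One substantive correction: your branching-process intuition is wrong in a way worth fixing even though it does not enter your actual bounds. The offspring mean under all-to-one is \emph{not} $2(1-p)$, and the process is not supercritical. Each of a relay's two inbound predecessors contributes only if it is honest \emph{and} happens to route its merged stream along the edge toward $v$ (probability $\tfrac12$), so a child survives with probability $\tfrac{1-p}{2}$ and the mean offspring is $2\cdot\tfrac{1-p}{2}=1-p<1$: the process is subcritical and $|W_v|$ is almost surely finite, which is precisely why the paper's exact summation converges. Relatedly, the ``coupling of the two inbound subtrees'' you worry about is not an obstacle: the merge happens at the out-edge, but the two inbound subtrees are disjoint node sets making independent routing choices in the locally-tree-like limit, so the standard single-type GW pruning argument goes through. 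Since your lower bound only needs the constant-probability event $|W_v|=1$ and your upper bound only needs $1/|W_v|\le 1$, the proof stands once the supercriticality remark is deleted.
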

\begin{proof}
%\noindent\textbf{Lemma 1}
We demonstrate upper and a lower bounds on $\mathbf{D}_{\texttt{FS-AtO}}$.
As before, these bounds are obtained by computing the expected precision of a given node $v \in V_H$, and we denote the node to which $v$ forwards its message $X_v$
as $s \in V$.
%In this case, all messages received by $v$ are transmitted to $s$.
Let $W_v$ be the set of nodes from which all fresh messages transmitted by $v$
to $s$ originate; our goal is to compute $\mathds E[\frac{1}{W_v} | s\in V_A]$.

\noindent \textbf{Lower bound.} 
We use the following lemma:
\begin{restatable}{lemma}{conditional}
Under all-to-one forwarding,
$
\mathbb{P}  \big(|W_v|=w\big| s \in V_A\big) = 
            \frac{(2w)!}{(w+1)!w!} \Big(\frac{1-p}{2}\Big)^{w-1}
             \Big(\frac{1+p}{2}\Big)^{w+1}.
$
\label{lem:conditional}
\end{restatable}
\noindent \emph{(Proof in Appendix \ref{sec:proof_lem8}).}

Note that
$\mathbb{E}\Big[\frac{1}{|W_v|} \Big| s \in V_A\Big] =
\sum_{w=1}^{\infty}\mathbb{P}\big(|W_v|=w\big| s \in V_A\big)\frac{1}{w}$.
We can lower bound this summation by computing only the first term using Lemma \ref{lem:conditional}.
This gives
$\mathbb{P}(|W_v|=1| s\in V_A)=\frac{1}{4}p^2+\frac{1}{2}p+\frac{1}{4}$.
Hence $\mathbb{E}[D_{\texttt{FS}}(v)] =
 \mathbb{P}(s \in V_A)\mathbb{E}[\frac{1}{|W_v|}|s \in V_A] \geq \frac{p}{4} + O(p^2)$,
or $\mathbb{E}[D_{\texttt{FS}}(v)] = \Omega(p)$.

\noindent \textbf{Upper bound.}
Due to the branching properties of all-to-one forwarding, we now model $v$'s upstream neighborhood (i.e. nodes that can send transactions to $v$) as a Galton-Watson (GW) branching process \cite{athreya2004branching}. 
This modeling assumption is appropriate as $n\rightarrow \infty$ due to the locally-tree-like properties of sparse random graphs \cite{mezard2009information}.

Each node in a realization of the GW process is a viable candidate source whose own transactions could reach $v$.
The branching distribution of the process is therefore determined by the fraction of spies $p$, and the forwarding mechanism (in this case, all-to-one). 
$W_v$ is the number of nodes in this tree.
\begin{figure}[t]
    \centering
  \includegraphics[width=.35\textwidth]{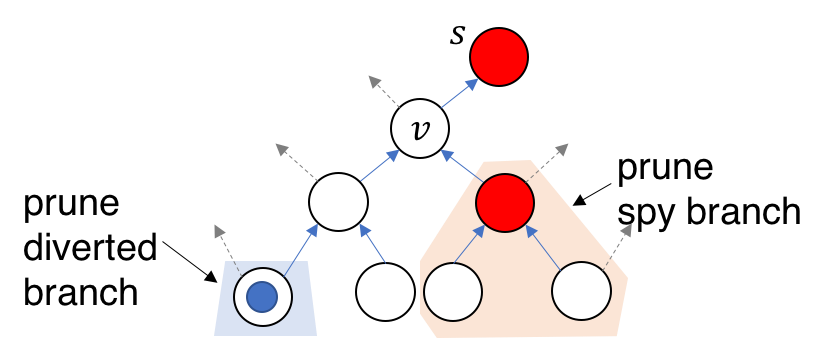}
  \caption{Branching process model of upstream neighborhood. Red nodes are spies; blue dots denote nodes that divert their transactions out of the tree.
  Both events result in pruning.}
  \label{fig:branching}
\end{figure}
Figure \ref{fig:branching} illustrates this model. 
Suppose we start with a binary tree  rooted at $v$---this includes the entire upstream anonymity graph with respect to $v$.
Each node has a second outbound edge (dashed grey line); due to our locally-tree-like assumption, those edges cannot be used to reach $v$, and are not part of the branching process. 
If a node is a spy, the spy and its upstream ancestors get pruned from the tree. 
%This happens with probability $p$, i.i.d. for each node.
Similarly, if a node chooses to forward its incoming messages along the grey edge, then \emph{none} of its upstream children's transactions will reach $v$; hence we prune the node and its ancestors.
%This happens with probability $1/2$.
We let $\nu$ denote the probability that a node is a spy or forwards its transactions on its grey edge, $\nu= \frac{1}{2}(1+p)$.
%Hence the number of children per node $U$ has the following distribution:
%\[ U=\begin{cases} 
%      0 & w.p. ~~\nu^2 \\
%      1 & w.p. ~~2 \nu (1-\nu) \\
%      2 & w.p. ~~(1-\nu)^2.
%   \end{cases}
%\]
The mean of this offspring distribution $\mu=2(1-\nu)=1-p<1$ for  $p\neq 0$, so this GW process is subcritical and goes extinct with probability 1 \cite{athreya2004branching}. 
Hence we can focus our analysis on trees of finite size. 
%As such, we can iterate over 
%Hence by standard results from GW branching processes, the expected number of nodes in this process $E[W_v] = 1/p$. 
%Using Jensen's inequality gives $\mathbb{E}[\frac{1}{|W_v|}|s \in V_A] \geq 1/ \mathbb{E}[|W_v| |s \in V_A]=p$.
%  rely on a characterization of the upstream neighborhood of 
We use Lemma \ref{lem:conditional} to expand $\mathbb{E}\Big[\frac{1}{|W_v|} \Big| s \in V_A\Big]=$
\begin{eqnarray}
&=& \sum_{w=1}^{\infty}\frac{1}{w}
             \frac{(2w)!}{(w+1)!w!} (\frac{1-p}{2})^{w-1}
             (\frac{1+p}{2})^{w+1} \nonumber \\ 
%\end{eqnarray*}
%Using Stirling's inequalities,
%$
%\frac{(2w)!}{(w+1)!w!}  \le \frac{2e^2}{\pi}\frac{1}{w+1} 4^{w-1}.
%$
%
%As a result, $\mathbb{E}  \Big[\frac{1}{|W_v|} \Big| s \in V_A\Big] \leq $
%\begin{align*}
 &=&  \frac{1}{4}(1+p)^2\sum_{w=1}^{\infty}\frac{1}{w}
       \frac{2e^2}{\pi}\frac{1}{w+1} 4^{w-1} \big(\frac{1}{4}(1-p^2)\big)^{w-1}.  \label{eq:stirling}
% & \leq & \frac{e^2}{2\pi}(1+p)^2\sum_{w=1}^{\infty}\frac{(1-p^2)^{w-1}}{w(w+1)}. \nonumber
\end{eqnarray}
where \eqref{eq:stirling} uses  Stirling's inequalities.
For $p \in [0,1]$, $\sum_{w=1}^{\infty}\frac{(1-p^2)^{w-1}}{w(w+1)} \le
 \sum_{w=1}^{\infty}\frac{1}{w(w+1)} = 1$.
Thus,
$
\mathbb{E}\Big[\frac{1}{|W_v|} \Big| s \in V_A\Big] \le \frac{e^2}{2\pi}(1+p)^2
$
and
%$
%\mathbb{E}[D_{\texttt{FS-AtO}}(v)] \le
%\frac{e^2}{2\pi}p + \frac{e^2}{\pi}p^2 + \frac{e^2}{2\pi}p^3.
%$
%Therefore,
$
\mathbf{D}_{\texttt{FS-AtO}} \le \frac{e^2}{2\pi}p + O(p^2).
$
%The claim follows.
\end{proof}

\end{proof}

\subsubsection{Proof of Lemma \ref{lem:wardsize}}
\label{app:intersection_lem}
%\noindent\textbf{Proof of Lemma \ref{lem:wardsize}}
\primelemma*

\begin{proof}
Under  one-to-one forwarding, all messages other than $X_v$
transmitted by $v$ to $s$ are received by $v$ from the same predecessor $v'$.
Likewise, all messages transmitted by $v'$ to $v$ are received by $v'$ from the
same predecessor $v''$ (other than $X_{v'}$ in the case that $X_{v'}$ is
transmitted to $v$).
Continuing this reasoning results in a line graph of predecessor nodes. Note
that the first adversary predecessor node in this line graph prevents any
subsequent predecessors from contributing to $W_v$.
%Under this model, $\mathbb{P}\big(|W_v|=w\big| s \in V_A\big)$
%may be 
We condition on $|T|$,  the number of vertices in the line graph:
%Specifically,
$
\mathbb{P}  \big(|W_v|=w\big| s \in V_A\big) = 
  \sum_{t=w}^{\infty}\mathbb{P}\big(|T|=t\big| s \in V_A\big)
   \mathbb{P}\big(|W_v|=w\big| |T|=t, s \in V_A\big).
%\end{align*}
$
Here, $\mathbb{P}\big(|T|=t\big| s \in V_A\big) = p(1-p)^{t-1}$ (recall that $v$
is a member of $T$ with probability $1$). Additionally,
$\mathbb{P}\big(|W_v|=w\big| |T|=t, s \in V_A\big) =
 {t-1 \choose w-1} (\frac{1}{2})^{w-1} (\frac{1}{2})^{t-w}$.
Therefore, $\mathbb{P}  \big(|W_v|=w\big| s \in V_A\big) = $
%\begin{align*}
$
  \sum_{t=w}^{\infty} p(1-p)^{t-1}\frac{(t-1)!}{(w-1)!(t-w)!}
   \Big(\frac{1}{2}\Big)^{w-1}\Big(\frac{1}{2}\Big)^{t-w} = 
  \frac{p}{(w-1)!}\Big(\frac{1}{2}\Big)^{w-1} \sum_{t=w}^{\infty} (1-p)^{t-1}
   \frac{(t-1)!}{(t-w)!}\Big(\frac{1}{2}\Big)^{t-1}\Big(\frac{1}{2}\Big)^{1-w} = 
  \frac{p}{(w-1)!}\sum_{t=w}^{\infty}\frac{(t-1)!}{(t-w)!}\Big(\frac{1}{2}(1-p)\Big)^{t-1} =
   2p \frac{(1-p)^{w-1}}{(1+p)^w} = 
  \frac{2p}{1-p}\Big(\frac{1-p}{1+p}\Big)^w
%\end{align*}
$
\end{proof}

\subsubsection{Proof of Lemma \ref{lem:conditional}} \label{sec:proof_lem8}
\conditional*
\begin{proof}
In a 4-regular digraph, every $v\in V$ has two predecessors. 
Thus, the upstream graph of $v$
may be modeled as a binary tree $B_v$ rooted at $v$. For any member vertex $m$
of $B_v$, if $m \in V_A$ then neither $m$ nor any members of the sub-tree rooted
at $m$ transmit fresh messages to $s$ via $v$. 
%These cases occur with probability $p$.
Additionally, every vertex transmits all received and generated messages across
one outbound edge (either left or right with equal probability) for the entire
epoch. No messages are transmitted across the other outbound edge for the entire
epoch. For any member vertex $m$ of $B_v$, if $m$ transmits messages across the
outbound edge that is not part of $B_v$ then neither $m$ nor any members of the
sub-tree rooted at $m$ may contribute messages transmitted by $v$ to $s$. These
cases occur with probability $\frac{1}{2}$.

Accounting for these cases yields a new tree $T_v$ rooted at $v$. This tree
$T_v$ consists of the remaining members of $B_v$ after pruning sub-trees rooted
at adversary nodes and sub-trees rooted at nodes that transmit all messages
across the outbound edge that is not part of $B_v$. A node is the root of a
pruned sub-tree with probability $\frac{1}{2}(1+p)$.

All messages generated by members of $T_v$ are transmitted by $v$ to $s$. The
number of nodes in $T_v$ is equal to $|W_v|$. Since a node is the root of a
pruned sub-tree with probability $\frac{1}{2}(1+p)$, then a node is a member of
$T_v$ with probability $\frac{1}{2}(1-p)$. Note that $v$ is a member of $T_v$
with probability $1$. When $T_v$ consists of $|W_v| = w$ nodes, the leaves of
$T_v$ each have two pruned children ($w+1$ in total).
Therefore,
%\begin{align*}
$
\mathbb{P}  \big(|W_v|=w\big| s \in V_A\big) = \\
            \frac{(2w)!}{(w+1)!w!} \Big(\frac{1-p}{2}\Big)^{w-1}
             \Big(\frac{1+p}{2}\Big)^{w+1}.
%\end{align*}
$
%where $\frac{(2w)!}{(w+1)!w!}$ is the Catalan number.
\end{proof}

\subsection{Proof of Proposition \ref{cor:intersection}} \label{app:intersection known}
Given a graph $G$ and observations after one epoch $\mathbf{S}$, an adversary
can construct sets $S_v$ at every adversary node. Each set $S_v$ consists of the
fresh messages forwarded by $v$. 
As a worst-case assumption, suppose the adversary learns the one-to-one
forwarding mappings, but not the edges over which honest nodes send their own
messages.
As a result, an adversary must first match each honest node $u \in V_H$ with one
of two possible sets $S_v$ and $S_{v'}$. Then, the adversary must match these
honest nodes with messages in these sets. Let $\mathcal{A}_S$ denote the event
in which $u$ is matched to the correct set $S_v$, and let $\mathcal{A}_S^C$
denote the event in which $u$ is matched to the incorrect set $S_{v'}$.

The expected precision for  $u \in V_H$ is
$
\mathbb{E}  [D_{\texttt{MAT}}(u)|G,\mathbf{S}] = 
            \mathbb{P}(\mathcal{A}_S)
             \mathbb{E}[D_{\texttt{MAT}}(u)|G,\mathbf{S},\mathcal{A}_S] +
             \mathbb{P}(\mathcal{A}_S^C)
             \mathbb{E}[D_{\texttt{MAT}}(u)|G,\mathbf{S},\mathcal{A}_S^C] = 
            \mathbb{P}(\mathcal{A}_S)
             \mathbb{E}[D_{\texttt{MAT}}(u)|G,\mathbf{S},\mathcal{A}_S] = 
            \mathbb{P}(\mathcal{A}_S)
             \mathbb{E}\Big[\frac{1}{|S_v|}\Big|G,\mathbf{S},\mathcal{A}_S\Big]=
             \mathbb{P}(\mathcal{A}_S) \frac{1}{|S_v|}.
$
The overall expected precision may be written as
$\frac{1}{(1-p)n}\sum_{u\in V_H} \mathbb{P}(\mathcal{A}_S) \frac{1}{|S_v|}$.
Each term $\frac{1}{|S_v|}$ occurs in the summation $|S_v|$ times, which means
that the expected precision over all graphs $G$ may be written as
$\frac{2pn}{(1-p)n} \mathbb{P}(\mathcal{A}_S) \mathbb{P}(|S_v|>0)$.

Note that $\mathbb{P}(|S_v|=0)$ is given by
$
\mathbb{P}  (|S_v|=0) =
             \sum_{t=0}^{\infty}\mathbb{P}(|T|=t)P\big(|S_v|=0\big||T|=t\big)=
            \sum_{t=0}^{\infty} (1-p)^t p^2 \Big(\frac{1}{2}\Big)^t =
             p^2 \frac{2}{1+p}.
$
Thus,
$
\mathbb{P}(|S_v|>0) = 1 - \frac{2p^2}{1+p} = \frac{1+p-2p^2}{1+p}.
$
Since $\frac{1}{2} \le \mathbb{P}(\mathcal{A}_S) \le 1$, then 
\begin{align*}
\frac{2pn}{(1-p)n} \frac{1}{2} \frac{1+p-2p^2}{1+p} \le
\mathbf{D}_{\texttt{MAT}} \le
\frac{2pn}{(1-p)n} 1 \frac{1+p-2p^2}{1+p}
\end{align*}
Simplifying gives
$
\frac{p+p^2-2p^3}{(1-p^2)} \le \mathbf{D}_{\texttt{MAT}} \le
\frac{2p+2p^2-4p^3}{(1-p^2)}.
$
%Therefore $\mathbf{D}_{\texttt{OPT-OtO}} = O(p)$.

\subsection{Proof of Proposition \ref{thm:outbound}}
\label{app:outbound}

The proof follows by identifying that the first spy node to receive a message along \algo's stem is a sufficient statistic for detection. 
Since the stem-phase occurs via peers' outbound edges, $\mathcal{P}$ and $\mathcal{Q}$ have similar stem-phase propagation and differ only in the diffusion phase. 
As such precision and recall are not affected by how the message spreads in the diffusion phase. 

For simplicity, let us assume a small $q$, i.e., messages are always received by a spy node in the stem-phase before diffusion begins. 
For any message $x\in\mathcal{X}$ let $\mathbf{O}^1_x$ be the random variable comprising a three-tuple $(U_h,U_a,T)$ where, in the stem-phase propagation of $x$ (i) $U_a\in V_A$ is the first spy to receive $x$, (ii) $U_h\in V_H$ is the honest peer that forwarded $x$ to $U_a$ and (iii) $T$ is the time when $x$ was received by $U_a$.  
Next, let $\mathbf{O}^2_x$ denote the random variable that comprises of observations made by the adversary after it has been forwarded by $U_a$ in the stem-phase. 
This includes all tuples $(u,v,t)$ such that honest peer $u\in V_H$ forwarded $x$ to $v\in V_A$ at time $t > T$. 
Lastly let $Y_x$ denote the honest peer $v\in V_H$ that is the source of transaction $x$. 
To get a worst-case guarantee we also assume that the adversary has complete knowledge of the topology $H$ of the anonymity graph. 

Since the stem-phase propagation is over a line, we observe that once a message $x$ reaches a spy node $U_a$ for the first-time, the subsequent spreading dynamics depends entirely on the action taken by $U_a$ (who it forwards $x$ to, when it forwards etc.) and is conditionally independent of the past. 
This is true for every message $x\in \mathcal{X}$.  
As such we have, 
\begin{align}
\prob(\mathbf{O}_{x_1}^2,\ldots,\mathbf{O}_{x_n}^2|\mathbf{O}_{x_1}^1,\ldots,\mathbf{O}_{x_n}^1,Y_{x_1},\ldots,Y_{x_n},H) \notag \\
= \prob(\mathbf{O}_{x_1}^2,\ldots,\mathbf{O}_{x_n}^2|\mathbf{O}_{x_1}^1,\ldots,\mathbf{O}_{x_n}^1,H) \\
\Rightarrow \prob(Y_{x_1},\ldots,Y_{x_n}|\mathbf{O}_{x_1}^1,\ldots,\mathbf{O}_{x_n}^1, \mathbf{O}_{x_1}^2,\ldots,\mathbf{O}_{x_n}^2,H) \notag \\
= \prob(Y_{x_1},\ldots,Y_{x_n}|\mathbf{O}_{x_1}^1,\ldots,\mathbf{O}_{x_n}^1, H) \\
\Rightarrow \prob(Y_{x_i}|\mathbf{O}_{x_1}^1,\ldots,\mathbf{O}_{x_n}^1, \mathbf{O}_{x_1}^2,\ldots,\mathbf{O}_{x_n}^2,H) \notag \\
= \prob(Y_{x_i}|\mathbf{O}_{x_1}^1,\ldots,\mathbf{O}_{x_n}^1,H) \quad \forall i\in[n]. \label{eq: posterior}
\end{align}
Thus the posterior is conditionally independent of later observations, given stem-phase observations  $\mathbf{O}_{x_1}^1,\ldots,\mathbf{O}_{x_n}^1$.
Now, consider a network $H'$ that is derived from $H$ by removing all outgoing edges from adversarial peers.
Since the observations $\mathbf{O}^1_x$ log transactions that have been received for the first time by a spy, 
it implies the routes taken by the transactions do not include any spy node. Hence the statistics of the stem-phase spreading are identical in $H'$ and $H$. 
Mathematically this implies,
\begin{align}
\prob(Y_{x_i}|\mathbf{O}_{x_1}^1,\ldots,\mathbf{O}_{x_n}^1,H) &= \frac{\prob(Y_{x_i},\mathbf{O}_{x_1}^1,\ldots,\mathbf{O}_{x_n}^1|H)}{\prob(\mathbf{O}_{x_1}^1,\ldots,\mathbf{O}_{x_n}^1|H)} \notag \\
=  \frac{\prob(Y_{x_i},\mathbf{O}_{x_1}^1,\ldots,\mathbf{O}_{x_n}^1|H')}{\prob(\mathbf{O}_{x_1}^1,\ldots,\mathbf{O}_{x_n}^1|H')} &= \prob(Y_{x_i}|\mathbf{O}_{x_1}^1,\ldots,\mathbf{O}_{x_n}^1,H'). \label{eq: posterior 2}
\end{align}
From Theorems 3 and 4 in~\cite{dand} we know that the optimal value of the expected precision and recall is a function of the posterior probabilities $\prob(Y_{x_i}|\mathbf{O}_{x_1}^1,\ldots,\mathbf{O}_{x_n}^1, \mathbf{O}_{x_1}^2,\ldots,\mathbf{O}_{x_n}^2,H)$, which by combining Equations~\eqref{eq: posterior} and~\eqref{eq: posterior 2} in turn equals $\prob(Y_{x_i}|\mathbf{O}_{x_1}^1,\ldots,\mathbf{O}_{x_n}^1, H')$. 

We finish the proof by applying the above results on networks $\mathcal{P}$ and $\mathcal{Q}$. 
Let $H_\mathcal{P}$ and $H_\mathcal{Q}$ denote the topologies of $\mathcal{P}$ and $\mathcal{Q}$ respectively;
let $H'_\mathcal{P}$ and $H'_\mathcal{Q}$ denote the networks obtained by removing outgoing spy edges from $\mathcal{P}$ and $\mathcal{Q}$ respectively. 
By construction we have $H'_\mathcal{P} = H'_\mathcal{Q}$. 
As such the two probability spaces, each comprising of the random variables $Y_{x_1},\ldots,Y_{x_n},\mathbf{O}^1_{x_1},\ldots,\mathbf{O}^1_{x_n}, H'$, pertaining to the networks $\mathcal{P}$ and $\mathcal{Q}$ are identical. 
Hence we conclude the optimal values of precision and recall in the two networks must also be the same.

\subsection{Proof of Proposition~\ref{prop: clock rate}} \label{app: clock rate}

\begin{proof}
%The proof is straightforward and follows by evaluating the probability of the desired event. 
Let $v_1$ be the source of a message that propagates along a path $v_1,v_2,\ldots,v_k$ of length $k$. 
Let $\delta_\mathrm{hop}$ be the delay incurred between each hop, and let $T_\mathrm{out}(v_i)$ be the random timeout at node $v_i$ for $i=1,\ldots,k$. 
Note that the message takes $\delta_\mathrm{hop} k$ time to traverse $k$ hops and reach $v_k$. 
We desire that none of the $v_i$, $i=1,\ldots,k$, initiate diffusion during this time with high probability. 
Since the random variables $T_\mathrm{out}(v_i)$ are exponential, this probability can be bounded as
\begin{align*}
\left( e^{-(k-1)\delta_\mathrm{hop}/T_\mathrm{base}} \right)  \left(e^{-(k-2) \delta_\mathrm{hop}/T_\mathrm{base}} \right)  \ldots \left( e^{-(k-k)\delta_\mathrm{hop}/T_\mathrm{base}} \right) \notag \\
= e^{-k(k-1)\delta_\mathrm{hop}/(2T_\mathrm{base})} \geq 1 - \epsilon 
\Rightarrow T_\mathrm{base} \geq \frac{-k(k-1)\delta_\mathrm{hop}}{2\log(1-\epsilon)}, 
\end{align*}
%and the proposition follows. 	
\end{proof}

\subsection{Proof of Theorem \ref{thm:recall_partial}}
\label{app:recall_partial}
Under \algo~spreading, for any message $x$, the recall-optimal estimator chooses the node $v$ that maximizes $\prob(X_v=x | \mathbf O)$ (Theorem 4 \cite{dand}).
Since we assume a uniform prior on sources, this is equivalent to a maximum-likelihood estimator that returns $\hat v = $ argmax$_v ~\prob(\mathbf O | X_v=x )$.
%The first-spy estimator satisfies this criterion if there is a spy node in the stem (Theorem 4 in \cite{dand}). 
From \cite{dand}, we know that for a given adversarial mapping strategy, the expected recall is equivalent to the probability of detection, $\prob(\detec)$.
%Therefore, if the message reaches a spy node before entering the spreading phase, we only need to analyze the first-spy estimator.

Consider a \Algopp~ source node $v$ %(drawn uniformly from $V_D$), 
that transmits a  transaction $x$. 
In order to characterize the expected recall over all honest nodes in $V_D$, by symmetry, it is sufficient to compute the probability of detecting $v$ as the source of $x$, 
where the probability is taken over the spreading realization, randomness in the graph, and any randomness in the adversary's estimator.
This proof bounds the probability of detection for $v$ under version-checking and no-version-checking.
Let $D$ denote the event where $v$ has at least one outbound neighbor that supports \Algopp~(i.e., $|\mathcal D_v| > 0$), and let $\overline D$ denote the complement of that event.

\vspace{0.05in}
\noindent \textbf{Version-checking:} 
For the lower bound, we have
%\begin{eqnarray*}
$\prob(\detec) = \prob(\detec |D) \prob(D) + \prob(\detec | \overline D) \prob(\overline D)
		\geq  \prob(\detec |D) \prob(D)$.
%\end{eqnarray*}
We can separately bound each of these terms:
\begin{eqnarray*}
\prob(D) = 1 - \frac{n-1-|V_D|}{n-1} \frac{n-2-|V_D|}{n-2} \ldots \frac{n-\eta-|V_D|}{n-\eta} \\
	\geq 1 - \left( \frac{n-1-|V_D|}{n-1}\right )^\eta
	\geq 1 - \left( 1 - \frac{|V_D|}{n}\right )^\eta =1 - \left( 1 - f\right )^\eta,
\end{eqnarray*}
where $f = p + (1-p)\beta $ is the total fraction of nodes  running \Algopp, 
and
$
\prob(\detec | D) \geq \frac{p}{f},
$
since the first-spy estimator detects the true source if the first node in the stem is a spy node.
Thus $\prob(\detec)\geq \frac{p}{f}(1-(1-f)^\eta)$.

\vspace{0.1in}
\noindent To compute the upper bound, we have
\begin{eqnarray*}
\prob(D) &\leq& 1 - \left (\frac{n-\eta-|V_D|}{n-\eta} \right)^\eta = 1 - \left (1 - \frac{fn}{n-\eta} \right)^\eta \\
\prob(\overline D) &\leq& \left( \frac{n-1-|V_D|}{n-1}\right )^\eta = (1-f)^\eta.
\end{eqnarray*}
Trivially, $\prob(\detec|\overline D) \leq 1$.To bound $\prob(\detec|D)$, we condition on the event $S$, where $v$'s first node in its \Algopp~stem (call it $w$) is a spy node. 
The first-spy estimator is recall-optimal if there is a spy node in the stem (Theorem 4 in \cite{dand}). 
We have $\prob(\detec|D) = \prob(\detec|D,S)\prob(S|D) + \prob(\detec|D,\overline S)\prob(\overline S|D)$, and
\begin{eqnarray*}
%\end{eqnarray*}
%We can bound these terms by assuming that the adversary knows the anonymity graph.
%Moreover, an oracle gives the adversary the following information: 
%for each hop traversed by the transaction message between the source and the last stem node, the adversary learns which edges were used to forward the message. 
%Along with these edges, the adversary is also given a set of ``fake" edges traversed prior to the source. 
%That is, starting with the true source, the oracle picks an edge uniformly at random from the inbound edges to the source.
%The chosen node becomes the head of the ``extended stem." 
%This process continues from the head node, until all the honest nodes in the network have been included in the extended stem.
%Thus, the adversary learns a sequence of nodes; it does not know where the fake edges stop and the true edges start.
%However, it does know where the stem ended, and it knows that by construction, this node cannot be the true source.
%In this scenario, the first-spy estimator is recall-optimal (where we treat the last node in the stem as a spy). 
%Given that, the adversary's recall-maximizing strategy is to estimate the source as the second-to-last node in the stem. 
%This strategy gives the following:
%\begin{eqnarray*}
\prob(\detec|D,S) \leq 1, ~ \prob(S|D) = \frac{pn}{fn-1}, ~ \prob(\overline S|D) \leq 1-\frac{p}{f}.
%\implies \prob(\detec|D) &\leq& \frac{pn}{fn-1}.
\end{eqnarray*}
To bound $\prob(\detec|D,\overline S)$, we condition on $F$, the event where $w$ chooses to extend the stem.
%\begin{eqnarray}
$
\prob(\detec|D,\overline S) = \prob(\detec|F, D,\overline S)\prob(F|D,\overline S) 
 + \prob(\detec|\overline F, D,\overline S)\prob(\overline F|D,\overline S) $.
%\end{eqnarray}
For this upper bound, we also assume that an oracle gives the adversary the source of the diffusion process in the spreading phase.
That is, let $\ell_1(x), \ldots, \ell_{M_x}(x)$ denote the stem nodes associated with transaction $x$; 
in this case, $\ell_1(x)=v$, and $M_x$ denotes the length of the stem.
We assume an oracle gives the adversary $\ell_{M_x}(x)$.
If $w$ chooses not to extend the stem (event $\overline F$), then the $\ell_{M_x}(x) = w$. 
We also assume that the adversary knows $V_D$, the set of nodes running \Algopp.
Hence, we have 
%the first-spy estimator maximizes the adversary's recall, where we treat the final node in the stem as a spy.
%	\prob(\detec|F, D,\overline S) &=& 0 \label{eq:dand_forward} \\
	$\prob(F|D,\overline S) = 1-q $ and
	$\prob(\overline F|D,\overline S) = q$.
%\end{eqnarray} 
%where \eqref{eq:dand_forward} holds because the most likely source is always the last node in the stem.
%Since each \Algopp~stem node extends the stem with probability $\beta$, under events $D$ and $\overline S$, $v$ gets 
We now wish to bound $\prob(\detec|F, D,\overline S)$ and $\prob(\detec|\overline F, D,\overline S)$---the probabilities of detection given that $v$ passed the message to honest \Algopp~neighbor $w$, conditioned on $w$'s decision to either extend or terminate the stem, respectively. 
%First, note that  $\prob(\detec|\overline F, D,\overline S)=\prob(\detec| \overline D,\overline S)$; we will use this fact later.
%Although we assume the adversary knows $\ell_{M_x}(x)$, 
%the adversary does \emph{not} know the previous node in the stem (since $w$ is honest, it does not report this information to the adversary).
%Moreover, 
%it does not know if $\ell_{M_x}(x)$ was the source or a relay.

%It is straightforward to show that when there are no spies in the stem (i.e., $V_A \cap \{\ell_1(x),\ldots,\ell_{M_x}(x)\} = \emptyset$), if $\ell_{M_x}(x) \notin V_D$, the adversary's recall-maximizing strategy is to guess $\ell_{M_x}(x)$. 
%So in this case, $\ell_1(x)=v$, $M_x=2$, and $\ell_{2}(x)=w$.
Recall that if there is a spy in the stem (e.g., $\ell_i(x) \in V_A$ for some $i \in [M_x]$), then the first-spy estimator is recall-optimal.
If there are \emph{no} spies in the stem (i.e., when $V_A \cap \{\ell_1(x),\ldots,\ell_{M_x}(x)\} = \emptyset$),
 $\ell_1(x) \rightarrow \ell_{M_x}(x) \rightarrow \mathbf O$ form a Markov chain.
Since none of the stem nodes are are spies, the spy observations are conditionally independent of the source node given $\ell_{M_x}(x)$.
Since the adversary learns $\ell_{M_x}(x)=\ell$ exactly from the oracle, the recall-optimizing strategy becomes 
$
\hat v = \argmax_{u} \prob(\ell_{M_x}(x)=\ell |X_u=x).
$
%If $\ell \in V_D$, then $\prob(\ell_{M_x}(x)=\ell |X_\ell=x)=0$, and the most likely source becomes the second-to-last node in the stem.
%Otherwise, $\prob(\ell_{M_x}(x)=\ell |X_\ell=x)=1$, so the most likely source is $\ell$ itself.
%Now we consider the two desired probabilities:

%\vspace{0.1in}
\noindent \textit{Part 1:} 
To bound $\prob(\detec|\overline F, D,\overline S)$, note that $\ell_{M_x}(x)=w$. 
Suppose that in addition to revealing $w$, the oracle also tells the adversary that $w$ is not the true source. 
%Since $w \in V_D$, the adversary knows that if $w$ is the true source, it must have passed the transaction in a loop over $G$ (because \Algopp~nodes always forward their own messages at least one hop). 
Consider the set $R=\{u \in V_D ~|~ w \in \mathcal D_u\}$, which contains all nodes that could have feasibly relayed a \Algopp~transaction to $w$.
%For each $u\in R$, we have
%$$
%\prob(\ell_{M_x}(x)=w |X_u=x) = \frac{1}{|\mathcal D_u|}(q + (1-q)\delta),
%$$
%where 
%$$
%\delta=\sum_{i=3}^\infty \prob(\ell_{i}(x)=w, M_x=i |\ell_2(x)=w)
%$$ 
%is the probability that the stem, having reached $w$ without terminating, loops back to $w$ and terminates at some later hop.
%Conditioned on the stem passing through $w$, the rest of the stem is independent of the source, so $\delta$ is source-independent. 
%Similarly, we have $\prob(\ell_{M_x}(x)=w |X_w=x) = \delta$.
%Since bidirectional edges are not allowed in $G$, it must hold that a stem starting from $w$ must traverse at least two hops before returning to $w$; 
%hence $\delta<(1-q)^2$.
%Let $\hat v = \argmin_u |\mathcal D_u|$.
%Thus when 
%\begin{eqnarray*}
%\delta &<& \frac{1}{|\mathcal D_{\hat v}|}(q + (1-q)\delta), \\
%\implies \delta \left ( 1 - \frac{1-q}{|\mathcal D_{\hat v}|}\right) &<& \frac{q}{|\mathcal D_{\hat v}|} \\
%\implies \delta &<& \frac{q}{|\mathcal D_{\hat v}| - 1 + q}
%\end{eqnarray*}
For nodes $u\in R$, the likelihood of each node being the source is
%\begin{eqnarray*}
$
  \prob(\ell_{M_x}(x)=w |X_u=x) = \frac{1}{|\mathcal D_u|}(q + (1-q)\delta) 
$
%	&=& \argmin_{u \in R} |\mathcal D_u| ,
%\end{eqnarray*}
where $\delta$ is the probability that the stem, having reached $w$ without terminating, loops back to $w$ and terminates at some later hop.
Conditioned on the stem passing through $w$, the rest of the stem is independent of the source, so $\delta$ does not depend on $u$.
Let $\hat v$ denote the most likely source among the nodes in $R$.
$
\hat v = \argmax_{u \in R} \prob(\ell_{M_x}(x)=w |X_u=x) 
      = \argmin_{u \in R} |\mathcal D_u|.
$
It is straightforward to show that for any alternative source $z\neq \hat v$, $z\in V_H$ has a lower likelihood.
This follows trivially if $z\in R$. 
If $z \notin R$, the stem from $z$ to $w$ would require at least two hops, which reduces the likelihood of candidate source $v^*$ by a factor of at least $(1-q)$.
Hence $\hat v$ is the ML source estimate, and we want to know $\prob(\hat v=v)$.
Since each node $u$ in $R$ is equally likely to have the smallest $\mathcal D_u$ set,
we have 
%$\prob(\hat v=v) = 1/|R|$, where the probability is taken over the random construction of $H$ and $G$.
$\prob(\detec | \overline F, D,\overline S) \leq \E[\frac{1}{|R|}]$.
We know that $|R| \geq 1$, because $v$ is connected to $w$ by construction.
%Hence the probability of detection is $\E[1/|\mathcal Z|]$, where the expectation is taken over the graph construction algorithm.
Since each node chooses its $\eta$ connections independently, this can be computed as $\E[1/(Z+1)]$ where $Z\sim \text{Binom}(\tilde n-1, \phi)$, and $\phi$ is the probability of any given node choosing $w$ as one of its outbound edges. We can compute $\phi$ as 
$
\phi = 1 - \frac{n-2}{n-1} \frac{n-3}{n-2} \ldots \frac{n-\eta-1}{n-\eta} \\
	\geq  1 - \left (1 - \frac{1}{n-\eta} \right )^\eta.
$
Henceforth, we will abuse notation and take $\phi = 1 - \left (1 - \frac{1}{n-\eta} \right )^\eta$.
By using this lower bound on $\phi$, we are reducing the probability of any given node choosing $w$ as an outbound edge, and thereby increasing the overall probability of detection. 
Given that, it is straightforward to show that 
\begin{equation}
\prob(\detec|\overline F, D,\overline S) \leq \E\left [\frac{1}{Z+1} \right ] = \frac{1 - (1-\phi)^{\tilde n}}{\tilde n \phi} \triangleq \zeta.
\label{eq:zeta0}
\end{equation}

\vspace{0.1in}
\noindent \textit{Part 2:} We want to show that $\lim_{n\rightarrow \infty}\prob(\detec|F, D,\overline S)=0$; 
if this is the case, then the asymptotic inequality in Theorem \ref{thm:recall_partial} holds for any $C'>1$.
There are three ways the adversary can identify the correct source $v$ under conditions $F$, $D$, and $\overline S$: 
1) the stem eventually loops back to $v$, and then transmits to a spy node (we call this event $A$), 
2) the stem loops back to $v$, which terminates the stem (event $B$), or
3) the stem terminates before reaching a spy node, and the set of \Algopp~nodes with outbound edges to the stem's terminus $\ell$ includes $v$ (event $C$).
Note that we are still assuming the adversary learns the last stem node $\ell$.
Also, $B$ and $C$ are not sufficient conditions for detection, but they do guarantee a nonzero probability of detection under a recall-optimal estimator.
Therefore, since these events are disjoint, $\prob(\detec|F, D,\overline S) \leq \prob(A) + \prob(B) + \prob(C)$;  we can individually bound each of these events.

To bound $\prob(A)$, we first compute the probability of the stem looping back to $v$ without reaching any spy nodes or terminating (we call this event $A'$); 
since this event is necessary but not sufficient for detection under event $A$, we have $\prob(A) \leq \prob(A')$.
$\prob(A')$ can be upper bounded by relaxing the restriction of not hitting any spy nodes before reaching $v$. 
So we just want the probability of the stem looping around and reaching $v$ before terminating. 
We let  $\tilde M_x$ denote the stem length (ignoring the first two hops of $v$ and $w$);  it is a geometric random variable with parameter $q$.
We let $T_v$ denote the random number of hops before hitting node $v$, given starting point $w$ (the number of hops excludes $w$).
Thus  $\prob(A') \leq \prob(T_v \leq \tilde M_x)$.
As an upper bound, we assume all nodes run \Algopp. 
Hence, each node has an equal probability $\frac{1}{n-1}$ of forwarding the message to $v$, so $T_v \sim \text{Geom}(\frac{1}{n-1})$.
%\begin{eqnarray*}
Then $
\prob(T_v \leq \tilde M_x) = \sum_{i=1}^\infty \prob(\tilde M_x = i) \prob(T_v \leq i ) 
= \sum_{i=1}^\infty q(1-q)^{i-1} (1-(1-\frac{1}{n})^i) 
%&= & 1 - \frac{q}{1-q}\sum_{i=1}^\infty (1-q)^i (\frac{n-1}{n})^i \\
%&= & 1 - \frac{q}{1-q} (1-q) (\frac{n-1}{n}) \frac{1}{1 - (1-q)(1-\frac{1}{n})} \\
=  1 - q (1 - \frac{1}{n}) \frac{1}{1 - (1-q)(1-\frac{1}{n})} .
%\end{eqnarray*}
$
Taking the limit gives
%\begin{eqnarray*}
$\lim_{n\rightarrow \infty} 1 - q (1 - \frac{1}{n}) \frac{1}{1 - (1-q)(1-\frac{1}{n})}  = 1-\frac{q}{q}=0$,
%\end{eqnarray*}
so $\lim_{n\rightarrow \infty} \prob(A) =0$.
The same argument applies for event $B$, so $\lim_{n\rightarrow \infty} \prob(B) =0$.

For event $C$ we have $\prob(C) = \sum_{u \in V_H \setminus \{v\}} \prob(\ell_{M_x(x) }= u) \prob(u \in \mathcal D_v)=$
\begin{eqnarray}
 &=& \prob(\ell_{M_x(x) }= w)  +  \sum_{u \in V_D \setminus \{v,w\}} \prob(\ell_{M_x(x) }= u)\prob(u \in \mathcal D_v) \label{eq:terminus}\\
 &\leq& \prob(\ell_{M_x(x) }= w)  + \gamma  \label{eq:terminus2}
\end{eqnarray}
where \eqref{eq:terminus} holds because we already know that  $w \in \mathcal D_v$ by definition, and since we are conditioning on event $D$, $v$ will never relay the message to a non-\Algopp~node.
\eqref{eq:terminus2} holds because each \Algopp~node other than $w$ is equally likely to be in $\mathcal D_v$, so for $u \in V_D \setminus \{v,w\}$, $\gamma \triangleq \prob(u \in \mathcal D_v) $ does not depend on $u$.
By the same logic as before, $\lim_{n\rightarrow \infty} \prob(\ell_{M_x(x) = w}) = 0$.
Hence we only need to show that $\lim_{n\rightarrow \infty} \gamma = 0$.
We can write out $\gamma = \frac{{n \choose \eta-2}}{{n \choose \eta-1}} = \frac{\eta-1}{n-\eta+2}$, so $\lim_{n\rightarrow \infty} \gamma = 0$.
Given this, we have that $\lim_{n\rightarrow \infty}\prob(\detec|F, D,\overline S)=0$.

%It will not transmit to a node that does not support \Algopp~because of the version-checking policy.
%Note that the stem length $M_x$ is geometrically-distributed with parameter $q$; this random variable's distribution does not scale with the network size. 
%Hence, as $n\rightarrow \infty$, the probability of the stem looping around back to $v$ within $M_x-1$ steps vanishes, 
%because the likelihood of any given node having an outbound connection to $v$ vanishes. 
%Hence, $\prob(\detec|F, D,\overline S) \rightarrow 0$.
%The claim follows.

%Suppose $w\in V_D$, $F$, and there is a spy in the stem. 
%Then the adversary runs the first-spy estimator.
%Since the first spy occurs after $w$ in the stem, first-spy estimator returns $v$ with a probability at most $(1-q)^2$,
%because the stem cannot backtrack in subsequent time steps.
%Hence the probability of detection is this case is at most $(1-q)^2$.
%APPROX 0

\noindent Combining the two parts gives
\begin{eqnarray}
\prob(\detec|D,\overline S) \lesssim  \zeta q .
\label{eq:zeta}
\end{eqnarray}
Overall, we have
%\begin{eqnarray*}
%\prob(\detec) &\leq& \left (1 - \left(1-\frac{fn}{n-d}\right )^d \right ) \\
%&& \left (\frac{pn}{fn-1}  + \left( 1-\frac{p}{f}\right)(1-\beta)\zeta \right ) + (1-f)^d.
%\end{eqnarray*}
%\begin{eqnarray*}
$
\prob(\detec) \lesssim \left (1 - \left(1-\frac{fn}{n-\eta}\right )^\eta \right )\left (\frac{pn}{fn-1} + (1-\frac{p}{f})q \zeta \right)  + (1-f)^\eta.
%&& \left (\frac{pn}{fn-1}  + \left( 1-\frac{p}{f}\right)(1-\beta)\zeta \right ) + (1-f)^\eta.
$
%\end{eqnarray*}
Taking the limit as $n\rightarrow \infty$ gives 
$
%\left (1 - \left(1-f\right )^d \right ) \left \frac{p}{f}  + \left( 1-\frac{p}{f}\right)(1-\beta)\zeta \right ) + (1-f)^d.
\left (1 - \left(1-f\right )^\eta \right ) \left (\frac{p}{f} + (1-\frac{p}{f})q \zeta \right ) + (1-f)^\eta.
$
The claim follows.

\vspace{0.05in}
\noindent \textbf{No-version-checking:} 
The lower bound comes directly from Theorem 2 in \cite{dand}. 

To prove the upper bound, we first condition on whether $v$'s selected stem relay $w$ is a spy node; as before, $S$ denotes this event.
In this section, we will repurpose our previous notation and use $D$ to denote the event where the \emph{selected} relay supports \Algopp.
Again, we will assume that the adversary knows the underlying graph $H$, $V_D$, and the final stem node $\ell_{M_x}(x)$. 
We have 
\begin{eqnarray}
\prob(\detec) = \prob(\detec|S)\prob(S) + \prob(\detec|\overline S)\prob(\overline S) \label{eq:det} \\
\prob(S) = \frac{pn}{n-1} ~~ \prob(\overline S) = \frac{(1-p)n}{n-1} ~~ \prob(\detec | S) = 1 \nonumber 
\end{eqnarray}
To bound $\prob(\detec | \overline S)$, we condition on $D$: $\prob(\detec|\overline S) = \prob(\detec|\overline S,D)\prob(D | \overline S) + \prob(\detec|\overline S,\overline D)\prob(\overline D | \overline S)$.
\begin{eqnarray}
\prob(D | \overline S) &=& \frac{\beta (1-p)n}{(1-p)n-1} = \frac{\beta \tilde n}{\tilde n-1} \nonumber \\
\prob(\overline D | \overline S) &=& \frac{(1-\beta )(1-p)n}{(1-p)n-1} = \frac{(1-\beta)\tilde n}{\tilde n -1} \nonumber \\
\prob(\detec|\overline S,D) &\lesssim & \zeta q, \label{eq:zeroprob} \\
\prob(\detec|\overline S,\overline D) &\leq & \zeta, \label{eq:zeroprob2} 
\end{eqnarray}
where \eqref{eq:zeroprob} follows from \eqref{eq:zeta}, and \eqref{eq:zeroprob2} follows from \eqref{eq:zeta0} by assuming the adversary is told that $w$ is not the true source.
Combining gives
$$
\prob(\detec|\overline S) \lesssim \zeta \left (\frac{q \beta \tilde n}{\tilde n-1} + \frac{ (1-\beta )\tilde n}{\tilde n-1} \right ) = \frac{\zeta (1-\beta(1- q) )\tilde n}{\tilde n-1}.
$$
Plugging into \eqref{eq:det} gives
$
\prob(\detec) \leq \frac{pn}{n-1}+ \frac{(1-p)n}{n-1}\frac{\zeta (1-\beta(1-q))\tilde n}{\tilde n-1}.
$
Taking the limit as $n\rightarrow \infty$ gives $p+\zeta (1-\beta(1-q))(1-p)$.

\begin{figure*}[t]
    \centering
    \begin{minipage}{.45\textwidth}
    \centering
  \includegraphics[width=\linewidth]{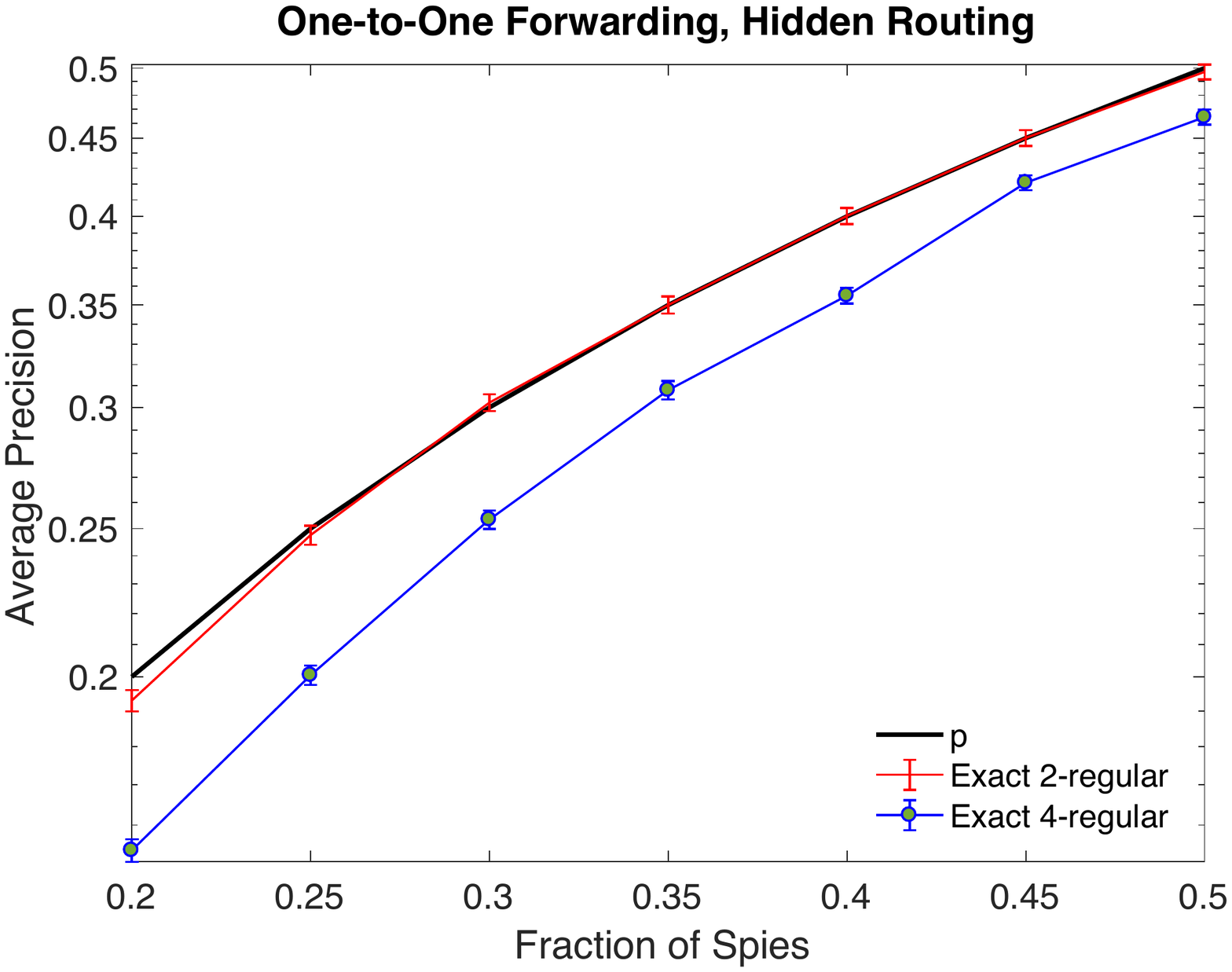}
  \caption{One-to-one transaction forwarding when the adversary knows the graph but does not know the nodes' internal routing decisions.}
  \label{fig:unknown}
    \end{minipage}
    \hspace{0.05in}
    \begin{minipage}{.45\textwidth}
    \centering
  \includegraphics[width=\linewidth]{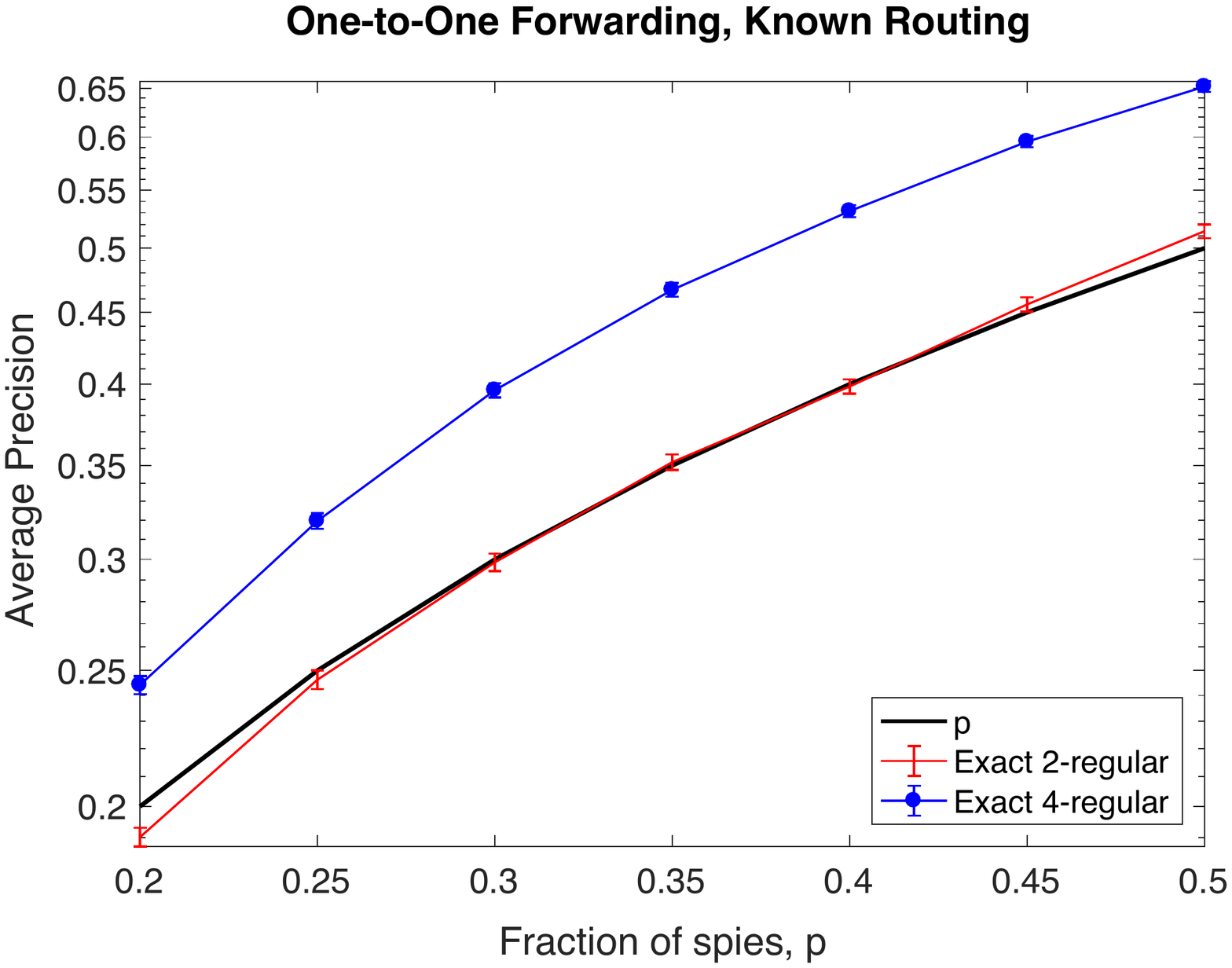}
  \caption{One-to-one transaction forwarding when the adversary knows the graph and knows the nodes' internal routing decisions.} \label{fig:known}
  \end{minipage} 
\end{figure*}

\section{Tradeoffs in graph topology} \label{app:tradeoffs}
As discussed in Section \ref{sec:intersection}, there are tradeoffs between using 4-regular graphs with one-to-one routing and line graphs (recall that on line graphs, there is only one possible routing scheme, which can be viewed as a special case of one-to-one routing). In this section, we give some more explanation and empirical results detailing these tradeoffs.
In our experiments, we have considered three different adversarial settings:
\begin{enumerate}
\item[(a)] An adversary who does not know the underlying graph
\item[(b)] An adversary who knows the graph, but does not know any routing decisions
\item[(c)] An adversary who knows the graph and also knows each node's \emph{relay} routing decisions. That is, it knows how node $v$ will route incoming transactions from edge $e$, but it does not know how $v$ will route $v$'s own transactions.
\end{enumerate}
Adversarial model (c) is motivated by the fact than an adversary can send probe transactions that get relayed through honest nodes, but it cannot force honest nodes to produce their own transactions. Hence learning a node's relay routing choices is feasible, whereas learning a node's own transaction routing choices is significantly more difficult. 
We also considered two different graph settings:
\begin{enumerate}
\item[(a)] Exact regular graphs (idealized setting)
\item[(b)] Approximate regular graphs (proposed construction mechanism detailed in Section \ref{sec: const graph})
\end{enumerate}

Since the adversary with no graph knowledge was already discussed in Section \ref{sec: topology}, we focus on adversarial models (b) and (c). 
We begin by considering an adversary that knows the graph, but not the internal routing decisions. 
Figure \ref{fig:unknown} shows the average precision of such an adversary on exact 4-regular and line graphs, each with 100 nodes. 
The error bars give the standard error for our experiments.
Our simulations show that 4-regular graphs have a lower average precision than line graphs. 
This trend is similar to what we observed for adversaries that do not know the graph, discussed in Section \ref{sec: topology}. 
However, Figure \ref{fig:known} illustrates the average precision for an adversary that knows both the graph and the internal routing decisions of each node. 
Here, the trend is reversed: line graphs have a precision that is upper bounded by $p$, whereas on 4-regular graphs, the precision can be higher than $p$.
This makes sense because on line graphs, each node has only one possible routing decision, so the additional routing knowledge of the adversary does not help.

These simulated observations suggest that if the adversary is strong enough to learn the internal, random routing decisions of each node, a line graph actually gives greater protections. 
However, we expect such learning to be expensive, since the only way to learn a node's routing decisions is to relay transactions through that node. To learn routing decisions for \emph{all} nodes in the graph will require at least a linear number of transactions in the number of nodes. 
Such an attack would quickly grow expensive, especially considering that the graph is changing periodically.

\end{document}